\documentclass[10pt]{article}
\usepackage[utf8]{inputenc}
\usepackage{color,soul}
\usepackage{fullpage}
\usepackage{csquotes}

\usepackage{dsfont,amsthm, amssymb,amsmath,stmaryrd,authblk,mathtools,tikz,graphicx,pgfplots}
\pgfplotsset{compat=1.14}
\usepgfplotslibrary{external}
\tikzexternalize

\usetikzlibrary{positioning}
\usetikzlibrary{decorations.pathmorphing}
\usepackage{caption,subcaption}

\tikzset{dot/.style 2 args={fill, circle, inner sep=1pt, label={#1:\scriptsize #2}}}

\usepackage{graphicx,color,tikz}

\usepackage[
backend=biber,
bibencoding=utf8,
sorting=ydnt,
sortcites=true
]{biblatex}
\addbibresource{main.bib}

\usepackage{relsize}
\usepackage{url}
\usepackage[colorlinks=true, urlcolor=blue,citecolor=blue,anchorcolor=blue]{hyperref}
\usepackage[capitalise]{cleveref}
\usepackage{mymacros}
\usepackage{caption}

\DeclareCaptionFormat{algor}{
  \hrulefill\par\offinterlineskip\vskip1pt
    \textbf{#1#2}#3\offinterlineskip\hrulefill}
\DeclareCaptionStyle{algori}{singlelinecheck=off,format=algor,labelsep=space}
\captionsetup[algorithm]{style=algori}

\usepackage{breqn}

\allowdisplaybreaks

\makeatletter
\let\cref@old@eq@setnumber\eq@setnumber
\def\eq@setnumber{
\cref@old@eq@setnumber
\cref@constructprefix{equation}{\cref@result}
\protected@xdef\cref@currentlabel{
[equation][\arabic{equation}][\cref@result]\p@equation\theequation}}
\makeatother

\title{Quantum adiabatic optimization without heuristics}
\author[1]{Michael Jarret \thanks{mjarret@pitp.ca}}
\affil[1]{\footnotesize\PITP}
\author[2,3,4]{Brad Lackey \thanks{bclackey@umd.edu}}
\affil[2]{\footnotesize Institute for Advanced Computer Studies, University of Maryland, College Park, MD 20742, USA}
\affil[3]{\footnotesize Departments of Computer Science and Mathematics, University of Maryland, College Park, MD 20742, USA}
\affil[4]{\footnotesize Quantum Architectures and Computation Group, Microsoft, Redmond, WA 98052, USA}
\author[5]{Aike Liu \thanks{aikeliu2@illinois.edu}}
\affil[5]{Department of Physics, University of Illinois Urbana-Champaign, Urbana, IL 61801, USA}
\author[6]{Kianna Wan \thanks{kianna@stanford.edu}}
\affil[6]{Stanford Institute for Theoretical Physics, Stanford University, Stanford, CA 94305, USA}
\date{\today}

\begin{document}

\maketitle

\begin{abstract}
Quantum adiabatic optimization (QAO) is performed using a time-dependent Hamiltonian $H(s)$ with spectral gap $\gamma(s)$. Assuming the existence of an oracle $\oracle$ such that $\gamma(s) = \Theta\left(\oracle(s)\right)$, we provide an algorithm that reliably performs QAO in time $\bigO{\gamma_\min^{-1}}$ with $\bigO{\log(\gamma_\min^{-1})}$ oracle queries, where $\gamma_\min \coloneqq \min_s \gamma(s)$. Our strategy is not heuristic and does not require guessing time parameters or annealing paths. Rather, our algorithm naturally produces an annealing path such that $\|dH/ds\| \approx \gamma(s)$ and chooses its own runtime to be as close as possible to optimal while promising convergence to the ground state. 
    
    We then demonstrate the feasibility of this approach in practice by explicitly constructing a gap oracle $\oracle$ for the problem of finding the minimum point $m \coloneqq \argmin_u W(u)$ of the cost function $W:\mathcal{V}\longrightarrow [0,1]$, restricting ourselves to computational basis measurements and driving Hamiltonian $H(0)=I - |\mathcal{V}|^{-1}\sum_{u,v \in \mathcal{V}}\ket{u}\bra{v}$. Requiring only that $W$ have a constant lower bound on its spectral gap and upper bound $\kappa$ on its spectral ratio, our QAO algorithm returns $m$ with probability $(1-\epsilon)(1-e^{-1/\epsilon})$ in time $\widetilde{\mathcal{O}}(\epsilon^{-1}[\sqrt{|\mathcal{V}|} + (\kappa-1)^{2/3}|\mathcal{V}|^{2/3}])$. This achieves a quantum advantage for all $\kappa$, and recovers Grover scaling up to logarithmic factors when $\kappa \approx 1$. We implement the algorithm as a subroutine in an optimization procedure that produces $m$ with exponentially small failure probability and expected runtime $\widetilde{\mathcal{O}}(\epsilon^{-1}[\sqrt{|\mathcal{V}|} + (\kappa-1)^{2/3}|\mathcal{V}|^{2/3}])$ even when $\kappa$ is not known beforehand. 
\end{abstract}
  \tableofcontents

\section{Introduction}\label{sec:introduction}

Despite both academic and industry interest in quantum adiabatic optimization (QAO), theoretical progress has moved slowly since the algorithm's introduction by Farhi et al. in 2000 \cite{farhi2000quantum}. Early on and in a major milestone, Roland and Cerf showed that QAO can achieve a quadratic improvement over classical unstructured search by exploiting complete knowledge of $\gamma(s)$, the ``spectral gap'' of the corresponding Hamiltonian which will be defined below \cite{roland2002quantum}. Nonetheless, taking advantage of $\gamma(s)$ is not generally possible; other than in very special circumstances, we are not privy to such strong information. To make matters worse, obtaining useful analytic bounds on the gap is a challenging analysis problem and generally infeasible. Thus, in lieu of rigor, focus has primarily been placed on numerical experiments, limited comparisons with known classical algorithms, and actual experiments using quantum annealing devices \cite{Lidar2018,Chapuis2017,denchev2016computational}. In this paper, we pursue a more general, rigorous framework for understanding QAO.

QAO is a restricted form of quantum annealing, a heuristic strategy that may exploit the power of quantum mechanics to solve optimization problems \cite{farhi2000quantum}. In general, to perform quantum annealing we begin by preparing a quantum state and then varying the system Hamiltonian with the hope that the resulting quantum state approximates the solution to an optimization problem \cite{kadowaki1998quantum}. QAO makes this strategy precise by starting with a known Hamiltonian $H(0)$ and ``slowly'' varying the system into $H(1)$, a system which has a ground state that solves the optimization problem of interest. By initially preparing the ground state $\phi_0(0)$ of $H(0)$ and performing a sufficiently slow variation over the family $H(s)$, we prepare a state close to the ground state of $H(s)$ for all $s \in [0,1]$ \cite{Jansen2006}. The timescale needed is usually related to the minimal spectral gap of the one-parameter family $H(s)$, or the minimum difference between the two lowest eigenvalues of $H(s)$. That is, if $H(s)$ has spectrum $\lambda_0(s) < \lambda_1(s) \leq \dots \leq \lambda_{N-1}(s)$ and \emph{spectral gap} $\gamma(s) = \lambda_1(s)-\lambda_0(s)$, then the difference in energy $\gamma_\min = \min_{s \in [0,1]} \gamma(s)$ specifies the timescale needed to guarantee convergence to the ground state of $H(s)$ for all $s \in [0,1]$.

Since we typically cannot determine $\gamma(s)$ analytically, the problem of how to choose the family $H(s)$--the adiabatic \emph{schedule}--is largely open. One common strategy is to search for schedules that work for most, instead of all, possible problems \cite{zeng2016schedule}. Unfortunately, the standard approach to optimizing a schedule requires a detailed analysis of the entire family \cite{Albash2018}. Additionally, current suggestions for optimally controlling an adiabatic evolution depend on minimizing an associated Lagrangian, which may require knowledge of the function to be optimized (up to permutation) or an approximation of the spectral gap of the entire family $H(s)$ \cite{avron2010optimal,reza,zulkowski,takahashi2018hamiltonian}. Other methods either abandon the adiabatic procedure altogether \cite{yang2017optimizing,takahashi2018hamiltonian} or alter the Hamiltonian \cite{crosson2014different,zhuang}. These methods all may fail for large classes of problems.

An alternative approach one might consider is an adaptive QAO protocol that is able to adjust the rate of change of the corresponding Hamiltonian in response to experimental measurements. An implicit assumption in any such strategy is that there exists some feedback from the physical system which we can exploit to adjust the rate of change $dH/ds$. Although in \cite{avron2010optimal} the authors show that dephasing determines the optimal rate of change $dH/ds$, this is not obviously an experimentally useful quantity and still often seems to be intimately related to the spectral gap. Thus, when we assume that we receive feedback from our system at a point $s_0$, we effectively claim that our experimentalist (or procedure) is capable of playing the role of an oracle for the spectral gap $\gamma(s_0)$. This gives rise to some natural questions: 
\begin{enumerate}
\item Assuming that there exists an oracle for the spectral gap, what runtime can a QAO procedure achieve and with how many queries? 
\item How well would such an oracle need to approximate the spectral gap? and 
\item Is the existence of such an oracle \emph{really} a crazy proposition?
\end{enumerate}

The first half of this paper addresses questions 1 and 2. In particular, we show that provided a Hamiltonian $H(s)$ (for $s\in[0,1]$) with spectral gap $\gamma(s)$ and an oracle $\Gamma(s)$ such that
\begin{enumerate}
    \item $\Gamma \leq \gamma$,
    \item the measure of the sublevel sets of $\Gamma(s)$ obeys the inequality $\mu \{s \vert \Gamma(s) \leq x \norm{H} \} \leq C x$ for some constant $C$ independent of the problem size and for all $x \in [0,1]$, and
    \item $\min_s \Gamma(s) = \Theta(\gamma_\min)$,
\end{enumerate}
then an adiabatic optimization algorithm can be completed in time $\bigO(\norm{H}/\gamma_\min)$ with $\bigO{\log \left(\norm{H}/\gamma_\min\right)}$ oracle calls. A bit more simply, if the oracle always produces a lower bound on the gap, local minima of $\Gamma(s)$ do not have a ``width'' that grows with the system size, and the oracle minimum is within a constant factor of the actual minimal spectral gap, then we can efficiently and adaptively perform QAO.

The second half of this paper addresses question 3. In particular, given the difficulty of estimating the spectral gap, one should be skeptical that such an oracle can exist. Nonetheless, we prove our intuitions false and provide an explicit algorithm that efficiently reproduces the behavior of the oracle for a particular driving Hamiltonian $H(0)$. That is, we explicitly construct a gap oracle $\Gamma$ for the model in which the driving Hamiltonian $H(0)$ is the Laplacian $L$ of a complete graph and the final Hamiltonian $H(1)$ is an unknown diagonal matrix $W$.\footnote{We impose some further constraints on $W$, but those constraints are as weak as possible while still guaranteeing that $H(s)$ does not have an exponentially small gap exponentially close to $s=1$.} With bounded probability, we can return the ground state of $W$ with runtime $\widetilde{\mathcal{O}}(\sqrt{V} + (\kappa-1)V^{2/3})$, where $\kappa$ is the ratio of the largest to second-smallest eigenvalue of $H(1)$. Modulo logarithmic terms, our adiabatic algorithm scales optimally when presented with a problem similar to Grover search, and we achieve a quantum advantage over classical search in every setting explored. 

Although Grover-type algorithms are not practical on current hardware, we restrict our oracle construction to computational-basis measurements. Thus, even though our complete graph architecture is unlikely to be realized in the near future, our approach is motivated by experiment and consistent with procedures that we presently know how to perform. Furthermore, additional mathematical study of broader classes of driving Hamiltonians is likely to advance our understanding of QAO generally. In \cref{sec:discussion}, we review theoretical problems that our construction may address as well as additional tools that we need to develop.

\subsection{Overview of methods}

Our approach explores a new version of the adiabatic algorithm that discretizes the adiabatic process \cite{jarret2018hamiltonian}. Discretization methods similar to those we explore already exist, but their utility depends upon a serendipitous choice of discretization points. In order to make a good choice, one needs to determine an appropriate adiabatic schedule in advance and may require additional machinery beyond knowledge of the initial Hamiltonian $H(0)$ and final Hamiltonian $H(1)$ \cite{Boixo2009, chiang2014improved}. In contrast, our approach discovers a near-optimal path dynamically during the algorithm \cite{avron2010optimal}.\footnote{Within the context of this paper, optimality is considered in relation to the spectral gap, not dephasing.} The runtime above then matches the best cases achievable by the approach of \cite{Boixo2009} without randomized evolutions, a need to guess at discretization points, or substantial knowledge of the cost function.

To construct the gap oracle for the complete graph, we first derive appropriate inequalities. In particular, in \cite{jarret2018hamiltonian}, one of the authors shows that for stoquastic Hamiltonians $H$, one can bound the spectral gap by the \emph{weighted Cheeger constant} $h$ \cite{Chung2000}. This constant depends only upon information contained in the driving Hamiltonian $H(0)$ and the instantaneous ground state $\phi(s)$ and, at least in principle, is a measurable quantity. (Whether $h$ is measurable in practice probably depends on $H(0)$.) Specifically, if $H(0)=L$ corresponds to a graph of maximum degree $d$,
\begin{equation}\label{eqn:old_cheeger}
    2h \geq \gamma(H) \geq \sqrt{h^2 +d^2}-d.
\end{equation}
If this function satisfies the conditions enumerate in \cref{sec:introduction} and $h$ can be efficiently computed, then we can take $\Gamma = \sqrt{h^2 +d^2}-d$ as our oracle. Furthermore, since $h$ gives both upper and lower bounds on the gap that are independent of the problem size, any other method for estimating the gap of a stoquastic Hamiltonian is equivalent to determining $h$. Given that $h$ depends on substantially less information than $\gamma$, it is probably a more natural object of study.

Although $h$ can be efficiently estimated for classical graphs \cite{Arora}, we do not understand the difficulty of estimating $h$ in our setting of a known driving Hamiltonian and unknown cost function. In fact, the classical and quantum problems are hard to compare. Unlike the classical case, once we choose a particular $H(0)$, we can assume that we know $h$ of $H(0)$ with no computational overhead. Since it fixes our architecture, our driving Hamiltonian should remain unchanged even as we attempt to solve different problems.

Despite the promise of $h$ and \cref{eqn:old_cheeger} for most architectures, when $H(0)$ corresponds to the complete graph, condition 3  of \cref{sec:introduction} ($\gamma_\min = \Theta(\min_s \Gamma(s))$) is violated. Thus, in \cref{sec:spectral_theory} we introduce a tighter Cheeger inequality specialized to the complete graph and show that, surprisingly, $h$ can be efficiently estimated. One should note that since driving Hamiltonians are fixed by annealing hardware, deriving hardware-motivated inequalities is probably needed to obtain desirable results.

The organization of the paper is as follows: In \cref{sec:baa}, we present the algorithm for general optimization problems. Then, in \cref{sec:general_analysis}, we prove the time and query complexity claimed above, conditioned on the existence of an oracle for the spectral gap. Next, in \cref{sec:baa_cg}, we consider the restricted problem of adiabatic optimization with a complete graph driving Hamiltonian $H(0)$ and provide an explicit construction that efficiently implements the oracle $\Gamma$. (This section is rather technical, and a roadmap of the arguments will be provided therein.) Finally, in \cref{sec:discussion}, we discuss how our results and approach might advance our understanding of other open questions in QAO.

\section{The Bashful Adiabatic Algorithm (BAA)}\label{sec:baa}

In this section, we present a general algorithm for adiabatic state preparation. 
Starting with the easily preparable ground state of an initial Hamiltonian $H_0$, suppose that the time-dependent interpolation Hamiltonian $H(t) = (1-s(t))H_0 + s(t)H_1$ is applied for a time $T$. Here, $H_1$ is a time-independent Hamiltonian whose ground state we wish to prepare, and $s(t)$ is some function of time, with $s(0) = 0$ and $s(T) = 1$. According to the quantum adiabatic theorem, the state at a time $t$ will remain close to the instantaneous ground state of $H(t)$ provided that this Hamiltonian varies sufficiently slowly in time. To achieve a desired accuracy, the maximum allowable evolution rate is determined by the spectral gap $\gamma(t)$ of the Hamiltonian: since a smaller gap indicates a higher probability of transition to a first excited state, the evolution must proceed slower when $\gamma(t)$ is small than when it is large. As demonstrated in \cite{roland2002quantum}, an optimal schedule $s(t)$---one that maximizes the evolution rate while ensuring that the adiabaticity condition is satisfied at each point in the evolution---can be found for Grover's search problem by analyzing an explicit expression for $\gamma(t)$.

In the general setting, however, the behaviour of the spectral gap as a function of time is not known \textit{a priori} (as it is for the Hamiltonian underlying the adiabatic algorithm for Grover search). Consequently, the analytic approach of \cite{roland2002quantum} cannot be used to design a schedule in advance. We propose instead an adaptive method in which the evolution is discretized into a series of ``checkpoints" $\{s_i\}_i$. At each $s_i$, we estimate the gap $\gamma(s_i)$ at that point and use the estimate to select the next checkpoint, $s_{i+1}$. Then, estimates of $\gamma(s_i)$ and $\gamma(s_{i+1})$ can be used in conjunction with Weyl's inequality to determine the appropriate evolution rate for the interval $(s_i, s_{i+1})$. Given a full profile $\{(s_i, \gamma_i)\}_i$, where $\gamma_i \approx \gamma(s_i)$, it is thus straightforward to develop an algorithm for adiabatic state preparation. The preparation procedure is formalized as \cref{alg:adiabatic} below and analyzed in \Cref{sec:adiabatic,sec:perturb}.

\begin{algorithm}[H]
 \caption{\label{alg:adiabatic}Adiabatic State Preparation}
 \begin{algorithmic}[1]
  \Require A time-dependent Hamiltonian $H(s)$ linear in $s \in [0,1]$, an upper bound $\lambda_{\mathrm{max}}$ on $\norm{H(s)}$, an array $\vec\gamma$, a universal constant $c_0 \in (0,1)$, a parameter $\epsilon >0$. 
 \Function{GenerateState}{$\vec\gamma$,$\delta s$,$\epsilon$} \Comment{Adiabatically generate a state according to \cref{thm:adiabatic3}}
    \State $(s_f,\gamma_f) = \Call{Last}{\vec\gamma}$ \Comment{Get last element of $\vec\gamma$}
    \If{$\delta s \neq 0$} \Comment{Extend schedule }
        \State $s_f \gets \min(s_f + \delta s, 1)$
        \State $\gamma_f \gets \gamma_f-4\delta s\lambda_\max$  \Comment{Lower-bound gap at new point using \cref{thm:weyl}}
        \State Append $(s_f, \gamma_f)$ to $\vec\gamma$
    \EndIf
   \State $\widetilde{P} \gets \ket{\phi(0)}\bra{\phi(0)}$ \Comment{Prepare the ground state of $H(0)$}
   \State $(s_0,\gamma_0)\gets \Call{Next}{\vec\gamma}$ \Comment{Get first element of list}
\While{$(s_1,\gamma_1)\gets \Call{Next}{\vec\gamma}$} \label{alg:adiabatic:loop} \Comment{Loop over provided schedule}
     \State $\gamma_\min \gets \dfrac{1}{2}(\gamma_{0} + \gamma_{1}) - 2(s_{1}-s_{0})\lambda_\max$ \label{alg1:gamma_min} \Comment{Lower bound on gap for $s \in [s_0,s_1]$, by \cref{intersect}}
    \State $H(\sigma) \gets \dfrac{1}{\lambda_\max}\left[(1-\sigma)H(s_{0}) + \sigma H(s_{1})\right]$ \Comment{Version of $H$ with norm $\bigO{1}$}
    \State $T \gets \left(c_0 + \dfrac{7}{4}c_0^2 \right)\dfrac{\lambda_\max}{\epsilon\gamma_\min}$ \label{alg1:time} \Comment{Set time using \cref{thm:adiabatic4}}
    \State $\widetilde{P}\gets U\widetilde{P}U^{\dagger}$ using \cref{eqn:schrodinger_unitary} with $H(s)\mapsto H(\sigma)$ and time $T$
    \State $(s_0,\gamma_0)\gets (s_1,\gamma_1)$
 \EndWhile 
 \State \Return $\widetilde{P}$
 \EndFunction
 \end{algorithmic}
\end{algorithm} 

We refer to the algorithm for adaptively determining a discretized evolution schedule as the Bashful Adiabatic Algorithm (BAA).\footnote{The name reflects the fact that, in each iteration, the algorithm ventures just one small step further than it did previously. Since estimating the gap may in general involve measuring the state, we subsequently reset the state to the ground state of $H(0)$, and evolve to $s_{i+1}$ on the next run, following the schedule that has been determined thus far. This is repeated until a checkpoint close to $1$ is reached.} We present BAA as \Cref{alg:BAA}. The checkpoints ultimately produced by BAA are conceptually similar to those proposed in the randomization method of \cite{Boixo2009}. Nonetheless, the method of \cite{Boixo2009} requires one to somehow choose checkpoints \emph{a priori} which may or may not be possible. Although the randomization method decreases dependency on the gap, it scales quadratically in the number of checkpoints and thus the efficiency of the adiabatic process depends on how well one chooses checkpoints. Unlike \cite{Boixo2009}, under relatively weak constraints, BAA automatically picks checkpoints such that a near-optimal scaling in the spectral gap is guaranteed \cite{avron2010optimal}. 

\begin{algorithm}[H]
 \caption{\label{alg:BAA}Bashful Adiabatic Algorithm}
 \begin{algorithmic}[1]
    \Require Time-independent Hamiltonians $H_0$ and $H_1$, an upper bound $\lambda_\max$ on $\max\{\norm{H_0}, \norm{H_1}\}$, the spectral gap $\gamma_0$ of $H_0$, a universal constant $c_0 \in (0,1)$, a parameter $\epsilon>0$.
 \Function{BAA}{\textproc{GetGap}}
 \State $s\gets 0$
 \State $\gamma \gets \gamma_0$
 \State $\vec\gamma  \gets [(0,\gamma)]$ \Comment{Initialize the array for $s=0$}
 \While{$s<1$} \label{alg:BAA_while}\Comment{Initialize the adiabatic schedule}
  \State \label{alg:BAA_oracle_ds} $\delta s \gets \min\left(\dfrac{c_0\gamma}{4\lambda_\max},1-s\right)$ \Comment{Step size such that $\abs*{\gamma(s+\delta s)-\gamma(s)} \leq c_0\gamma(s)$, by \cref{thm:delta_s}}
 \State $\gamma \gets \Call{GetGap}{H_0, H_1, s,\delta s,\gamma}$ \Comment{Query the oracle for the next gap}
 \State $s \gets s + \delta s$
 \State \label{alg:BAA_endwhile}Append $(s,\gamma)$ to $\vec{\gamma}$
 \EndWhile
 \State \Return $\Call{GenerateState}{\vec\gamma,0,\epsilon}$ \Comment{Return the adiabatically prepared ground state of $H(1)$}
 \EndFunction
 \end{algorithmic}
\end{algorithm}

It is important to note that in \Cref{alg:BAA}, we treat the routine $\textsc{GetGap}$, which estimates the spectral gap, as a black box, and proceed to analyze the query complexity of \Cref{alg:BAA} in \cref{sec:oracle-calls}. Of course, the assumption of having access to such a gap oracle seems highly unrealistic in most settings, but we substantiate our approach in \cref{sec:the_oracle} by providing an \emph{explicit} construction for $\textsc{GetGap}$ for the case of unstructured optimization. The problems that we explore can be seen as similar to standard Grover search, but without a similarly powerful oracle. We prove that our construction is efficient and reliable in \cref{sec:statistics}. Our combined analysis therefore demonstrates that, at least in some circumstances, QAO can be performed efficiently and reliably, without assuming access to black boxes.

\section{Analysis of BAA}\label{sec:general_analysis}
\subsection{Variations of the quantum adiabatic theorem}\label{sec:adiabatic}

Most quantitative versions of the quantum adiabatic theorem appear in a form similar to that of Theorem \ref{thm:adiabatic} below. For a total evolution time $T$, we define a scaled time parameter $s \coloneqq t/T$, and consider the unitary $\bar{U}(t)$ that evolves the system governed by Hamiltonian $H(t/T)$. The Schr\"odinger equation reads 
\begin{equation*}
      \imath \frac{d \bar{U}(t)}{dt} = H(t/T)\bar{U}(t), \qquad
      \bar{U}(0) = I,
\end{equation*}
or, equivalently, 
\begin{equation} \label{eqn:schrodinger_unitary}
\imath\frac{d U(s)}{ds} = TH(s)U(s), \qquad U(0) = I,
\end{equation}
with $U(s) = \bar{U}(t)$.

In \cite{Jansen2006}, the authors provide quantitative bounds on the deviation of an adiabatically prepared state from the instantaneous ground state. We restate one of their results, restricting to Hamiltonians with non-degenerate ground states.

\begin{thm}[{\cite[Theorem 3]{Jansen2006}}]\label{thm:adiabatic} 
Suppose that $H(s)$ is a self-adjoint operator on an $N$-dimensional Hilbert space, with eigenvalues $\lambda_0(s) < \lambda_1(s) \leq \dots \leq \lambda_{N-1}(s)$. For $s\in [0,1]$, let $P(s)$ denote the projector onto the ground state of $H(s)$, so that $H(s)P(s) = \lambda_0(s) P(s)$, and define $\widetilde{P}(s) \coloneqq U(s)P(0)U(s)^\dagger$, where $U$ satisfies \cref{eqn:schrodinger_unitary}. Then
\begin{equation*}
    \norm{\widetilde{P}(s) - P(s)} \leq A(s),
\end{equation*}
where
\begin{equation*}
    A(s) \leq \frac{1}{T}\left[\frac{ \|\dot{H}(0)\|}{\gamma(0)^2} + \frac{ \norm{\dot{H}(s)}}{\gamma(s)^2} + \int_0^s ds' \left(7\frac{ \norm{\dot{H}(s')}^2}{\gamma(s')^3} + \frac{ \|\ddot{H}(s')\|}{\gamma(s')^2}   \right) \right]
\end{equation*}
with $\gamma(s) \coloneqq \lambda_1(s) - \lambda_0(s)$ and
\begin{equation*} \dot{H}(a) = \restr{\frac{dH(s)}{ds}}{s=a}, \qquad \ddot{H}(a) = \restr{\frac{d^2H(s)}{ds^2}}{s=a}.\end{equation*}
\end{thm}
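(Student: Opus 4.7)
The statement is Theorem 3 of Jansen, Ruskai, and Seiler, so the plan is to follow the classical commutator-plus-integration-by-parts strategy of Kato, later refined by Avron--Elgart and Jansen--Ruskai--Seiler. The underlying principle is that the propagator $U(s)$, driven by $TH(s)$, oscillates rapidly in $s$, so integrating any smooth off-diagonal operator against $U^\dagger\cdot U$ picks up a factor of $1/T$ through integration by parts. To exploit this, one expresses $\dot P(s)$ as a commutator $[H(s),X(s)]$ with an explicit, well-controlled $X(s)$.

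First I would move to the rotating frame. Differentiating $\widetilde{P}(s) = U(s)P(0)U(s)^\dagger$ using \cref{eqn:schrodinger_unitary} and integrating gives
\begin{equation*}
    U(s)^\dagger\bigl(\widetilde{P}(s) - P(s)\bigr)U(s) = -\int_0^s U(s')^\dagger \dot P(s')\,U(s')\,ds',
\end{equation*}
so $\norm{\widetilde{P}(s) - P(s)}$ equals the norm of the integral on the right, since conjugation by $U$ is isometric. Then I would introduce the reduced resolvent $S(s)$, the pseudoinverse of $H(s)-\lambda_0(s)$ on $\operatorname{ran}(I-P(s))$, which satisfies $\norm{S(s)}\leq 1/\gamma(s)$ by the spectral gap condition. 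Differentiating $H(s)P(s)=\lambda_0(s)P(s)$ together with $P^2=P$ shows that $\dot P(s)$ is off-diagonal with respect to the splitting $P\oplus(I-P)$; inverting the adjoint action of $H(s)$ on off-diagonal operators block by block then produces an $X(s)$ with $[H(s),X(s)]=\dot P(s)$ and $\norm{X(s)}\leq \norm{\dot H(s)}/\gamma(s)^2$.

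Next I would integrate by parts using the identity $\tfrac{d}{ds}(U^\dagger A U) = iT\,U^\dagger[H,A]U + U^\dagger \dot A U$. Setting $A=X$ and rearranging gives
\begin{equation*}
    \int_0^s U^\dagger \dot P\,U\,ds' = \frac{1}{iT}\Bigl[U(s)^\dagger X(s)U(s) - X(0)\Bigr] - \frac{1}{iT}\int_0^s U(s')^\dagger \dot X(s')\,U(s')\,ds'.
\end{equation*}
Taking operator norms, the boundary terms yield exactly the $\norm{\dot H(0)}/\gamma(0)^2$ and $\norm{\dot H(s)}/\gamma(s)^2$ contributions in $A(s)$. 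The remaining integrand is $\dot X/T$, which I would compute by the product rule using the standard identity $\dot S = -S\dot H S - S\dot P - \dot P S + \dot\lambda_0\,S^2$; each resulting term carries either one $\ddot H$ with two $S$'s (giving $\norm{\ddot H}/\gamma^2$) or two $\dot H$'s with three $S$'s (giving $\norm{\dot H}^2/\gamma^3$).

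The main obstacle will be careful bookkeeping of the numerical constants, especially the factor $7$ in front of $\norm{\dot H}^2/\gamma^3$: expanding $\dot X$ produces several terms of the form $\dot S\,\dot H\,S$, $S\,\dot H\,\dot S$, $S\,\ddot H\,S$, and cross terms involving $\dot\lambda_0$, each of which must be estimated separately using $\norm{S}\leq 1/\gamma$, $\norm{P}=1$, and $|\dot\lambda_0|\leq\norm{\dot H}$. The non-degeneracy hypothesis on the ground state is exactly what is needed to ensure that $P(s)$ and $S(s)$ are smoothly differentiable on all of $[0,1]$, so that the product rule applies uniformly and the resulting pointwise bound integrates to the claimed estimate.
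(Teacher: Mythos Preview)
The paper does not prove this theorem at all: it is stated as a direct quotation of \cite[Theorem~3]{Jansen2006} and immediately used as a black box, with no argument given. So there is no ``paper's own proof'' to compare against.

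That said, your outline is essentially the Jansen--Ruskai--Seiler argument itself, so it is the right approach if one were asked to supply a proof. The one place I would tighten your sketch is the bound $\|X(s)\|\le \|\dot H(s)\|/\gamma(s)^2$: the block-off-diagonal inversion of $\mathrm{ad}_{H}$ gives $X = S\dot H P + P\dot H S$ (plus adjoint structure), and to get a single factor of $1/\gamma^2$ rather than $2/\gamma^2$ one uses that the two summands live on orthogonal blocks, so the operator norm is the maximum rather than the sum. Similarly, the constant $7$ in front of $\|\dot H\|^2/\gamma^3$ comes out only after exploiting such block-orthogonality in the expansion of $\dot X$; a naive triangle-inequality count of the terms you list would overshoot. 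None of this is a gap in the strategy, just a warning that the bookkeeping you flag as ``the main obstacle'' is indeed where all the content lies.
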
 
A simple restriction of \cref{thm:adiabatic} yields the following. 
\begin{prop}\label{thm:adiabatic2}
Suppose all quantities are defined as in \cref{thm:adiabatic}. If $\norm{\dot{H}(s)} \leq c_0 \gamma(s)/2$ and $\norm{\ddot{H}(s)} \leq c_1 \gamma(s)$ for all $s\in[0,1]$, then 
\begin{equation*}
    A(s) \leq \frac{1}{T}\left(\frac{c_0 + 7c_0^2/4 + c_1}{\gamma_\min}\right), 
\end{equation*}
where $\gamma_\min\coloneqq \min_{s\in [0,1]}\gamma(s)$
\end{prop}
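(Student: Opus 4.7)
The plan is to obtain \cref{thm:adiabatic2} as a direct substitution of the two hypothesized bounds into the integral inequality of \cref{thm:adiabatic}, followed by two elementary simplifications: replace every instantaneous $\gamma(s')$ in the denominator by its minimum, and bound the resulting integral by its length $s\leq 1$.

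Concretely, I would proceed term by term. In the first boundary term, $\|\dot{H}(0)\|/\gamma(0)^2 \leq (c_0\gamma(0)/2)/\gamma(0)^2 = c_0/(2\gamma(0))$, and analogously for the second boundary term $\|\dot{H}(s)\|/\gamma(s)^2 \leq c_0/(2\gamma(s))$. Each of these is at most $c_0/(2\gamma_{\min})$, so together they contribute $c_0/\gamma_{\min}$. Inside the integral, the quadratic term satisfies
\begin{equation*}
7\frac{\|\dot{H}(s')\|^2}{\gamma(s')^3} \leq 7\cdot\frac{(c_0\gamma(s')/2)^2}{\gamma(s')^3} = \frac{7c_0^2}{4\gamma(s')} \leq \frac{7c_0^2}{4\gamma_{\min}},
\end{equation*}
and the second-derivative term satisfies $\|\ddot{H}(s')\|/\gamma(s')^2 \leq c_1/\gamma(s') \leq c_1/\gamma_{\min}$. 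Since each integrand is now a constant independent of $s'$, the integral over $[0,s]$ is bounded by $s(7c_0^2/4 + c_1)/\gamma_{\min} \leq (7c_0^2/4 + c_1)/\gamma_{\min}$, and adding this to the $c_0/\gamma_{\min}$ contribution from the boundary terms yields the claimed estimate after dividing by $T$.

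There is no real obstacle here; the statement is a clean corollary whose purpose is to package the hypotheses used by \cref{alg:adiabatic} into a form that directly motivates the runtime choice on line \ref{alg1:time}. The only minor care required is keeping track of where factors of $1/2$ from the hypothesis $\|\dot{H}(s)\|\leq c_0\gamma(s)/2$ combine to give the clean coefficients $c_0$ and $7c_0^2/4$ appearing in the conclusion.
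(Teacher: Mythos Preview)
Your proposal is correct and is precisely the direct substitution the paper has in mind; the paper itself gives no proof beyond calling \cref{thm:adiabatic2} ``a simple restriction of \cref{thm:adiabatic}.'' Your term-by-term bookkeeping of the factors of $1/2$ is exactly the intended computation.
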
 
Hence, if both the first and the second derivatives of $H(s)$ are upper-bounded in terms of the spectral gap $\gamma(s)$ and constants $c_0, c_1$ independent of $\gamma_\min$, then the adiabatic theorem guarantees that scaling $T$ as $\gamma_\min^{-1}$ can reduce $A$ to within any constant error. Although the conditions of \cref{thm:adiabatic2} are not satisfied for $c_0,c_1$ by most time-dependent Hamiltonians, we can perform a computationally equivalent process by chaining together a sequence of interpolating Hamiltonians.

\begin{thm}\label{thm:adiabatic3}
For a given self-adjoint operator $H(s)$ on a finite-dimensional space, define an associated operator $H_L(s)$ piecewise linear in $s$ as
\begin{equation*}
    H_L(s) = \begin{dcases} \frac{T_0}{s_1}H(0), \qquad &s =0 \\
    \sum_{i=0}^{q-1}\frac{T_i}{s_{i+1}-s_i} \left[\left(1-\frac{s-s_i}{s_{i+1}-s_i}\right)H(s_i) + \frac{s-s_i}{s_{i+1}-s_i}H(s_{i+1})\right]\mathbbm{1}_{(s_i,s_{i+1}]}(s), \qquad &s\in(0,1]
    \end{dcases}
\end{equation*}
for $0=s_0 < s_1 < \dots < s_{q} = 1$ and $T_0, T_1, \dots, T_{q-1} > 0$. For all $s\in[0,1]$, let $P_L(s)$ denote the projector onto the ground state of $H_L(s)$, and define $\widetilde{P}_L(s) \coloneqq U_L(s)P_L(0)U_L(s)^\dagger$, where $U_L$ is the unitary satisfying
\begin{equation*}  
    \imath \frac{d U_L(s)}{ds} = H_L(s)U_L(s), \qquad 
    U_L(0)=I.
\end{equation*}
Then 
\begin{equation} \label{induction}
    \norm{\widetilde{P}_L(s_j)-{P}_L(s_j)} \leq \sum_{i=0}^{j-1} A_i
\end{equation}
for all $j \le q$, where $A_i$ is any upper bound that results from applying \cref{thm:adiabatic} to the Hamiltonian $(1-\sigma_i)H(s_i)+\sigma_i H(s_{i+1})$ over the interval $\sigma_i \in [0,1]$ with total evolution time $T_i$. 
\end{thm}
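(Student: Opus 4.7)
I would argue by induction on $j$, showing that the error accumulated through $s_j$ combines additively with a fresh adiabatic error on the interval $(s_j,s_{j+1}]$ to yield $\sum_{i=0}^{j} A_i$. The base case $j=0$ is immediate since $\widetilde{P}_L(0) = P_L(0)$. A small but crucial preliminary is that $P_L(s_i)$ is well-defined at each breakpoint: the left-limit gives $H_L(s_i) = (T_{i-1}/(s_i-s_{i-1}))H(s_i)$ while the right-limit gives $(T_i/(s_{i+1}-s_i))H(s_i)$, both positive scalar multiples of $H(s_i)$ and so sharing the same ground-state projector.

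For the inductive step, let $V_j$ denote the unitary propagator of $H_L$ from $s_j$ to $s_{j+1}$, so $\widetilde{P}_L(s_{j+1}) = V_j \widetilde{P}_L(s_j) V_j^\dagger$. Inserting $V_j P_L(s_j) V_j^\dagger$ and applying the triangle inequality together with unitary invariance of the operator norm yields
\begin{equation*}
\norm{\widetilde{P}_L(s_{j+1}) - P_L(s_{j+1})} \leq \norm{\widetilde{P}_L(s_j) - P_L(s_j)} + \norm{V_j P_L(s_j) V_j^\dagger - P_L(s_{j+1})}.
\end{equation*}
The first term is at most $\sum_{i=0}^{j-1}A_i$ by the inductive hypothesis, so the task reduces to bounding the second term by $A_j$.

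To do so, I would reparameterize the segment. On $(s_j,s_{j+1}]$ we may write $H_L(s) = \frac{T_j}{s_{j+1}-s_j}\widetilde{H}(\sigma)$, where $\widetilde{H}(\sigma) \coloneqq (1-\sigma)H(s_j) + \sigma H(s_{j+1})$ and $\sigma = (s-s_j)/(s_{j+1}-s_j)$. Defining $\widetilde{U}(\sigma) \coloneqq U_L(s_j+\sigma(s_{j+1}-s_j))U_L(s_j)^\dagger$ and applying the chain rule to $\imath\, dU_L/ds = H_L(s) U_L$, the factor $s_{j+1}-s_j$ cancels, leaving $\imath\, d\widetilde{U}/d\sigma = T_j \widetilde{H}(\sigma)\widetilde{U}$ with $\widetilde{U}(0)=I$. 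This is exactly \cref{eqn:schrodinger_unitary} for the linear Hamiltonian $\widetilde{H}$ with total evolution time $T_j$, whose ground-state projectors at $\sigma=0$ and $\sigma=1$ are precisely $P_L(s_j)$ and $P_L(s_{j+1})$. Since $\widetilde{U}(1) = V_j$, applying \cref{thm:adiabatic} to this standalone adiabatic problem yields $\norm{V_j P_L(s_j)V_j^\dagger - P_L(s_{j+1})}\leq A_j$, closing the induction.

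The main obstacle is bookkeeping the reparameterization: one must verify that the prefactor $T_j/(s_{j+1}-s_j)$ built into $H_L$ is precisely the factor needed so that, after the change of variables, the segment evolution matches \cref{thm:adiabatic} with total time $T_j$. Once this is checked the rest is a straightforward triangle-inequality composition of adiabatic errors and unitary invariance of the norm.
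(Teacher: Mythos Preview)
Your proposal is correct and follows essentially the same approach as the paper: an induction on $j$ with base case $\widetilde P_L(0)=P_L(0)$, the segment reparameterization $\sigma=(s-s_j)/(s_{j+1}-s_j)$ that turns the $j$th piece of $H_L$ into the Schr\"odinger equation $\imath\,d\widetilde U/d\sigma=T_j\widetilde H(\sigma)\widetilde U$, and then the triangle inequality splitting $\norm{\widetilde P_L(s_{j+1})-P_L(s_{j+1})}$ into the inductive term plus a single-segment adiabatic error bounded by $A_j$ via \cref{thm:adiabatic}. Your explicit remark that $P_L(s_i)$ is unambiguous at the breakpoints (both one-sided limits being positive scalar multiples of $H(s_i)$) is a nice clarification that the paper leaves implicit.
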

\begin{proof} For $s \in (s_i, s_{i+1}]$, $U_L$ can be written as
    \begin{equation*}
        U_L(s) = U_{i}(s)U_{i-1}(s_{i})\cdots U_{0}(s_{1})
    \end{equation*} where for each $i = 0,1,\dots,q-1$, $U_i$ satisfies 
    \begin{equation}\label{eqn:adiabatic_piecewise}
        \imath \frac{d U_i(s)}{ds} = \frac{T_i}{s_{i+1}-s_i} \left[\left(1-\frac{s-s_i}{s_{i+1}-s_i}\right)H(s_i) + \frac{s-s_i}{s_{i+1}-s_i}H(s_{i+1})\right]U_i(s), \qquad U_i(s_i) = I
    \end{equation}
    Making a change of variables 
    \begin{equation*}
        \sigma_i = \frac{s - s_i}{s_{i+1}-s_i}
    \end{equation*}
    and defining \begin{equation*}
    H_i(\sigma_i) = (1-\sigma_i)H(s_i) + \sigma_i H(s_{i+1})
    \end{equation*}
    for $\sigma_i \in [0,1]$,
    we see that \cref{eqn:adiabatic_piecewise} is equivalent to 
    \begin{equation*}
        \imath  \frac{dU_i(s)}{d\sigma_i} = T_i H_i(\sigma_i)U_i(s), \qquad U_i(\sigma_i = 0) = I,
    \end{equation*}    
    so \cref{thm:adiabatic} applies to each $H_i$ over the interval $\sigma_i \in [0,1]$ (i.e., $s \in [s_{i}, s_{i+1}]$), with $T \to T_i$ and projector $P_L(s_i)$ onto the ground state at $\sigma_i = 0$. We prove Eq.\ \eqref{induction} by induction. The base case holds since $\tilde{P}_L(0) = P_L(0)$. Assume Eq.\ \eqref{induction} is true for some integer $j\geq 1$. Then,
    \[     \norm{\widetilde{P}_L(s_{j+1})-P_L(s_{j+1})} 
    =  \norm{U_L(s_{j+1})P_L(0)U_L(s_{j+1})^\dagger - P_L(s_{j+1})}
        = \norm{U_j(s_{j+1})U_L(s_j)P_L(0)U_L(s_j)^\dagger U_j(s_{j+1})^\dagger - P_L(s_{j+1})}
        = \norm{U_j(s_{j+1})\widetilde{P}_L(s_{j})U_j(s_{j+1})^\dagger - P_L(s_{j+1})}
        = \norm{U_j(s_{j+1})(\widetilde{P}_L(s_j) - P_L(s_j))U_j(s_{j+1})^\dagger + U_j(s_{j+1})P_L(s_j)U_j(s_{j+1})^\dagger - P_L(s_{j+1})} 
        \leq \norm{\widetilde{P}_L(s_j) - P_L(s_j)} + \norm{U_j(s_{j+1})P_L(s_j)U_j(s_{j+1})^\dagger - P_L(s_{j+1})}
        \leq \sum_{i=0}^{j-1}A_i + A_j = \sum_{i=0}^{j}A_i,
    \]
    where the second inequality follows from applying the inductive hypothesis to the first term and \cref{thm:adiabatic} to the second. Eq.\ \eqref{induction} is thus true for all $j$.
\end{proof} 
\begin{cor}\label{thm:adiabatic4}
    Suppose that $H(s)$ is linear in $s$. Let all quantities be defined as in \cref{thm:adiabatic3}, and let $\gamma_{i,\min} \coloneqq \min_{s\in [s_i, s_{i+1}]}\gamma(s)$, where $\gamma(s)$ denotes the spectral gap of $H(s)$. If for all $i$,
    \begin{equation*} s_{i+1}-s_i \leq \frac{c_0 \gamma_{i,\min}}{2 \norm{\dot{H}}}, \qquad  T_i \geq \frac{q(c_0 + 7c_0^2/4)}{\epsilon\gamma_{i,\min}} \end{equation*} for some $c_0 >0$ and constant $\epsilon >0$,
    then
    \begin{equation*}
        \norm{\widetilde{P}_L(1)-P_L(1)} \leq \epsilon.
    \end{equation*}
\end{cor}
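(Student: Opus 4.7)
The plan is to combine \cref{thm:adiabatic3} (which reduces the problem to bounding each $A_i$) with \cref{thm:adiabatic2} (which bounds $A_i$ when derivative controls on $H_i$ hold on the subinterval).

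The crucial preliminary observation is that linearity of $H(s)$ makes the ``rescaled'' Hamiltonian $H_i(\sigma) = (1-\sigma)H(s_i) + \sigma H(s_{i+1})$ on $\sigma\in[0,1]$ identical, up to affine reparametrization, to $H$ restricted to $[s_i, s_{i+1}]$: explicitly $H_i(\sigma) = H\bigl(s_i + \sigma(s_{i+1}-s_i)\bigr)$. Consequently, the spectral gap $\gamma_i(\sigma)$ of $H_i$ equals $\gamma(s_i+\sigma(s_{i+1}-s_i))$, so that $\min_{\sigma\in[0,1]} \gamma_i(\sigma) = \gamma_{i,\min}$.

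Next, I would compute the derivatives of $H_i$ with respect to $\sigma$. By linearity, $\dot{H}_i(\sigma) = H(s_{i+1})-H(s_i) = (s_{i+1}-s_i)\dot{H}$ and $\ddot{H}_i(\sigma) = 0$. The hypothesis $s_{i+1}-s_i \leq c_0\gamma_{i,\min}/(2\|\dot{H}\|)$ therefore yields
\begin{equation*}
\|\dot{H}_i(\sigma)\| \;\leq\; \frac{c_0\,\gamma_{i,\min}}{2} \;\leq\; \frac{c_0\,\gamma_i(\sigma)}{2}, \qquad \|\ddot{H}_i(\sigma)\| = 0 \;\leq\; c_1 \gamma_i(\sigma)
\end{equation*}
for every $\sigma\in[0,1]$ (taking $c_1 = 0$). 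So the hypotheses of \cref{thm:adiabatic2} hold for $H_i$ on $[0,1]$, and applying it with evolution time $T_i$ gives $A_i \leq (c_0 + 7c_0^2/4)/(T_i\,\gamma_{i,\min})$. Substituting the lower bound $T_i \geq q(c_0+7c_0^2/4)/(\epsilon\,\gamma_{i,\min})$ collapses this to $A_i \leq \epsilon/q$.

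Finally, I would invoke \cref{thm:adiabatic3} at $s_q = 1$ to conclude
\begin{equation*}
\|\widetilde{P}_L(1)-P_L(1)\| \;\leq\; \sum_{i=0}^{q-1} A_i \;\leq\; q\cdot\frac{\epsilon}{q} \;=\; \epsilon.
\end{equation*}
There is no real obstacle here; the only point that warrants care is the reparametrization identity $\gamma_i(\sigma)=\gamma(s_i+\sigma(s_{i+1}-s_i))$, which is what legitimizes replacing $\min_\sigma \gamma_i(\sigma)$ by $\gamma_{i,\min}$ in \cref{thm:adiabatic2}, and the use of linearity of $H$ to kill $\ddot{H}_i$ and to convert the step-size hypothesis into a bound on $\|\dot{H}_i\|$.
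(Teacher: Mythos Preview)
Your proposal is correct and follows essentially the same approach as the paper: invoke \cref{thm:adiabatic3} to reduce to bounding $\sum_i A_i$, use linearity of $H$ to get $\ddot{H}_i = 0$ and $\|\dot{H}_i\| = (s_{i+1}-s_i)\|\dot{H}\| \leq c_0\gamma_{i,\min}/2$, then apply \cref{thm:adiabatic2} (with $c_1=0$) to each piece and sum. Your explicit remark about the reparametrization identity $\gamma_i(\sigma)=\gamma(s_i+\sigma(s_{i+1}-s_i))$ justifying $\min_\sigma\gamma_i(\sigma)=\gamma_{i,\min}$ is a nice clarification that the paper leaves implicit.
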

\begin{proof} By \cref{thm:adiabatic3},
    \begin{equation*}
        \norm{\widetilde{P}_L(1)-P_L(1)} = \norm{\widetilde{P}_L(s_{q}) - P_L(s_q)} \leq \sum_{i=0}^{q-1}A_i.
    \end{equation*}
Since $\norm{d^2H_i(\sigma_i)/d\sigma_i^2} = 0$ and
\begin{equation*} \norm*{\frac{dH_i(\sigma_i)}{d\sigma_i}} = \norm{H(s_{i+1}) - H(s_i)} = \norm{(s_{i+1} - s_i)\dot{H}} \leq \frac{c_0\gamma_{i,\min}}{2} \end{equation*}
by our assumption on the step sizes $s_{i+1} - s_i$, and, \cref{thm:adiabatic2} applies to each $A_i$ independently, giving
\begin{equation*}
    \sum_{i=0}^{q-1} A_i \leq \sum_{i=0}^{q-1}\frac{1}{T_i}\left(\frac{c_0 + 7c_0^2/4}{\gamma_{i,\min}}\right) \leq \sum_{i=0}^{q-1}\frac{\epsilon}{q} = \epsilon,
\end{equation*}
where the second inequality follows from our assumption on $T_i$. 
\end{proof} 
Typically, a weaker form of \cref{thm:adiabatic} is used, with
\begin{align} 
    A(s) &\leq \frac{1}{T}\left[\frac{2\max\limits_{s'\in\{0,s\}}\norm{\dot{H}(s')}}{\left(\min_{s\in [0,1]}\gamma(s)\right)^2} + \int_0^1 ds' \left(7\frac{\norm{\dot{H}(s')}^2}{\gamma(s')^3} + \frac{ \norm{\ddot{H}(s')}}{\gamma(s')^2}   \right) \right] \eqqcolon A, \label{eqn:adiabatic_standard}
\end{align}
for all $s \in [0,1]$. If $H(s)$ is linear in $s$, the upper bound simplifies to
\begin{equation} \label{A}
    A = \frac{1}{T}\left[\frac{2\norm{\dot{H}}}{\left(\min_{s\in [0,1]}\gamma(s)\right)^2} + 7\int_0^1ds\frac{\norm{\dot{H}}^2}{\gamma(s)^3} \right].
\end{equation} 
Since each $H_i(\sigma_i)$ considered in \cref{thm:adiabatic3} is linear in $\sigma_i$, we see that we can take the upper bounds $A_i$ in \cref{thm:adiabatic3} to be
\begin{equation} \label{A_i}
    A_i = \frac{1}{T_i}\left[\frac{2}{\left(\min_{\sigma_i \in [0,1]}\gamma_i(\sigma_i)\right)^2}\norm*{\frac{dH_i(\sigma_i)}{d\sigma_i}} + 7 \int_0^1 \frac{d\sigma_i}{\gamma_i(\sigma_i)^3}\norm*{\frac{dH_i(\sigma_i)}{d\sigma_i}}^2\right],
\end{equation}
where $\gamma_i(\sigma_i) \coloneqq \gamma(\sigma_i(s_{i+1}-s_i) + s_i)$ is the spectral gap of $H_i(\sigma_i)$. 

The following theorem demonstrates that if we take a prescribed linear evolution over a time $T$ and divide it into a sequence of $k$ linear evolutions, each taking time $T_i\geq T$ for $i \in \llbracket k-1 \rrbracket$, the total divergence of the state prepared by \cref{alg:adiabatic} from the ground state does not scale with $k$. One can actually show that the divergence decreases, but doing so requires working directly with the proof of \cref{thm:adiabatic}, which is beyond the scope of this paper. 

\begin{thm}\label{thm:adiabatic5}
    Suppose that $H(s)$ is linear in $s$, and let all quantities be defined as in Theorems \ref{thm:adiabatic} and \ref{thm:adiabatic3}. If $T_i \geq T$ for all $i = 0,1,\dots, q-1$, then 
        \begin{equation*}
           \sum_{i=1}^{q-1} A_i \leq A,
        \end{equation*}
    with $A$ and $A_i$ as defined in Eqs.\ \eqref{A} and \eqref{A_i}.
\end{thm}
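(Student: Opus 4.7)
The plan is to unfold each $A_i$ using the linearity of $H(s)$, convert the $\sigma_i$-integrals into $s$-integrals via the affine reparametrization used already in the proof of \cref{thm:adiabatic3}, and then bound the resulting telescoping expression by $A$ term by term.

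Concretely, since $H(s)$ is linear, $\dot H$ is a constant operator and
\[
\frac{dH_i}{d\sigma_i} = H(s_{i+1})-H(s_i) = (s_{i+1}-s_i)\dot H,
\qquad
\left\|\frac{dH_i}{d\sigma_i}\right\| = (s_{i+1}-s_i)\|\dot H\|.
\]
Under the change of variable $s = s_i + \sigma_i(s_{i+1}-s_i)$ we have $\gamma_i(\sigma_i)=\gamma(s)$ and $d\sigma_i = ds/(s_{i+1}-s_i)$, so
\[
\int_0^1 \frac{d\sigma_i}{\gamma_i(\sigma_i)^3} = \frac{1}{s_{i+1}-s_i}\int_{s_i}^{s_{i+1}}\frac{ds}{\gamma(s)^3}.
\]
Substituting into \eqref{A_i} and using $T_i \ge T$,
\[
A_i \;\le\; \frac{1}{T}\left[\frac{2(s_{i+1}-s_i)\|\dot H\|}{\gamma_{i,\min}^2} \;+\; 7(s_{i+1}-s_i)\|\dot H\|^2\int_{s_i}^{s_{i+1}}\frac{ds}{\gamma(s)^3}\right].
\]

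Now I would sum over $i$ and handle the two terms separately. For the first, $\gamma_{i,\min}\ge \gamma_\min := \min_{s\in[0,1]}\gamma(s)$ gives $\gamma_{i,\min}^{-2}\le \gamma_\min^{-2}$; combined with the telescoping identity $\sum_i (s_{i+1}-s_i)=1$, this yields
\[
\sum_{i=0}^{q-1}\frac{2(s_{i+1}-s_i)\|\dot H\|}{\gamma_{i,\min}^2} \;\le\; \frac{2\|\dot H\|}{\gamma_\min^2}.
\]
For the second, the factor $(s_{i+1}-s_i)\le 1$ can simply be dropped, and the remaining integrals partition $[0,1]$:
\[
\sum_{i=0}^{q-1} 7(s_{i+1}-s_i)\|\dot H\|^2\int_{s_i}^{s_{i+1}}\frac{ds}{\gamma(s)^3} \;\le\; 7\|\dot H\|^2\int_0^1\frac{ds}{\gamma(s)^3}.
\]
Adding the two bounds reproduces $A$ as in \eqref{A}.

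There is no genuine obstacle here; the statement is essentially a bookkeeping fact once one recognizes that the $\sigma_i$-parametrization, after the Jacobian is restored, produces a Riemann-like decomposition of the integral defining $A$. The only thing to watch is that the prefactor $(s_{i+1}-s_i)$ appearing in each subinterval integral cancels the $1/(s_{i+1}-s_i)$ coming from the change of variables, leaving exactly the quantities that sum to the global $A$ after using $T_i\ge T$ and $\gamma_{i,\min}\ge \gamma_\min$.
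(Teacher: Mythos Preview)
Your proof is correct and follows essentially the same approach as the paper: both use the affine change of variables to convert $\sigma_i$-integrals to $s$-integrals, the telescoping $\sum_i(s_{i+1}-s_i)=1$ together with $\gamma_{i,\min}\ge\gamma_\min$ for the boundary term, the bound $s_{i+1}-s_i\le 1$ (equivalently $1/\delta s_i\ge 1$ in the paper) for the integral term, and $T_i\ge T$. The only cosmetic difference is that the paper starts from $A$ and bounds it below by $\sum_i A_i$, whereas you start from $\sum_i A_i$ and bound it above by $A$.
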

\begin{proof}
    Since $H(s)$ is linear in $s$, $
        \norm{\widetilde{P}(1) - {P}(1)} \leq A$,
    where $A$ is defined as in \cref{A}, and
    letting $\delta s_i \coloneqq s_{i+1} - s_i$, we have
    \begin{equation*} \norm*{\frac{dH_i(\sigma_i)}{d\sigma_i}} = \delta s_i \norm*{\frac{dH(s)}{ds}} \end{equation*}
    for all $i = 0,1,\dots, q-1$, where $\norm{dH(s)/ds}$ is constant.
Then, \cref{A} can be written
\begin{dgroup*}
\[
    \hiderel{A\;} = \frac{1}{T}\left[\frac{2}{\left(\min_{s\in [0,1]}\gamma(s)\right)^2}\sum_{i=0}^{q-1}\delta s_i \norm*{\frac{dH(s)}{ds}} + 7\sum_{i=0}^{q-1}\int_{s_i}^{s_{i+1}}\frac{ds}{\gamma(s)^3} \norm*{\frac{dH(s)}{ds}}^2\right]
\]\[
    = \sum_{i=0}^{q-1}\frac{1}{T}\left[\frac{2}{\left(\min_{s\in [0,1]}\gamma(s)\right)^2}\norm*{\frac{dH_i(\sigma_i)}{d\sigma_i}} + 7\int_{s_i}^{s_{i+1}}\frac{ds}{{\delta s_i}^2\gamma(s)^3}\norm*{\frac{dH_i(\sigma_i)}{d\sigma_i}}^2\right]
\]
\[
    = \sum_{i=0}^{q-1}\frac{1}{T}\left[\frac{2}{\left(\min_{s\in [0,1]}\gamma(s)\right)^2}\norm*{\frac{dH_i(\sigma_i)}{d\sigma_i}} + \frac{7}{{\delta s_i}}\int_0^1 \frac{d\sigma_i}{\gamma_i(\sigma_i)^3}\norm*{\frac{dH_i(\sigma_i)}{d\sigma_i}}^2\right]
\]\[ 
    \geq \sum_{i=0}^{q-1}\frac{1}{T}\left[\frac{2}{\left(\min_{s\in [s_i,s_{i+1}]}\gamma(s)\right)^2}\norm*{\frac{dH_i(\sigma_i)}{d\sigma_i}} + 7\int_0^1 \frac{d\sigma_i}{\gamma_i(\sigma_i)^3}\norm*{\frac{dH_i(\sigma_i)}{d\sigma_i}}^2\right].
\]
\begin{dsuspend}
\\Now, since $T_i \geq T$ for all $i$, we have
\end{dsuspend}
\[
    \hiderel{A\;} \geq { \sum_{i=0}^{q-1}\frac{1}{T_i}\left[\frac{2}{\left(\min_{\sigma_i\in [0,1]}\gamma_i(\sigma_i)\right)^2}\norm*{\frac{dH_i(\sigma_i)}{d\sigma_i}} + 7\int_0^1 \frac{d\sigma_i}{\gamma_i(\sigma_i)^3}\norm*{\frac{dH_i(\sigma_i)}{d\sigma_i}}^2\right] = \sum_{i=0}^{q-1} A_i}.
\]
\end{dgroup*}
\end{proof} 

\subsection{Perturbation bounds}\label{sec:perturb}
Since the bound of \cref{thm:adiabatic3} is additive in each linear component of the adiabatic interpolation, to obtain low error we require that each $A_j$ of \cref{thm:adiabatic3} satisfy \cref{thm:adiabatic2} with $c_0,c_1=\bigO{1}$. Hence, we determine an appropriate set $\{s_i\}_i$ such that these constants are known when \cref{thm:adiabatic2} is applied to each interval $[s_i,s_{i+1}]$. Weyl's inequality allows us to bound the change in the spectral gap in terms of the step size $s_{i+1}-s_i$.
\begin{prop}\label{thm:weyl}
    Suppose that $H(s)$ is linear in $s$ and let $\gamma(s)$ denote the spectral gap of $H(s)$. Then, for $\delta s > 0$,
    \begin{equation*}
        \abs*{\gamma(s+\delta s) - \gamma(s)} \leq  2\delta s\norm{\dot{H}}.   
    \end{equation*}
\end{prop}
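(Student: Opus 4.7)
The plan is to apply Weyl's inequality separately to the ground-state and first-excited eigenvalues and then combine via the triangle inequality. Since $H(s)$ is linear in $s$, the perturbation between the two Hamiltonians has a particularly simple form,
\begin{equation*}
    H(s+\delta s) - H(s) = \delta s\, \dot H,
\end{equation*}
so its operator norm is exactly $\delta s \,\|\dot H\|$. This is the key quantitative input.

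Next I would invoke Weyl's inequality for Hermitian operators, which states that if $A$ and $B$ are self-adjoint with eigenvalues $\mu_0(A) \leq \mu_1(A) \leq \dots$ and $\mu_0(B) \leq \mu_1(B) \leq \dots$, then $|\mu_k(A)-\mu_k(B)| \leq \|A-B\|$ for every $k$. Applying this with $A = H(s+\delta s)$ and $B = H(s)$ to both $k=0$ and $k=1$ yields
\begin{equation*}
    |\lambda_0(s+\delta s)-\lambda_0(s)| \leq \delta s\,\|\dot H\|,\qquad |\lambda_1(s+\delta s)-\lambda_1(s)| \leq \delta s\,\|\dot H\|.
\end{equation*}

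Finally, writing $\gamma(s) = \lambda_1(s)-\lambda_0(s)$ and using the triangle inequality,
\begin{equation*}
    |\gamma(s+\delta s)-\gamma(s)| \leq |\lambda_1(s+\delta s)-\lambda_1(s)| + |\lambda_0(s+\delta s)-\lambda_0(s)| \leq 2\delta s\,\|\dot H\|,
\end{equation*}
which is the desired bound. There is no real obstacle here: the statement is essentially Weyl's inequality applied to a linear path, and the factor of $2$ is the standard cost of controlling a difference of two eigenvalues rather than a single one. The only mild point of care is making sure Weyl's inequality is stated in a form that does not rely on non-degeneracy of the ground state (it does not), so the bound holds uniformly in $s$ regardless of level crossings above $\lambda_0$.
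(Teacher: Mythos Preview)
Your proof is correct and follows essentially the same approach as the paper: write $H(s+\delta s) = H(s) + \delta s\,\dot H$ and apply Weyl's inequality to each of the two lowest eigenvalues, picking up the factor of $2$ from the triangle inequality. The paper's version is terser, simply asserting that Weyl's inequality gives $\gamma(s) - 2\delta s\|\dot H\| \leq \gamma(s+\delta s) \leq \gamma(s) + 2\delta s\|\dot H\|$, but the underlying argument is identical to yours.
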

\begin{proof} Since $H(s)$ is linear is $s$, 
$ H(s+\delta s) = H(s) + \delta s\dot{H}$.
It follows from a straightforward application of Weyl's inequality to the eigenvalues of $H(s + \delta s)$ that
\begin{equation*} \gamma(s) - 2\delta s\norm{\dot{H}} \leq \gamma(s + \delta s) \leq \gamma(s) + 2\delta s\norm{\dot{H}}. \end{equation*}
\end{proof}
Note that since \Cref{alg:BAA} constructs a schedule for the linear interpolation $H(s) = (1-s)H_0 + sH_1$, $\|\dot{H}\|$ is upper-bounded by $\|H_1 - H_0\| \leq 2\max\{\|H_0\|, \|H_1\|\} \leq 2\lambda_\max$. 
In Line \ref{alg:BAA_oracle_ds} of \Cref{alg:BAA}, we choose at each $s_i$ the largest subsequent step size $\delta s = s_{i+1}-s_i$ for which the change $\abs{\gamma(s_{i+1} )-\gamma(s_i)}$ is guaranteed by \cref{thm:weyl} to be small relative to $\gamma(s_i)$. This is made precise by the following proposition.
\begin{prop}\label{thm:delta_s}
    Suppose that $H(s)$ is linear in $s$ and let $\gamma(s)$ denote the spectral gap of $H(s)$. If $0 < \delta s \leq c_0\gamma(s)/2\norm{\dot{H}}$ for some $c_0 >0$, 
    \begin{equation*}\abs*{\gamma(s+ \delta s)-\gamma(s)} \leq c_0\gamma(s).\end{equation*}
\end{prop}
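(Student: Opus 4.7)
The plan is to derive this as an immediate consequence of \cref{thm:weyl}, which already supplies the key Lipschitz-type bound on the spectral gap when $H(s)$ is linear in $s$. Since \cref{thm:weyl} gives $|\gamma(s+\delta s)-\gamma(s)| \leq 2\delta s \|\dot{H}\|$, the only remaining task is to substitute the hypothesized upper bound on $\delta s$ into the right-hand side.

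More concretely, I would write one line: given the hypothesis $0 < \delta s \leq c_0 \gamma(s)/(2\|\dot{H}\|)$, multiply both sides by $2\|\dot{H}\|$ (which is positive, so the inequality direction is preserved; if $\|\dot{H}\| = 0$ then $\gamma$ is constant and the claim is trivial) to obtain $2\delta s \|\dot{H}\| \leq c_0 \gamma(s)$. Chaining this with \cref{thm:weyl} yields $|\gamma(s+\delta s)-\gamma(s)| \leq 2\delta s \|\dot{H}\| \leq c_0 \gamma(s)$, which is exactly the claim.

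There is essentially no obstacle here; the proposition is a convenience restatement of \cref{thm:weyl} that repackages the Weyl bound in the form used by Line~\ref{alg:BAA_oracle_ds} of \cref{alg:BAA}. The only minor subtlety worth noting is the degenerate case $\|\dot{H}\| = 0$, which can either be handled separately or absorbed by interpreting the hypothesis as vacuous (any positive $\delta s$ is allowed, and $\gamma$ is constant so the inequality is trivial). Accordingly, my write-up will be a two-line proof with no further structural work required.
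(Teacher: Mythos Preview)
Your proposal is correct and matches the paper's own proof, which simply states that the result follows immediately from \cref{thm:weyl}. Your one-line substitution of the hypothesis into the bound $2\delta s\|\dot H\|$ is exactly the intended argument.
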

\begin{proof}
    This follows immediately from \cref{thm:weyl}. 
\end{proof}

\cref{thm:weyl} can also be used to find a lower bound on the gap in an interval $[s_i, s_{i+1}]$ in terms of $\gamma(s_i)$ and $\gamma(s_{i+1})$. Combining $\gamma(s) \geq \gamma(s_i) - 2(s-s_i)\|\dot{H}\|$ and $\gamma(s) \geq \gamma(s_{i+1}) - 2(s_{i+1}-s)\|\dot{H}\|$ for $s \in [s_i, s_{i+1}]$ and recalling that $\|\dot{H}\| \leq 2\lambda_\max$, we have
\begin{equation}\label{intersect} \gamma_{i,\min} \coloneqq \min_{s \in [s_i, s_{i+1}]}\gamma(s) \geq  \frac{1}{2}(\gamma(s_i) + \gamma(s_{i+1})) - 2(s_{i+1} - s_i)\lambda_\max. \end{equation} 
Given estimates of $\gamma(s_i)$ and $\gamma(s_{i+1})$, Line \ref{alg1:gamma_min} of \cref{alg:adiabatic} applies this bound to approximate $\gamma_{i,\min}$.  
\subsection{Query complexity}\label{sec:oracle-calls}

The following theorem shows that for $\textsc{GetGap}(s, \delta s, \cdot) = \Theta(\gamma(s+\delta s))$, the query complexity of \cref{alg:BAA} is logarithmic in $(\min_{s\in[0,1]}\gamma(s))^{-1}$ under fairly general assumptions about the spectral gap $\gamma(s)$. We consider sublevel sets of the form $I_k \coloneqq \{s \in [0,1] \:|\: \textsc{GetGap}(s,0,\cdot)\leq \lambda_\max/{2^k}\}$ and imposed two simple constraints. First, we require that the ``width" (in $s$) of local minima in the gap is not too large relative to the size of these minima.  
This condition is imposed as an upper bound on the measure $\mu(I_k)$ of each sublevel set $I_k$. Secondly, we assume that the gap does not oscillate too wildly (as a function of $s$), as quantified by the minimum number of intervals $I_k^{(l)}$ such that $I_k = \cup_l I_k^{(l)}$.

\begin{thm}\label{thm:queries_redux}\label{thm:queries}
Let all quantities be defined as in \cref{alg:BAA}, and let $\oracle(s) := \text{\textproc{GetGap}}(s,0,\cdot)$. Define the sublevel sets $I_k \coloneqq \{ s \in [0,1] \:|\: \oracle(s) \leq {\lambda_\max}/2^{k}\}$. If for all $k \in \mathbb{R}_{\geq 0}$, $\mu(I_k) \leq C/2^{k}$, where  $C$ is a constant independent of the problem size, and $I_k$ can be written as the union of $R$ intervals, then \cref{alg:BAA} makes $\bigO{R\log\left({\lambda_\max}/{\Gamma_\min}\right)}$ queries to \textproc{GetGap}, where $\Gamma_\min \coloneqq \min_{s \in [0,1]}\Gamma(s)$.
\end{thm}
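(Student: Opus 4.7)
The plan is to bound the number of iterations of the while loop in \cref{alg:BAA}, since exactly one query to \textsc{GetGap} is made per iteration. Let $0 = s_0 < s_1 < s_2 < \cdots$ denote the successive checkpoints produced by the algorithm; by Line~\ref{alg:BAA_oracle_ds}, each step satisfies $\delta s_{i+1} := s_{i+1} - s_i = c_0\, \Gamma(s_i)/(4\lambda_\max)$ (except possibly the final step, which is truncated at $s=1$). Thus the size of the step emanating from a checkpoint is controlled by the oracle value there, and the natural strategy is to partition the checkpoints into dyadic level sets of $\Gamma$ and bound the count inside each.

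Concretely, I would define the annuli $A_k := I_k \setminus I_{k+1} = \{s \in [0,1] : \lambda_\max/2^{k+1} < \Gamma(s) \leq \lambda_\max/2^k\}$ for $k \in \mathbb{Z}_{\geq 0}$. The hypothesis immediately gives $\mu(A_k) \leq \mu(I_k) \leq C/2^k$. The first combinatorial observation is that $A_k$ is a union of at most $2R$ intervals: since $I_{k+1} \subseteq I_k$ and each is a union of at most $R$ intervals, each component of $I_{k+1}$ is contained in some component of $I_k$, so if the $j$-th component of $I_k$ contains $m_j$ components of $I_{k+1}$, then removing those subintervals leaves at most $m_j + 1$ pieces in $A_k$; summing $\sum_j m_j \leq R$ and $\sum_j 1 \leq R$ gives the bound $2R$. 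Second, for any checkpoint $s_i \in A_k$ the strict lower bound $\Gamma(s_i) > \lambda_\max/2^{k+1}$ implies $\delta s_{i+1} > c_0/2^{k+3}$.

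Fix $k$ and let $J$ be one of the at most $2R$ components of $A_k$. Because consecutive checkpoints $s_i, s_{i+1}$ with $s_i \in J$ are separated by strictly more than $c_0/2^{k+3}$, at most $|J|\cdot 2^{k+3}/c_0 + 1$ checkpoints lie in $J$. Summing over components and using $\mu(A_k) \leq C/2^k$, the total number of checkpoints in $A_k$ is at most $\mu(A_k)\cdot 2^{k+3}/c_0 + 2R \leq 8C/c_0 + 2R = \bigO{R}$, using that $C$ is a problem-size-independent constant. Since every checkpoint satisfies $\Gamma(s_i) \geq \Gamma_\min$, a checkpoint can lie in $A_k$ only if $\lambda_\max/2^k \geq \Gamma_\min$, i.e.\ $k \leq \log_2(\lambda_\max/\Gamma_\min)$, so summing the per-annulus bound over these $\bigO{\log(\lambda_\max/\Gamma_\min)}$ values of $k$ yields the claimed $\bigO{R\log(\lambda_\max/\Gamma_\min)}$ query count.

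The main technical point is the combinatorial bookkeeping that $A_k$ has at most $2R$ components; once that is in hand the rest is a straightforward dyadic decomposition argument balancing the step size against the measure bound. The sanity check worth highlighting is that the lower bound $\delta s_{i+1} > c_0/2^{k+3}$ applies to the forward step from any $s_i \in A_k$ regardless of which annulus $s_{i+1}$ belongs to, which is exactly what makes the per-component counting $|J|\cdot 2^{k+3}/c_0 + 1$ valid and lets the whole bound telescope cleanly across $k$.
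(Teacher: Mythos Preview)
Your proof is correct and follows essentially the same approach as the paper: decompose the checkpoints into the dyadic annuli $A_k = I_k \setminus I_{k+1}$, observe that each annulus has at most $2R$ components, use the step-size lower bound $\delta s_{i+1} > c_0/2^{k+3}$ together with $\mu(A_k) \leq C/2^k$ to get $|S_k| \leq 8C/c_0 + 2R$, and sum over the $\bigO{\log(\lambda_\max/\Gamma_\min)}$ relevant values of $k$. Your justification that $A_k$ has at most $2R$ components is a bit more explicit than the paper's, but the argument is otherwise identical.
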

\begin{proof}
Let $0=s_0 < s_1 < \dots < s_{q} = 1$ be the sequence of checkpoints determined by \cref{alg:BAA}. At each $s= s_i$, \cref{alg:BAA} chooses the subsequent point as
\begin{equation} \label{eq:ds}
    s_{i+1} = s_i + \frac{c_0\oracle(s_i)}{4\lambda_\max}, 
\end{equation}
(unless $c_0\Gamma(s_i)/4\lambda_\max > 1-s_i$, in which case it simply sets $s_{i+1} = 1$ and makes the final step in the evolution).  Write $I_k\setminus I_{k+1} = \bigcup_j J_j$, where the $J_j$ are disjoint intervals. Since $I_k$ and $I_{k+1}$ can each be written as the union of $R$ intervals, $I_k\setminus I_{k+1}$ can be written as the union of no more than $2R$ intervals. Define the set $S_k \coloneqq \{s_i \:|\: s_i \in I_k\setminus I_{k+1}\}$. Then, for every $s_i \in S_k$, we have $s_{i+1} - s_i = {c_0\oracle(s_i)}/{4\lambda_\max} > {c_0}/(4\cdot 2^{k+1})$, whence
\[
    \mu(J_j) \geq \sum_{[s_i,s_{i+1}] \subseteq J_j}(s_{i+1}-s_i) 
    = \sum_{\substack{s_i \in  J_j \cap S_k \\ s_{i+1} \in J_j}}(s_{i+1}-s_i) 
    \geq \frac{c_0}{4 \cdot 2^{k+1}}\left(\abs*{S_k \cap J_j}-1\right).
\]
or $\abs*{S_k \cap J_j} \leq 8\cdot 2^k\mu(J_j)/c_0 + 1$, where in the last line we use the fact that for any interval $J_j$, there exists at most one $s_i \in J_j$ for which $s_{i+1} \notin J_j$. Hence,
\begin{align}
    \abs*{S_k} &= 
    \sum_j\abs*{S_k \cap J_j} \nonumber\\\nonumber
    &\leq \sum_j\left(\frac{8\cdot 2^k\mu(J_j)}{c_0} + 1 \right) \\\nonumber
    &\leq \frac{8\cdot 2^k\mu(I_k)}{c_0} + 2R \\
    &\leq \frac{8C}{c_0} + 2R \label{Sk_bound}
\end{align}
for any $k$. Noting that the $S_k$ are disjoint and that $S_k = \emptyset$ for $k > \floor{\log(\lambda_\max/\Gamma_\min)}$, the total number of checkpoints $s_i$ is therefore
\begin{align}
    q +1 &= \sum_{k=0}^{\infty} |S_k| \nonumber \\
    &= \sum_{k=0}^{\floor*{\log(\lambda_\max/\Gamma_\min)}}|S_k| \nonumber\\ &\leq \sum_{k=0}^{\floor*{\log(\lambda_\max/\Gamma_\min)}}\left(\frac{8C}{c_0}+2R\right) \nonumber\\
    &=\left(\floor*{\log\left(\frac{\lambda_\max}{\Gamma_\min}\right)}+1\right)\left(\frac{8C}{c_0} + 2R\right). \nonumber
\end{align}
Since \cref{alg:BAA} makes one query to $\textsc{GetGap}$ at each $s_i < 1$, it follows that $q =\mathcal{O}(R\log({\lambda_\max}/\Gamma_\min))$ queries are made in total. \end{proof}

In particular, \cref{thm:queries} implies that if $\Gamma(s) = \Theta(\gamma(s))$ and $R = \mathcal{O}(1)$, then $\mathcal{O}(\log(\lambda_\max/\Gamma_\min))$ queries are required.

Recall that the final step of \cref{alg:BAA} is a call to  \textproc{GenerateState} (\cref{alg:adiabatic}), with the profile $\vec{\gamma}=\{(s_i,\gamma_i)\}_i$ determined by \cref{alg:BAA} as input. The following theorem bounds the total runtime of $\Call{GenerateState}$ using the schedule produced by \cref{alg:BAA}.

\begin{thm}\label{thm:adiabatic_prep}
Let all quantities be defined as in \cref{thm:queries_redux}. Under the conditions of \cref{thm:queries_redux}, if $\vec{\gamma}$ is the array constructed by \cref{alg:BAA}, then $\Call{GenerateState}{\vec\gamma,0,\epsilon}$ takes time $\bigO{R\epsilon^{-1}{\lambda_\max}/{\Gamma_\min}}$. 
\end{thm}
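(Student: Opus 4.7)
The plan is to sum the individual evolution times produced in line \ref{alg1:time} of \cref{alg:adiabatic} over all checkpoints, using the same sublevel-set partition that powered \cref{thm:queries_redux}. Each call inside the while loop of \textsc{GenerateState} sets
\[
T_i = \left(c_0 + \tfrac{7}{4}c_0^2\right)\frac{\lambda_\max}{\epsilon\,\gamma_{i,\min}},
\]
where $\gamma_{i,\min}$ is the lower bound on the gap in $[s_i,s_{i+1}]$ computed in line \ref{alg1:gamma_min} via \cref{intersect}, i.e.\ $\gamma_{i,\min} = \tfrac{1}{2}(\Gamma(s_i)+\Gamma(s_{i+1})) - 2(s_{i+1}-s_i)\lambda_\max$. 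The total runtime of \textsc{GenerateState} is $\sum_{i=0}^{q-1} T_i$.

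The first step is to replace $\gamma_{i,\min}$ by something that depends only on the oracle value at $s_i$. By line \ref{alg:BAA_oracle_ds} of \cref{alg:BAA}, the step size satisfies $s_{i+1}-s_i \le c_0\Gamma(s_i)/(4\lambda_\max)$, so that $2(s_{i+1}-s_i)\lambda_\max \le c_0\Gamma(s_i)/2$. Plugging this into \cref{intersect} and dropping the non-negative $\Gamma(s_{i+1})/2$ term gives
\[
\gamma_{i,\min} \;\ge\; \frac{1-c_0}{2}\,\Gamma(s_i),
\]
which (for $c_0<1$) yields $T_i = \mathcal{O}\!\left(\lambda_\max/(\epsilon\,\Gamma(s_i))\right)$ with an implicit constant depending only on $c_0$.

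Next, I would partition the checkpoints by the sublevel sets $I_k = \{s : \Gamma(s)\le \lambda_\max/2^k\}$ used in \cref{thm:queries_redux}. For $s_i\in S_k \coloneqq \{s_j : s_j\in I_k\setminus I_{k+1}\}$ one has $\Gamma(s_i) > \lambda_\max/2^{k+1}$, so $T_i = \mathcal{O}(2^k/\epsilon)$. Combining this with the bound $|S_k|\le 8C/c_0 + 2R = \mathcal{O}(R)$ established in equation \eqref{Sk_bound} of the proof of \cref{thm:queries_redux}, together with the fact that $S_k$ is empty once $k > \lfloor\log(\lambda_\max/\Gamma_\min)\rfloor$, gives
\[
\sum_{i=0}^{q-1} T_i \;=\; \sum_{k=0}^{\lfloor\log(\lambda_\max/\Gamma_\min)\rfloor} \sum_{s_i\in S_k} T_i \;=\; \mathcal{O}\!\left(\frac{R}{\epsilon}\right)\sum_{k=0}^{\lfloor\log(\lambda_\max/\Gamma_\min)\rfloor} 2^k \;=\; \mathcal{O}\!\left(\frac{R\,\lambda_\max}{\epsilon\,\Gamma_\min}\right),
\]
since the geometric sum is dominated by its largest term.

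The only nontrivial step is the first one: one has to be careful that the algorithmic lower bound $\gamma_{i,\min}$ computed from the oracle values does not degrade by more than a constant factor relative to $\Gamma(s_i)$, which is exactly what the step-size rule in line \ref{alg:BAA_oracle_ds} was designed to guarantee. After that, the proof is essentially a book-keeping exercise: bundle checkpoints according to the magnitude of the local gap, invoke the $|S_k|\le\mathcal{O}(R)$ bound already proved, and sum a geometric series.
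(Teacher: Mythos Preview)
Your proposal is correct and follows essentially the same route as the paper: bound each $T_i$ by a constant times $\lambda_\max/(\epsilon\,\Gamma)$ at a checkpoint, group checkpoints by the dyadic sublevel sets $S_k$, invoke the bound \eqref{Sk_bound} on $|S_k|$, and sum the resulting geometric series. The only cosmetic difference is that the paper keeps the $\Gamma(s_{i+1})$ term in the lower bound on $\gamma_{i,\min}$ (obtaining $\gamma_{i,\min}>\tfrac{1}{2}\Gamma(s_{i+1})$) and then reindexes the sum, whereas you drop that term and keep $\Gamma(s_i)$; both choices give the same big-$\mathcal{O}$ bound.
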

\begin{proof} 
When \cref{alg:BAA} calls \textsc{GenerateState} (\cref{alg:adiabatic}), the evolution time $T_i$ over the interval $[s_i, s_{i+1}]$ is set in Line \ref{alg1:time} of \cref{alg:adiabatic} as 
\[ T_i = \left(c_0 + \frac{7}{4}c_0^2\right)\frac{\lambda_\max}{\epsilon\gamma_{i,\min}}, \]
where
\[ \gamma_\min = \frac{1}{2}(\Gamma(s_i) + \Gamma(s_{i+1})) -2(s_{i+1} - s_i)\lambda_\max.\]
By \cref{eq:ds}, the step sizes are chosen by BAA such that $s_{i+1} - s_i = c_0\Gamma(s_i)/4\lambda_\max$, so 
\begin{equation*} \gamma_{i,\min} = \frac{1}{2}(\Gamma(s_i) + \Gamma(s_{i+1}) - c_0\Gamma(s_i)) > \frac{1}{2}\Gamma(s_{i+1})\end{equation*}
since $c_0 \in (0,1)$, and hence
\[ T_i < 2\left(c_0 + \frac{7}{4}c_0^2\right)\frac{\lambda_\max}{\epsilon\Gamma(s_{i+1})}. \] Thus,
the total evolution time $T = \sum_{i=0}^{q-1}T_i$ is bounded as
\begin{align*}
    T
    &< 2\epsilon^{-1}\left(c_0 + \frac{7}{4}c_0^2\right)\sum_{i=0}^{q-1}\frac{\lambda_\max}{\Gamma(s_{i+1})} \\
    &< 2\epsilon^{-1}\left(c_0 + \frac{7}{4}c_0^2\right) \sum_{k=0}^{\infty}\sum_{s_i\in S_k}\frac{\lambda_\max}{\Gamma(s_i)} \\
    &< 2\epsilon^{-1}\left(c_0 + \frac{7}{4}c_0^2\right)\sum_{k=0}^{\infty}\sum_{s_i\in S_k}2^{k+1} \\
    &= 2\epsilon^{-1}\left(c_0 + \frac{7}{4}c_0^2\right)\sum_{k=0}^{\lfloor \log(\lambda_\max/\Gamma_\min)\rfloor}|S_k|2^{k+1} \\
    &\leq 2\epsilon^{-1}\left(c_0 + \frac{7}{4}c_0^2\right)\left(\frac{8C}{c_0} + 2R\right)\sum_{k=0}^{\lfloor \log(\lambda_\max/\Gamma_\min)\rfloor}2^{k+1} \\
    &< 8\epsilon^{-1}\left(c_0 + \frac{7}{4}c_0^2\right)\left(\frac{8C}{c_0} + 2R\right)\left(\frac{\lambda_\max}{\Gamma_\min}\right)\\
    &=\mathcal{O}\left(\epsilon^{-1}R\frac{\lambda_\max}{\Gamma_\min}\right),
\end{align*}
where $S_k$ is defined as in the proof of \cref{thm:queries} and the second last inequality follows from \cref{Sk_bound}.
\end{proof}
The next theorem demonstrates that \cref{alg:BAA} successfully produces a state within $\bigO{\epsilon}$ of the ground state of $H(1)$ in time $\bigO{\epsilon^{-1}\lambda_\max/\gamma_\min}$.
\begin{thm} \label{thm:adiabatic_final}
Under the conditions of \cref{thm:adiabatic_prep}, if $\Gamma \leq \gamma$, $\Gamma_\min = \Theta(\gamma_\min)$, $R = \bigO{1}$, and $P(1)$ is the projector onto the ground state of $H(1)$, then \cref{alg:BAA} produces a projector $\widetilde{P}(1)$ such that $\norm{\widetilde{P}(1) - P(1)} = \bigO{\epsilon}$ in time $\bigO{\epsilon^{-1}\lambda_\max/\gamma_\min}$.
\end{thm}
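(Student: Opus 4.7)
The plan is to split the claim into a runtime bound and an error bound, and reduce each to earlier results in this section: the runtime follows directly from \cref{thm:adiabatic_prep}, while the error bound reduces to \cref{thm:adiabatic4} applied through \cref{thm:adiabatic3}.

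For the runtime, the hypotheses assumed here are precisely those of \cref{thm:adiabatic_prep}, which supplies total evolution time $\bigO{R\epsilon^{-1}\lambda_\max/\Gamma_\min}$. Substituting $R = \bigO{1}$ and $\Gamma_\min = \Theta(\gamma_\min)$ collapses this bound to $\bigO{\epsilon^{-1}\lambda_\max/\gamma_\min}$, matching the claim.

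For the error, I would verify that the schedule produced by \cref{alg:BAA} meets the step-size hypothesis of \cref{thm:adiabatic4}. The step sizes $\delta s_i = c_0 \Gamma(s_i)/(4\lambda_\max)$ from line \ref{alg:BAA_oracle_ds} of \cref{alg:BAA}, together with $\Gamma \leq \gamma$ and $\|\dot H\| \leq 2\lambda_\max$, give $\delta s_i \leq c_0 \gamma(s_i)/(2\|\dot H\|)$; \cref{thm:weyl} and \cref{thm:delta_s} then show that $\gamma_{i,\min}$ stays within a constant factor of $\gamma(s_i)$ on each subinterval, so the condition $\delta s_i \leq c_0' \gamma_{i,\min}/(2\|\dot H\|)$ required by \cref{thm:adiabatic4} holds for some $c_0'>0$. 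With the evolution time $T_i$ set in line \ref{alg1:time} of \cref{alg:adiabatic}, each per-step adiabatic error $A_i$ from \cref{thm:adiabatic2} is controllable after accounting for the $1/\lambda_\max$ rescaling of $H$ in line 11, and \cref{thm:adiabatic3} sums them across the checkpoints to give an overall bound.

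The principal technical subtlety is reconciling the per-step time $T_i \sim \lambda_\max/(\epsilon\gamma_{i,\min})$ that \cref{alg:adiabatic} actually uses with the hypothesis $T_i \gtrsim q/(\epsilon\gamma_{i,\min})$ appearing in \cref{thm:adiabatic4}. Since \cref{thm:queries_redux} yields $q = \bigO{\log(\lambda_\max/\Gamma_\min)}$, the two scales agree up to a logarithmic factor in the regimes of interest, which is absorbed into the $\bigO{\epsilon}$. A cleaner route is to invoke \cref{thm:adiabatic5}, which shows that subdividing the evolution cannot worsen the summed adiabatic error compared with a single-shot bound of the form in \cref{A}; combined with the measure assumption controlling the density of small sublevel sets of $\Gamma$, this yields $\bigO{\epsilon}$ total error directly without any polylog overhead, and matches the runtime target from \cref{thm:adiabatic_prep}.
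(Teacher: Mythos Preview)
Your approach is essentially the paper's: the runtime bound comes directly from \cref{thm:adiabatic_prep} after substituting $R=\bigO{1}$ and $\Gamma_\min=\Theta(\gamma_\min)$, and the error bound comes from observing that since $\Gamma\le\gamma$, the $T_i$ set in Line~\ref{alg1:time} of \cref{alg:adiabatic} is at least $(c_0+\tfrac{7}{4}c_0^2)\lambda_\max/(\epsilon\min_{s\in[s_i,s_{i+1}]}\gamma(s))$, so \cref{thm:adiabatic4} applies. The paper's own proof is two sentences and simply asserts this without discussing the factor of $q$ in the hypothesis of \cref{thm:adiabatic4}; you are being more scrupulous than the paper here, not less, and your suggestion to fall back on \cref{thm:adiabatic5} is a reasonable way to close the gap the paper leaves open.
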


\begin{proof}
    This is an immediate consequence of \cref{thm:adiabatic_prep} and the fact \cref{alg:adiabatic}, Line \ref{alg1:time} sets  $T_i$ to be larger than required by \cref{thm:adiabatic4}. That is, for each $T_i$ in \cref{thm:adiabatic_prep},
        \begin{equation*}
            T_i \geq \left(c_0 + \frac{7}{4}c_0^2\right)\frac{\lambda_\max}{\epsilon \min_{s \in [s_i,s_{i+1}]}\gamma(s)}.
        \end{equation*}
    Thus, \cref{thm:adiabatic4} yields the error bound $\norm{\widetilde{P}(1) - P(1)} = \bigO{\epsilon}$. Applying the fact that $\Gamma_\min = \Theta(\gamma_\min)$ and $R = \bigO{1}$ to \cref{thm:adiabatic_prep} yields the runtime bound $\bigO{\epsilon^{-1}\lambda_\max/\gamma_\min}$.
\end{proof}

\section{BAA on the complete graph}\label{sec:baa_cg}

In the following sections, we justify our assumption of the existence of the gap oracle \textsc{GetGap} used by \cref{alg:BAA} by explicitly constructing \textsc{GetGap} for a more specific---though still rather general---class of optimization problems. We consider the setting where the initial Hamiltonian $H_0$ is the combinatorial Laplacian $L$ of the complete graph on a vertex set $\mathcal{V}$ of size $V$ and the final Hamiltonian $H_1$ is a diagonal matrix $W$ whose entries $W_{uu} \eqqcolon W_u$ are the values of an unknown cost function on $\mathcal{V}$. \cref{fig:problems} shows how difficult performing adiabatic optimization in this simple scenario can be. Even in the much more restricted case where $W$ is proportional to the Grover cost function, i.e., $W =\mathrm{diag} (0,C,C,\dots,C)$ for some $C > 0$, the gap profile can vary significantly for different values of $C$, and so viable annealing schedules cannot be guessed reliably. In particular, the position of the minimum gap shifts dramatically for small changes to $C$, and as can be seen from \cite{roland2002quantum}, devising near-optimal schedules may require locating the minimum gap to within exponentially small error. It is also clear from \cref{fig:problems} that arbitrary optimization problems do not fit the profile of Grover search. A general-purpose strategy that attempts to use the same schedule for all cost functions therefore seems untenable.
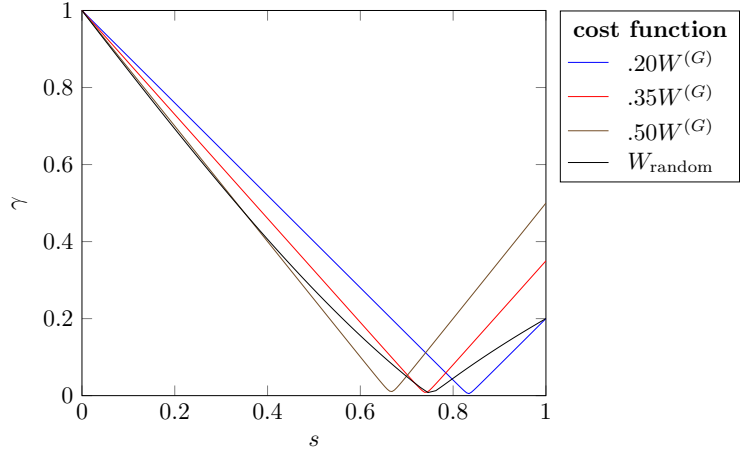
\begin{figure}\centering
\newcommand{\N}{4096}
\begin{tikzpicture}[scale=0.9]
    \begin{axis}[xmin=0,xmax=1,ymin=0,ymax=1,xlabel=$s$,ylabel=$\gamma$,legend pos=outer north east]
    \addlegendimage{empty legend}
    \addlegendentry{\hspace{-.6cm}\textbf{cost function}}
      \foreach \w in {.20,.35,.50}
      {
        \addplot plot[samples=200,domain=0:1, mark=none] function {((1-(1-\w)*x)**2 - 4*\w*x*(1-x)*(\N-1)/(\N))**(.5)}; \expandafter\addlegendentry\expandafter{\w$W^{(G)}$}
      }
        \addlegendentry{$W_{\text{random}}$}
      \addplot plot[no marks, black] coordinates{
      (0.000000,1.000000) (0.016949,0.973431) (0.033898,0.946943) (0.050847,0.920540) (0.067797,0.894226) (0.084746,0.868007) (0.101695,0.841887) (0.118644,0.815872) (0.135593,0.789967) (0.152542,0.764180) (0.169492,0.738516) (0.186441,0.712982) (0.203390,0.687586) (0.220339,0.662334) (0.237288,0.637236) (0.254237,0.612299) (0.271186,0.587532) (0.288136,0.562946) (0.305085,0.538548) (0.322034,0.514351) (0.338983,0.490363) (0.355932,0.466596) (0.372881,0.443061) (0.389831,0.419770) (0.406780,0.396733) (0.423729,0.373964) (0.440678,0.351473) (0.457627,0.329272) (0.474576,0.307374) (0.491525,0.285789) (0.508475,0.264529) (0.525424,0.243604) (0.542373,0.223023) (0.559322,0.202797) (0.576271,0.182932) (0.593220,0.163437) (0.610169,0.144318) (0.627119,0.125580) (0.644068,0.107229) (0.661017,0.089268) (0.677966,0.071707) (0.694915,0.054560) (0.711864,0.037870) (0.728814,0.021808) (0.745763,0.008031) (0.762712,0.012869) (0.779661,0.027303) (0.796610,0.042100) (0.813559,0.056716) (0.830508,0.071066) (0.847458,0.085133) (0.864407,0.098916) (0.881356,0.112422) (0.898305,0.125658) (0.915254,0.138634) (0.932203,0.151362) (0.949153,0.163852) (0.966102,0.176114) (0.983051,0.188160) (1.000000,0.200000)
      };
   \end{axis}
\end{tikzpicture}
\caption{\label{fig:problems} The spectral gap as a function of $s$ for a few Grover-type problems and a random optimization problem. Here, $W_{\text{random}}$ represents the random problem, while $W^{(G)}$ is the Grover cost function. Note that although $W$ is initially close to $0.50W^{(G)}$, it rapidly diverges and is not well approximated by any Grover-type problem. We have used $V = 2^{12}$ vertices (or $12$ qubits).}
\end{figure} 
We overcome these difficulties by designing an efficient algorithm for the gap oracle, and using it in BAA to find an appropriate schedule. 
Unlike in the approach of \cite{roland2002quantum}, we will not be given the promise that $W$ has at most two unique, known eigenvalues. Instead, indexing the vertices so that $W_{u_0}< W_{u_1} \leq \dots \leq W_{u_{V-1}}$, we require the following weaker promises: 
\begin{enumerate}
    \item $W_{u_0} = 0$,
    \item a lower bound on the spectral gap $W_{u_1} - W_{u_0} = W_{u_1}$,
    \item an upper bound on the spectral ratio $W_{u_{V-1}}/W_{u_1}$, and
    \item $\norm{W}\leq V$.
\end{enumerate}
Assumption 2 is necessary to guarantee that the adiabatic process has a finite spectral gap at or near $s=1$. Cases in which Assumption 2 is violated may be interesting, however these are cases in which the adiabatic process itself cannot be guaranteed to be successful by a gap-dependent adiabatic theorem. Assumption 3 merely facilitates our analysis, guaranteeing that the spectral ratio does not increase with $V$. That is, since we work with the combinatorial Laplacian where $\lambda_\max = V$, we require that $W: \mathcal{V} \longrightarrow [0,V]$, or $W$ is merely a rescaled version of ${W}/{\lambda_\max}:V \longrightarrow [0,1]$. Analyzing other situations would be interesting and presumably require only modest, but unnecessarily technical adaptations of the theorems that follow. Restrictions on the distribution of $W$ can also yield better scaling, such as in the case that $W$ is proportional to the Grover cost function; however, we focus on the most general case here. \footnote{For an algorithm that achieves better scaling in these instances, see \cref{alg:optimal}.}

Assumption 1 is the only assumption of which the reader should be suspicious. It can actually be relaxed slightly to $0\leq W_{u_0} < \epsilon V^{2/3}$. On the scale of the problem, this ends up being an exponentially small distinction, so we will proceed with Assumption 1 and leave generalizations to the interested reader. Combined with Assumption 2, this also guarantees that $W$ has a unique minimum. 

In contrast to the first three assumptions, Assumption 4 is quite arbitrary and just simplifies our presentation, since it imposes the constraint that $\norm{W} \leq \norm{L}=V$. An alternative presentation might use the fact that $\norm{W} = \kappa(W)\gamma(W)$ and a similar analysis would follow, however the theorems become a bit more cumbersome. In particular, as long as $\gamma(W)/\norm{W}$ is larger than some constant, the relaxation of Assumption 4 is trivial and left to the reader.

In \cref{sec:the_oracle}, we propose an algorithm (\cref{alg:cg_oracle}) for the gap oracle \textproc{GetGap} in \cref{alg:BAA} and discuss its behavior. The remainder of this section is then dedicated to analyzing the behavior of \cref{alg:BAA} using \cref{alg:cg_oracle} as \textproc{GetGap}.

As we will see in \cref{sec:spectral_theory}, an efficient oracle will require us to derive a tighter Cheeger inequality particular to the complete graph. This and other useful facts arising from the spectral theory of the complete graph will be explored in \cref{sec:spectral_theory}. In \cref{sec:statistics}, we will show how to apply these tools to build our oracle. Finally, in \cref{sec:runtime} we will analyze the full runtime of \cref{alg:BAA} using the oracle of \cref{alg:cg_oracle}.

\subsection{Explicit construction of a gap oracle}\label{sec:the_oracle} 

In this section, we present a function \textproc{GetGap} to be used in \cref{alg:BAA} when $H(0)=L$, the combinatorial Laplacian of the complete graph on $V$ vertices. After some balancing of parameters, each query to our oracle \textproc{GetGap} requires at most time $\bigO{V^{2/3}}$, where the actual time depends upon the ratio of the largest to second smallest eigenvalues of $H(1)$. Thus, we seek only similar scaling from \cref{alg:BAA}. 
\clearpage
\begin{algorithm}
 \captionof{algorithm}{\label{alg:cg_oracle}Complete graph oracle}
 \begin{algorithmic}[1]
    \Require A failure probability $p$, the number of vertices $V$, the cost function $W$, a lower bound $\chi V \leq W_{u_1}$, an upper bound $\kappa \geq W_{u_{V-1}}/W_{u_1}$, a universal constant $c_0 \in (0,1)$
 \item[]
 \State Global $\mathcal{W} \gets \emptyset$ \Comment{Initialize $\mathcal{W}$ to be empty}
 \State Global $S_\min \gets 0$
 \State \label{alg:cg_oracle:xmin} Global $x_\min \gets \max\left\{\dfrac{1}{\kappa^3}[\kappa-1)(V-1)]^{2/3},2(1+c_0)\sqrt{V} \right\}$
 \State \label{alg:cg_oracle:n} Global $n \gets \max\left\{\left\lceil \left(\dfrac{1+c_0}{1-c_0}\right)^2\dfrac{5(V-1)^2(\kappa-1)^2}{8c_0^2  x_\min^2}\ln\left(\dfrac{2}{p}\right)\right\rceil,1\right\}$  \Comment{Choose $n$ by \cref{thm:oracle_construction}}
 \item[]
 \Function{GetGap}{$s,\delta s,\gamma$} \label{alg:cg_oracle:start}
            \If{$S_\min > 0$} \Return \Call{FinishSchedule}{$s,\delta s,\gamma$} \label{alg:cg_oracle:sm_check} \EndIf
            \State $x_0 \gets (1+c_0)\left(\dfrac{\gamma}{1-s} + 1\right)$ \Comment{Extract $x_0$ from previous gap bound} 
            \If{$s = 0$} \label{alg:cg_oracle:x0} $x_0 \gets V$\EndIf
            \State  $x_0 \gets \Call{FindRoot}{s,\delta s,x_0}$
            \If{$x_0 = 0$} \Return \Call{FinishSchedule}{$s,\delta s, \gamma$} \EndIf
            \State \Return $(1-s-\delta s)\left(\dfrac{x_0}{1+c_0}-1\right)$  \Comment{Lower bound the gap using \cref{thm:cheeger,thm:oracle_construction}}
            \label{alg:cg_oracle:start_end}
 \EndFunction
 \item[]
 \Function{$\widetilde\Theta$}{$s,x$}
    \While{$\abs{\mathcal{W}} < n$} \label{alg:cg_oracle:populate} \Comment{Populate $\mathcal{W}$}
        \State $w \gets 0$
        \While{$w = 0$} \label{alg:cg_oracle:w0}  $w \gets \bra{v}W\ket{v}$ for a random vertex $v \in \mathcal{V}$
        \EndWhile
        \State Append $w$ to $\mathcal{W}$
    \EndWhile 
 \State \Return $\dfrac{V-1}{n}\sum_i \left(\dfrac{s}{1-s}\mathcal{W}_i + x\right)^{-1} +\dfrac{1}{x} - 1$ \label{alg:cg_oracle:17}
 \EndFunction
 \item[]
 \Function{FindRoot}{$s,\delta s,x_0$} 
 \If{$\widetilde\Theta(s + \delta s, x_\min)  < 0$ or $x_0 \leq x_\min$}    \label{alg:cg_oracle:if}
 \State \label{alg:cg_oracle:sm}$S_\min \gets s+\dfrac{4(1-s) x_\min}{(1-c_0)^2\chi V}$  \Comment{Upper bound $s_\min$ using \cref{cor:cheeger,cor:smin}}
 \State \label{alg:cg_oracle:sm_end} \Return $0$ 
 \EndIf 
 \State $I \gets \left[\dfrac{(1-c_0)^2}{1+c_0}x_0,\dfrac{(1+c_0)^2}{1-c_0}x_0\right]$ \Comment{Interval from \cref{prop:interval}}
 \State \Return $x \in I$ such that $\widetilde\Theta\left(s+\delta s ,x\right) \approx 0$. \Comment{Root of \cref{eqn:theta_delta}}
 \EndFunction
 \item[]
 \Function{FinishSchedule}{$s,\delta s,\gamma$} 
 \If {$s \leq S_\min$} 
    \State $\gamma \gets \max\left\{(1-c_0)\gamma ,(1-s-\delta s)\sqrt{V-1}/{\kappa^4} \right\}$\Comment{Bound by \cref{thm:weyl,prop:large_side}} \label{alg:cg_oracle:near_minimum}
    \State \Return $\gamma$ \label{alg:cg_oracle:smin}  
 \EndIf
 \State \label{alg:cg_oracle:envelope} $\gamma \gets \dfrac{\chi(V-2)}{4\kappa^5}  (s+\delta s-S_\min) + \dfrac{V-2}{2\kappa^4\sqrt{V-1}}(1-S_\min)$ \Comment{Use linear lower bound from \cref{thm:linear-envelope}}
 \State \Return $\gamma$ \label{alg:cg_oracle_finish_end}
 \EndFunction    
 \end{algorithmic}
\end{algorithm}

\clearpage

The algorithm consists of three major parts. Prior to the gap minimum when the ground state has small amplitude on all vertices, we use a classical root-finding method to determine the appropriate Cheeger constant of $H$. While nearing the minimum $s_\min$, we hit a region where $s < s_\min$ and the root-finding algorithm is no longer an efficient method for determining the Cheeger constant to desired precision. At this point, we use an analytic lower bound on the gap and bound the parameter $s_\min$ determined in the first half of the algorithm until we are guaranteed that $s > s_\min$. Finally, we use a linear envelope to complete the schedule.

\subsection{Spectral graph theory}\label{sec:spectral_theory}

Consider a Hamiltonian $H = L + W$ where $L$ is the combinatorial graph Laplacian of the unweighted complete graph on $V$ vertices and $W$ is any matrix diagonal in the basis of vertices, with diagonal entries $W_u$. Such a Hamiltonian is stoquastic, and has non-negative eigenvalues and real eigenvectors. Let $\lambda_0 < \lambda_1 \leq \dots \leq \lambda_{V-1}$ be the eigenvalues of $H$ with corresponding normalized eigenvectors $\phi_0,\phi_1,\dots,\phi_{V-1}$, and let $\gamma \coloneqq \lambda_1 - \lambda_0$ denote the spectral gap of $H$. (It follows from the Perron-Frobenius theorem that $\lambda_0 <\lambda_1$ and that we can always choose the ground state $\phi_0$ to have strictly positive components.) By definition, the eigenvectors satisfy the equations
\begin{equation}\label{eqn:eigenvectors}
    {(V + W_u -\lambda_i) \phi_i(u) = \sum_{v} \phi_i(v)},
\end{equation} 
for all vertices $u$, where $\phi_i(u)$ denotes the component of $\phi_i$ corresponding to $u$ and the summation is over the set $\mathcal{V}$ of all vertices in the graph.
For a subset $\mathcal{S} \subseteq \mathcal{V}$, we define the ratio
\begin{equation}\label{eqn:not_cheeger}
    g_{\mathcal{S}} \coloneqq \frac{\sum_{{u\in \mathcal{S}, v \notin \mathcal{S}}}\phi_0(u)\phi_0(v)}{\sum_{u \in \mathcal{S}} \phi_0(u)^2}.
\end{equation}
Then, the \textit{Cheeger ratio} corresponding to $\mathcal{S}$ is given by (see \cite{Chung2000})
\begin{equation}\label{eqn:cheeger_ratio}
    h_{\mathcal{S}} = \max\{g_{\mathcal{S}}, g_{\mathcal{V}\setminus \mathcal{S}}\}. 
\end{equation}
and the \textit{(weighted) Cheeger constant} of the Hamiltonian $H$ is
\begin{equation}\label{eqn:cheeger_constant}
    h = \min_{\mathcal{S}\subset \mathcal{V}} h_{\mathcal{S}}.
\end{equation} 
It was shown in \cite{jarret2018hamiltonian} that a general stoquastic Hamiltonian $H=L+W$, where the Laplacian $L$ need not be that of the complete graph, obeys the inequality
\begin{equation}\label{eqn:cheeger_inequality}
    2h \geq \gamma \geq \sqrt{h^2+d^2} - d
\end{equation}
where $d$ is an upper bound on the degree of the graph corresponding to $L$. (For the complete graph, $d=V-1$.) 

For $h \sim V$, this inequality gives us relatively tight control over the spectral gap of $H$, but if $h$ is small, the inequality can be quadratically loose. This problem is potentially unique to graphs of exponentially large degree. If, on the other hand, $d$ is upper bounded by a constant independent of the problem size, then \cref{eqn:cheeger_inequality} implies that $\gamma(H)\sim h$.

In our case, $d=V-1$ and, although potentially possible to obtain, we cannot prove scaling better than $\bigO{({V}/{\gamma_\min})^2} = \bigO{V}$ out of \cref{alg:BAA} using the tools of \cref{sec:adiabatic}. Thus, we wish to derive a Cheeger inequality \textit{specific to the complete graph}. A Cheeger inequality for a particular graph makes no sense classically, where the Cheeger constant maps each graph to a number. In particular, classical Cheeger inequalities need to be flexible enough to apply to \textit{any} graph. In our setting, however, all Cheeger ratios are functions of the cost function $W$ and the graph itself is treated as a known parameter, so we can use information about graph of interest and derive Cheeger inequalities special to that graph. These inequalities, by virtue of the fact that they have been tailored to a particular graph, can be much tighter than those that are expected to work on all graphs.

\subsubsection{Cheeger inequalities for the complete graph} \label{5.1}
The following fact about the second smallest eigenvector of $H$ will allow us to derive a tighter Cheeger inequality in the special case of a complete graph. 
\begin{prop}\label{thm:negative}
    Suppose that $W$ has a unique smallest eigenvalue. Then, there exists a unique $u$ such that $\phi_1(u) \phi_1(v) < 0$ for all $v \neq u$.
\end{prop}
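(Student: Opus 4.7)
The plan is to exploit the eigenvector equation \cref{eqn:eigenvectors}. For $\phi_1$, writing $S_1 \coloneqq \sum_v \phi_1(v)$, the equation reads $(V + W_u - \lambda_1)\phi_1(u) = S_1$ for every vertex $u$. If one can show that $S_1 \neq 0$ and that $\lambda_1$ lies strictly in the open interval $(V + W_{u_0},\, V + W_{u_1})$, then the closed form $\phi_1(u) = S_1/(V + W_u - \lambda_1)$ immediately reveals the sign pattern: the denominator is negative exactly at $u = u_0$ (since $W_{u_0}$ is the unique smallest weight) and strictly positive at every other vertex, so $\phi_1(u_0)$ takes the sign opposite to $S_1$ while every other $\phi_1(v)$ carries the same sign as $S_1$. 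This gives $\phi_1(u_0)\phi_1(v) < 0$ for all $v \neq u_0$, and uniqueness of the special vertex follows by a two-line sign argument: if some $u' \neq u_0$ shared this property, then $\phi_1(u_0)$ and $\phi_1(u')$ would have opposite signs, yet any third vertex $w$ would have $\phi_1(w)$ simultaneously opposite to both, a contradiction.

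The real work is in placing $\lambda_1$ in the right interval. Summing $\phi_1(u) = S_1/(V+W_u-\lambda_1)$ over $u$ shows that any eigenvalue with $S \neq 0$ is a root of the secular equation
\begin{equation*}
f(\lambda) \coloneqq \sum_{u \in \mathcal V}\frac{1}{V + W_u - \lambda} - 1 = 0.
\end{equation*}
Since $f'(\lambda) = \sum_u (V + W_u - \lambda)^{-2} > 0$, and $f$ blows up to $+\infty$ as $\lambda$ approaches any pole $V + W_u$ from below and to $-\infty$ from above, while $f(\pm\infty) = -1$, we see that $f$ has exactly one root below the smallest pole $V + W_{u_0}$ and exactly one root in each open interval between consecutive distinct poles. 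The eigenvectors with $S = 0$ live on spans of indicator differences of vertices sharing a common $W_u$ and have eigenvalues $V + W_u$; because $W_{u_0}$ is the unique smallest value of $W$, no such eigenvalue is smaller than $V + W_{u_1}$. Consequently the smallest two eigenvalues of $H$ are both roots of $f$: $\lambda_0$ is the root below $V + W_{u_0}$ (matching $\phi_0 > 0$ via Perron--Frobenius), and $\lambda_1$ is the unique root in $(V + W_{u_0},\, V + W_{u_1})$, which is simple and automatically has $S_1 \neq 0$.

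With that placement secured, the formula $\phi_1(u) = S_1/(V + W_u - \lambda_1)$ is well-defined at every vertex (the denominator never vanishes, since $\lambda_1$ is strictly inside the open interval), so every component of $\phi_1$ is nonzero. Combining with the sign analysis in the first paragraph completes the proof. The main obstacle is the secular-equation argument isolating $\lambda_1$ in $(V + W_{u_0},\, V + W_{u_1})$ and ruling out interference from $S = 0$ eigenvectors when $W$ has degeneracies among its larger values; once that is established, the rest is an algebraic read-off.
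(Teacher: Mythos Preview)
Your argument is correct, but it follows a genuinely different route from the paper's. The paper proceeds variationally: it chooses the test vector $f(u_0)=-\phi_0(u_1)$, $f(u_1)=\phi_0(u_0)$, $f=0$ elsewhere in the Rayleigh quotient to obtain $\lambda_1 < V+W_{u_1}$ directly; from the eigenvector equation this forces $(V+W_{u_k}-\lambda_1)>0$ for all $k\ge 1$, hence every $\phi_1(u_k)$ for $k\ge 1$ shares the sign of $S_1=\sum_v\phi_1(v)$, and $S_1\neq 0$ follows because otherwise $\phi_1$ would collapse to a single coordinate. The sign of $\phi_1(u_0)$ is then pinned down not by locating $\lambda_1$ above $V+W_{u_0}$ but by orthogonality to the Perron--Frobenius ground state $\phi_0>0$.

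Your approach is the classical rank-one--perturbation/interlacing analysis of $H=(VI+W)-J$: you classify all eigenvalues into roots of the secular function $f$ and the $S=0$ degeneracy eigenvalues, and read off both the interval $\lambda_1\in(V+W_{u_0},\,V+W_{u_1})$ and $S_1\neq 0$ structurally. This is more systematic and yields the sharper two-sided localisation of $\lambda_1$ for free, whereas the paper's argument is shorter and leans on the positivity of $\phi_0$ already established. Either method suffices here; yours would port more readily to settings where Perron--Frobenius is unavailable.
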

\begin{proof}
    We index the vertices as $u_0, u_1, \dots, u_{V-1}$ such that $W_{u_0}< W_{u_1} \leq \dots \leq W_{u_{V-1}}$. Using the Rayleigh quotient,
    \begin{equation*}
        \lambda_1 = \inf_{f\perp \phi_0}\frac{\langle f, Hf\rangle}{\langle f, f\rangle} = \inf_{f \perp \phi_0} \frac{ \sum_{\{u,v\}}\left(f(u)-f(v)\right)^2 + \sum_u f^2(u)W_u}{\sum_u f^2(u)}.
    \end{equation*}
    Taking $f(u_0) = -\phi_0(u_1)$, $f(u_1)=\phi_0(u_0)$, and $f(u) = 0$ for all $u \neq u_0, u_1$ demonstrates that $\lambda_1 \leq V + W_{u_1}$ with equality only if $W_{u_0}=W_{u_1}$. 
    Since $W_{u_0}< W_{u_1}$ by assumption, the inequality is strict. It then follows from \cref{eqn:eigenvectors} that $\sum_v\phi_1(v) \neq 0$ and that $\phi_1(u_k)\sum_v\phi_1(v) > 0$ for all $k \geq 1$. This in turn implies that $\phi_1(u_0)\sum_v\phi_1(v) < 0$; otherwise, all of the components of $\phi_1$ would have the same sign, contradicting $\phi_1 \perp \phi_0$ (recall that we can choose $\phi_0 > 0$ by the Perron-Frobenius theorem). Therefore, $\phi_1(u_0)\phi_1(v) < 0$ for all $v \neq u_0$.
\end{proof}
Hereafter, we label $m \coloneqq u_0$, since this will correspond to the ``marked" state that our algorithm aims to find.
\begin{thm}\label{thm:cheeger}
    Suppose that $W$ has a unique smallest eigenvalue. Letting $W_{m}< W_{u_1} \leq \dots \leq W_{u_{V-1}}$ denote the eigenvalues of $W$, if $W_{m} = 0$ and $W_{u_{V-1}}/W_{u_1} \leq \kappa$, then 
    \begin{equation*}
        \gamma \geq \max\left\{g_{\{m\}}, \frac{h_{\{m\}}}{\kappa^3} \right\}.
    \end{equation*}
\end{thm}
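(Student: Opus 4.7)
My plan is to prove the two lower bounds on $\gamma$ separately. Both rely on the explicit form that \cref{eqn:eigenvectors} forces on the lowest two eigenvectors of $H$ over the complete graph: setting $T := V - \lambda_0$ and $T_1 := V - \lambda_1$ and evaluating the equation at $u = m$ (using $W_m = 0$) gives $\sum_v \phi_i(v) = T_i\phi_i(m)$ for $i \in \{0,1\}$, and hence for $u \neq m$,
\[
\phi_0(u) = \frac{T\phi_0(m)}{T + W_u}, \qquad \phi_1(u) = \frac{T_1\phi_1(m)}{T_1 + W_u}.
\]
The proof of \cref{thm:negative} just above already established $T_1 \in (-W_{u_1}, 0)$, so $T_1 + W_u > 0$ for $u \neq m$. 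For $\gamma \geq g_{\{m\}}$, the $\phi_0$ identity together with $\phi_0 > 0$ (Perron--Frobenius) gives $g_{\{m\}} = \sum_{v \neq m}\phi_0(v)/\phi_0(m) = V - 1 - \lambda_0$. The analogous identity for $\phi_1$ at $u = m$ reads $(V - 1 - \lambda_1)\phi_1(m) = \sum_{v \neq m}\phi_1(v)$; by \cref{thm:negative} the two sides have opposite signs, forcing $\lambda_1 > V - 1$, so $\gamma > V - 1 - \lambda_0 = g_{\{m\}}$.

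For $\gamma \geq h_{\{m\}}/\kappa^3$, normalization $\|\phi_0\| = 1$ and the formula for $\phi_0(u)$ give $g_{\mathcal{V}\setminus\{m\}} = (T-1)/A = g_{\{m\}}/A$, with $A := T^2\sum_{u \neq m}(T + W_u)^{-2} = \phi_0(m)^{-2}(1 - \phi_0(m)^2)$. Whenever $A\kappa^3 \geq 1$ both $g_{\{m\}}/\kappa^3$ and $g_{\{m\}}/(A\kappa^3) = g_{\mathcal{V}\setminus\{m\}}/\kappa^3$ are at most $g_{\{m\}}$, so $h_{\{m\}}/\kappa^3 \leq g_{\{m\}} \leq \gamma$ already from the first bound. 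It remains to treat the ``ultra-localized'' regime $A\kappa^3 < 1$, in which $\phi_0(m)^2 > \kappa^3/(\kappa^3+1)$. Here I would use the orthogonality $\phi_0 \perp \phi_1$: upon substituting the explicit forms of $\phi_0(u), \phi_1(u)$ this produces the identity
\[
\sum_{u \neq m}\frac{1}{(T + W_u)(W_u - |T_1|)} = \frac{1}{T|T_1|},
\]
together with the secular identity $\sum_{u \neq m}(T + W_u)^{-1} = (T-1)/T$ coming from the normalization. Combining these two identities with the spectral-ratio hypothesis $W_u \leq \kappa W_{u_1}$ and the bound $|T_1| < W_{u_1}$ to control the summands should give a lower bound on $|T_1|$, and hence on $\gamma = T + |T_1|$, of the required form $\gamma \geq g_{\mathcal{V}\setminus\{m\}}/\kappa^3$.

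\textbf{Main obstacle.} The delicate step is obtaining exactly the $\kappa^3$ factor rather than $\kappa^4$ or higher. A pointwise application of $W_u \leq \kappa W_{u_1}$ tends to accumulate more factors than necessary because $(T + W_u)/(W_u - |T_1|)$ depends on $|T_1|$, which is itself implicitly $\kappa$-dependent through its own secular equation. Achieving $\kappa^3$ should require one factor each from (i) the upper bound $T + W_u \leq T + \kappa W_{u_1}$, (ii) bounding $W_u - |T_1|$ using both $W_u \geq W_{u_1}$ and $|T_1| < W_{u_1}$, and (iii) converting the orthogonality sum into an expression involving $A$ via the secular identity for $T$. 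The Grover-type limit $W_u \equiv W_{u_1}$ with $\kappa = 1$, where the secular equation gives $T|T_1| = W_{u_1}$, $\gamma = T + W_{u_1}/T$, and $h_{\{m\}} = (T + W_{u_1})/T$ with $\gamma = h_{\{m\}}$ attained as $T \to 1$, serves as a tight sanity check for this bookkeeping.
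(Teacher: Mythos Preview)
Your argument for $\gamma \geq g_{\{m\}}$ is correct and in fact cleaner than the paper's. You identify $g_{\{m\}} = V-1-\lambda_0$ directly from the eigenvector equation for $\phi_0$ at $m$, and then use the same equation for $\phi_1$ together with \cref{thm:negative} to get $(V-1-\lambda_1)\phi_1(m)^2 = \phi_1(m)\sum_{v\neq m}\phi_1(v) < 0$, hence $\lambda_1 > V-1$ and $\gamma > g_{\{m\}}$. The paper instead subtracts the two eigenvector equations to obtain the identity
\[
\gamma\,\phi_0(m)^2 \;=\; \phi_0(m)\sum_{v\neq m}\phi_0(v)\Bigl(1-\tfrac{f(v)}{f(m)}\Bigr),\qquad f(u):=\phi_1(u)/\phi_0(u),
\]
and then simply drops the negative terms $f(v)/f(m)$. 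Your route is more direct for this half.

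The second half, however, is incomplete. Your reduction to the regime $A\kappa^3<1$ is a nice observation (sharper than the paper's split at $A=1$), but after that you only sketch a plan based on the orthogonality sum $\sum_{u\neq m}[(T+W_u)(W_u-|T_1|)]^{-1}=1/(T|T_1|)$ and the secular identity, and you yourself flag that obtaining exactly $\kappa^3$ from these is the obstacle. You have not shown how to extract a bound of the form $T+|T_1|\geq g_{\mathcal{V}\setminus\{m\}}/\kappa^3$ from these identities; pointwise replacement of $W_u$ by $\kappa W_{u_1}$ in both factors will indeed overshoot.

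The paper avoids this difficulty by \emph{not} discarding the $f$-identity above. In the regime $\phi_0(m)^2>1/2$ it keeps the terms $-f(v)/f(m)=|f(v)|/|f(m)|>0$ and bounds them using two consequences of your own explicit formula $\phi_0(u)=T\phi_0(m)/(T+W_u)$: the ratio bound $\phi_0(u)/\phi_0(v)\geq 1/\kappa$ for all $u,v\neq m$, and the orthogonality relation $\sum_{v\neq m}|\phi_1(v)|\phi_0(v)=|\phi_1(m)|\phi_0(m)$. One factor of $\kappa$ comes from replacing each $\phi_0(v)$ in the outer sum by $\min_{u\neq m}\phi_0(u)\geq \tfrac{1}{\kappa(V-1)}\sum_{u\neq m}\phi_0(u)$; two more come from $(V-1)\phi_0(v)^2\leq \kappa^2(1-\phi_0(m)^2)$ when converting $\sum_{v\neq m}|f(v)|/(V-1)$ into $\sum_{v\neq m}|\phi_1(v)|\phi_0(v)/[\kappa^2(1-\phi_0(m)^2)]$. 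The orthogonality then collapses this last sum to $|\phi_1(m)|\phi_0(m)$, yielding
\[
\gamma \;\geq\; \frac{h_{\{m\}}}{\kappa}\Bigl(\frac{1-\phi_0(m)^2}{\phi_0(m)^2}+\frac{1}{\kappa^2}\Bigr)\;\geq\;\frac{h_{\{m\}}}{\kappa^3}.
\]
If you want to salvage your secular-equation route, the missing ingredient is precisely this ratio control on the $\phi_0$-components; without it the bookkeeping in your orthogonality sum does not close at $\kappa^3$.
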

\begin{proof}
As shown in the proof of \Cref{thm:negative}, $\phi_1(m)\phi_1(v) < 0$ for all $v \neq m$. By \cref{eqn:eigenvectors},
\begin{dgroup*}
\[
    {(V+W_u - \lambda_0)\phi_0(m)\phi_1(m) = \sum_{v}\phi_0(v)\phi_1(m)}
\]
\[
    {(V+W_u - \lambda_1)\phi_1(m)\phi_0(m) = \sum_{v}\phi_1(v)\phi_0(m)},
\]
\end{dgroup*}
whence
\[
    \gamma \phi_0(m)\phi_1(m) = \sum_v \left(\phi_0(v)\phi_1(m)-\phi_0(m)\phi_1(v)\right).
\]
Letting $f(u) \coloneqq \phi_1(u)/\phi_0(u)$ for all $u$, we have
\[
    {\gamma \phi_0(m)^2 = \frac{1}{f(m)}\sum_{v\neq m} \phi_0(v)\phi_0(m)(f(m)-f(v)) = \phi_0(m)\sum_{v\neq m}\phi_0(v)\left( 1-\frac{f(v)}{f(m)}\right)}.
\]
By \cref{eqn:not_cheeger}, $g_{\{m\}} = \sum_{v\neq m}\phi_0(v)/\phi_0(m)$, and, noting that $f(v)/f(m) < 0$ for all $v \neq m$ by \cref{thm:negative}, we have 
\begin{equation*}
    \gamma\phi_0(m)^2 \geq \phi_0(m)\sum_{v\neq m}\phi_0(v) = \phi_0(m)^2 g_{\{m\}},
\end{equation*}
so $\gamma \geq g_{\{m\}}$. 

Since the above lower bound becomes loose for large $\phi_0(m)$, we consider the case where $\phi_0(m) \geq 1/\sqrt{2}$, so that $\phi_0(m)^2 > 1-\phi_0(m)^2$ and hence $h_{\{m\}} = \phi_0(m)\sum_{v\neq m}\phi_0(v)/(1-\phi_0(m)^2)$ by \cref{eqn:cheeger_ratio}. Observing from 
\cref{eqn:eigenvectors} that $\min_{u\neq m}\phi_0(u) = \phi_0(u_{V-1})$ and that $\lambda_0 < V$, we see that
for any $v\neq m$, 
\begin{equation} \label{eqn:kappa_bound}
\frac{\min_{u\neq m}\phi_0(u)}{\phi_0(v)} \geq \frac{\phi_0(u_{V-1})}{\phi_0(v)} = \frac{V+W_v - \lambda_0}{V+ W_{u_{V-1}} - \lambda_0} \geq \frac{V+W_{u_1} - \lambda_0}{V + W_{u_{V-1}} - \lambda_0} \geq \frac{W_{u_1}}{W_{u_{V-1}}} \geq \frac{1}{\kappa},
\end{equation}
which also implies that $\phi_0(u)/\phi_0(v) \geq 1/\kappa$ for any $u,v \neq m$.
Thus,
\begin{dgroup*}
\[
    \gamma\phi_0(m)^2\hiderel{ \;}\geq \phi_0(m)\min_{u\neq m}\phi_0(u)\sum_{v\neq m}\left(1 -\frac{f(v)}{f(m)}\right)
\]\[
    \geq \phi_0(m)\left(\frac{1}{V-1}\sum_{u\neq m}\frac{1}{\kappa}\phi_0(u)\right)\left(V - 1 + \frac{1}{|f(m)|}\sum_{v\neq m}|f(v)|\right) 
\]\[    
    = \frac{1}{\kappa}\left(\phi_0(m)\sum_{u\neq m}\phi_0(u)\right)\left(1 + \frac{1}{|f(m)|}\sum_{v\neq m}\frac{|\phi_1(v)|}{(V-1)\phi_0(v)}\right) 
\]\[    
    = \frac{1}{\kappa}h_{\{m\}}(1-\phi_0(m)^2)\left(1 + \frac{1}{|f(m)|}\sum_{v\neq m}\frac{|\phi_1(v)|\phi_0(v)}{\sum_{u\neq m}\phi_0(v)^2}\right) 
\]\[
    \geq \frac{1}{\kappa}h_{\{m\}}(1-\phi_0(m)^2)\left(1 + \frac{1}{|f(m)|}\frac{\sum_{v\neq m}|\phi_1(v)|\phi_0(v)}{\sum_{u\neq m}(\kappa\phi_0(u))^2}\right) 
\]\[    
    =\frac{1}{\kappa}h_{\{m\}}(1-\phi_0(m)^2)\left(1 + \frac{\phi_0(m)}{|\phi_1(m)|}\frac{|\phi_1(m)|\phi_0(m)}{\kappa^2(1-\phi_0(m)^2)}\right) 
\]\[    
    = \frac{1}{\kappa}h_{\{m\}}\left(1-\phi_0(m)^2 + \frac{\phi_0(m)^2}{\kappa^2}\right),
\]
\end{dgroup*}
so
\[
    \gamma\hiderel{\;} \geq {\frac{1}{\kappa}h_{\{m\}}\left(\frac{1-\phi_0(m)^2}{\phi_0(m)^2} + \frac{1}{\kappa^2}\right) \geq \frac{h_{\{m\}}}{\kappa^3}.}
\]

\end{proof} 
Since the Cheeger ratio $h_{\mathcal{S}}$ of any subset $\mathcal{S} \subseteq \mathcal{V}$ upper bounds the Cheeger constant $h$, the upper bound in \cref{eqn:cheeger_inequality} implies the following.
\begin{cor}\label{cor:cheeger}
Under the conditions of \cref{thm:cheeger},
\begin{equation*}
    2 h_{\{m\}} \geq \gamma \geq \frac{h_{\{m\}}}{\kappa^3}.
\end{equation*}
\end{cor}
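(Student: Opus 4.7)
The plan is to derive both bounds directly from results already stated in the excerpt: the lower bound from \cref{thm:cheeger}, and the upper bound from the general Cheeger inequality \cref{eqn:cheeger_inequality} combined with the definition of the weighted Cheeger constant. No new estimates are needed; the corollary is a packaging statement.

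For the lower bound $\gamma \geq h_{\{m\}}/\kappa^3$, I would simply observe that \cref{thm:cheeger} gives $\gamma \geq \max\{g_{\{m\}}, h_{\{m\}}/\kappa^3\}$ under exactly the same hypotheses. Discarding the first term of the maximum yields the desired inequality immediately, so there is nothing further to prove on this side. The hypotheses (unique smallest eigenvalue, $W_m = 0$, and spectral ratio bounded by $\kappa$) are inherited verbatim, so no auxiliary verification is required.

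For the upper bound $2h_{\{m\}} \geq \gamma$, I would start from the general stoquastic Cheeger inequality \cref{eqn:cheeger_inequality}, which asserts $2h \geq \gamma$ for $H = L + W$. Since the weighted Cheeger constant is defined as the minimum $h = \min_{\mathcal{S}\subset\mathcal{V}} h_{\mathcal{S}}$, in particular we have $h \leq h_{\{m\}}$, because $\{m\}$ is one admissible choice of subset. Chaining these two facts gives $2h_{\{m\}} \geq 2h \geq \gamma$, which is the upper half of the claimed inequality.

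There is no real obstacle: the proof is a two-line synthesis. The only conceptual point worth highlighting in the write-up is that replacing the intractable global minimum $h$ by the single explicit ratio $h_{\{m\}}$ (at the cost of only a constant factor compared to the lower bound from \cref{thm:cheeger}) is precisely what makes the Cheeger estimate useful for the oracle construction in \cref{alg:cg_oracle}, since $h_{\{m\}}$ is a concrete quantity depending only on the amplitude of $\phi_0$ at the marked vertex.
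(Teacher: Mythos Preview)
Your proposal is correct and follows essentially the same argument as the paper: the lower bound is read off directly from \cref{thm:cheeger}, and the upper bound comes from combining $2h \geq \gamma$ in \cref{eqn:cheeger_inequality} with $h \leq h_{\{m\}}$ from the definition of the Cheeger constant as a minimum over subsets.
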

\cref{cor:cheeger} demonstrates that the Cheeger ratio $h_{\{m\}}$, corresponding to the cut that isolates the marked state $m$, determines the gap to within a constant factor whenever $\kappa$ is a constant independent of $V$.

For ease of presentation in what follows, we will write $\phi \coloneqq \phi_0$ for the ground state of $H$. Noting that
\begin{equation*}
    g_{\{m\}} = \frac{\norm{\phi}_1-\phi(m)}{\phi(m)} = \frac{\norm{\phi}_1}{\phi(m)} - 1,
\end{equation*}
it will be convenient to introduce the notation
\begin{equation*}
    X \coloneqq \frac{\norm{\phi}_1}{\phi(m)} = g_{\{m\}} + 1,
\end{equation*}
where $m$ is the vertex corresponding to the smallest eigenvalue of $W$.

We now prove a couple of useful facts about $X$.
\begin{prop}\label{prop:simple_facts}
Suppose that $W_m=0$ is the unique smallest eigenvalue of $W$. Then,
\begin{enumerate}
        \item $X = V - \lambda_0$ and
        \item $\sum_u (X + W_u)^{-1} = 1$ \label{prop:sf_fact2}.
    \end{enumerate}
\end{prop}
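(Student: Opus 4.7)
The plan is to extract both identities directly from the ground-state eigenvector equation \eqref{eqn:eigenvectors}, specialized to $\lambda=\lambda_0$ and $\phi=\phi_0$. Recall that since $\phi_0$ is a Perron-Frobenius ground state, it has strictly positive entries, so $\|\phi\|_1 = \sum_v \phi(v) > 0$ and divisions by $\phi(u)$ or $\|\phi\|_1$ are legal.

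For claim 1, I would simply evaluate \eqref{eqn:eigenvectors} at the marked vertex $u=m$, where $W_m=0$:
\begin{equation*}
    (V - \lambda_0)\,\phi(m) \;=\; \sum_v \phi(v) \;=\; \|\phi\|_1.
\end{equation*}
Dividing by $\phi(m)>0$ gives $X = \|\phi\|_1/\phi(m) = V-\lambda_0$.

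For claim 2, the same eigenvector equation, now at a general vertex $u$, reads $(V+W_u-\lambda_0)\phi(u) = \|\phi\|_1$. Substituting $V-\lambda_0 = X$ from claim 1, we obtain
\begin{equation*}
    \phi(u) \;=\; \frac{\|\phi\|_1}{X+W_u}
\end{equation*}
for every $u$ (the denominator is positive because $\lambda_0 < V$ by, e.g., the variational principle applied to a uniform test vector, together with $W_u \ge 0$). Summing over $u$ and using $\sum_u \phi(u) = \|\phi\|_1$ yields $\|\phi\|_1 = \|\phi\|_1 \sum_u (X+W_u)^{-1}$, and cancelling $\|\phi\|_1$ gives the required normalization identity $\sum_u (X+W_u)^{-1} = 1$.

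There is no real obstacle here: the only thing to be careful about is justifying the divisions (positivity of $\phi$ and of $X+W_u$), both of which follow from Perron-Frobenius and the fact that $\lambda_0<V$. The proposition is essentially an algebraic rewriting of the eigenvalue equation, and its value is notational: it rephrases the Cheeger-type quantity $g_{\{m\}}+1$ as the single scalar $X$ satisfying a universal self-consistency equation depending only on the spectrum of $W$, which will be the workhorse for the root-finding construction underlying \textproc{GetGap} in \cref{alg:cg_oracle}.
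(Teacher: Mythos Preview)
Your proof is correct and follows essentially the same approach as the paper: evaluate \eqref{eqn:eigenvectors} at $u=m$ to get Fact~1, then use the general-$u$ form of \eqref{eqn:eigenvectors} summed over all vertices and divide by $\|\phi\|_1$ for Fact~2. Your added remarks on positivity (Perron--Frobenius and $\lambda_0<V$) make explicit what the paper leaves implicit, but otherwise the arguments coincide.
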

\begin{proof}
Since $W_m = 0$, \cref{eqn:eigenvectors} gives
\begin{equation*}
(V - \lambda_0)\phi(m) = \norm{\phi}_1,
\end{equation*}
so
\begin{equation*}
    V -\lambda_0 = \frac{\norm{\phi}_1}{\phi(m)} = X,
\end{equation*}
which establishes Fact 1. 

Then,
\begin{equation}
    \norm{\phi}_1 = \sum_u\phi(u) = \sum_u\frac{\norm{\phi}_1}{V-\lambda_0 + W_u} = \norm{\phi}_1 \sum_u \left(X + W_u  \right)^{-1},
\end{equation}
and Fact 2 follows upon dividing both sides by $\norm{\phi}_1$.
\end{proof}

Using the above results, we can obtain a perturbative bound for $g_{\{m\}}$. For $s\in [0,1)$, consider the Hamiltonian
\begin{align*} H(s) &= (1-s)L + sW = (1-s)\left(L + \frac{s}{1-s}W\right) \eqqcolon (1-s)G(s)
\end{align*}
and let $g_{\mathcal{S}}(s)$, $h_{\mathcal{S}}(s)$, and $h(s)$ denote the quantities defined in \cref{eqn:not_cheeger,eqn:cheeger_ratio,eqn:cheeger_constant} corresponding to $G(s)$. It is clear that the results we have proven for $H=L+W$ extend directly to $G(s)$.

\begin{prop}\label{prop:h_change} 
Suppose that $W_m=0$ is the unique smallest eigenvalue of $W$ and that $\|W\| \leq V$. If $0 \leq \delta s \leq \dfrac{c_0}{4V}g_{\{m\}}(s)(1-s)$ for some $c_0 \in (0,1)$, then 
    \begin{equation*}
        {\abs*{g_{\{m\}}(s+\delta s) - g_{\{m\}}(s)} \leq c_0 g_{\{m\}}(s).}
    \end{equation*}
\end{prop}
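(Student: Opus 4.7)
The plan is to translate the statement about $g_{\{m\}}(s)$ (which is defined implicitly through the ground state of $G(s)$) into a statement about the ground state energy of the \emph{linear} Hamiltonian $H(s) = (1-s)L + sW$, at which point Weyl's inequality from \cref{thm:weyl} and elementary algebra finish the proof. The translation rests on \cref{prop:simple_facts}: applied to $G(s) = L + (s/(1-s))W$---whose minimum diagonal entry $0$ is uniquely attained at $m$---Fact~1 gives $X(s) = V - \lambda_0(G(s))$. Combined with the identity $H(s) = (1-s)G(s)$, this yields the key identity
\[
\lambda_0(H(s)) \;=\; (1-s)\bigl(V - X(s)\bigr) \;=\; (1-s)\bigl(V - 1 - g_{\{m\}}(s)\bigr).
\]

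Since $H(s)$ is linear in $s$ with $\dot H = W - L$ and $\|W\|,\|L\| \leq V$ (the complete-graph Laplacian $L = VI - J$ has spectrum $\{0,V\}$), we have $\|\dot H\| \leq 2V$. Weyl's inequality then gives $|\lambda_0(H(s+\delta s)) - \lambda_0(H(s))| \leq 2V\delta s$. Expanding the left-hand side via the identity above,
\[
\lambda_0(H(s+\delta s)) - \lambda_0(H(s)) \;=\; (1-s)\bigl[X(s) - X(s+\delta s)\bigr] \;-\; \delta s\bigl[V - X(s+\delta s)\bigr].
\]
Since $1 \leq X(s+\delta s) \leq V$ by \cref{prop:simple_facts} (Fact~2 combined with $W \succeq 0$ and $W_m = 0$), we have $V - X(s+\delta s) \leq V - 1 < V$, and the triangle inequality yields $(1-s)\,|X(s) - X(s+\delta s)| \leq 2V\delta s + V\delta s = 3V\delta s$. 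Substituting the hypothesis $\delta s \leq c_0 g_{\{m\}}(s)(1-s)/(4V)$ and using $g_{\{m\}} = X - 1$,
\[
\bigl|g_{\{m\}}(s+\delta s) - g_{\{m\}}(s)\bigr| \;\leq\; \frac{3V\delta s}{1-s} \;\leq\; \frac{3c_0}{4}\, g_{\{m\}}(s) \;\leq\; c_0\, g_{\{m\}}(s),
\]
as required.

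There is no serious obstacle once the identity $\lambda_0(H(s)) = (1-s)(V - X(s))$ is in hand; the proof is essentially algebraic. The only point worth highlighting is that the factor $(1-s)$ in the hypothesis arises naturally from the reparametrization $G(s) = H(s)/(1-s)$: it is exactly what is needed for a uniform Weyl-type bound on $\lambda_0(H(s))$ to convert into a multiplicative bound on $g_{\{m\}}(s)$. Note also that the hypothesis forces $\delta s < (1-s)$, so the expression $1/(1-s-\delta s)$ never appears and we stay safely within $[0,1)$.
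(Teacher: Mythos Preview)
Your proof is correct and follows essentially the same approach as the paper: translate $g_{\{m\}}$ into the ground-state energy via \cref{prop:simple_facts}, then apply Weyl's inequality to the linear Hamiltonian $H(s)$. Your algebraic organization is in fact a bit cleaner than the paper's---by expanding $\lambda_0(H(s+\delta s))-\lambda_0(H(s))$ directly you obtain $(1-s)|X(s+\delta s)-X(s)|\le 3V\delta s$ and hence the constant $3c_0/4$, whereas the paper routes through $1/(1-s-\delta s)$ and ends up with the slightly weaker $3c_0/(4-c_0)$ before concluding.
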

\begin{proof}
For any matrix $M$, let $\lambda_0(M)$ denote the smallest eigenvalue of $M$. 
By \cref{prop:simple_facts}, 
\begin{align*}
    |g_{\{m\}}(s+\delta s) - g_{\{m\}}(s)| &= |(V-\lambda_0(G(s+\delta s)) - (V-\lambda_0(G(s)))| \\
    &= \left|\lambda_0\left(\frac{H(s+\delta s)}{1-(s+\delta s)}\right) - \lambda_0\left(\frac{H(s)}{1-s}\right)\right| \\
    &=\frac{1}{1-s-\delta s}\left|\lambda_0(H(s+\delta s)) - \left(1-\frac{\delta s}{1-s}\right)\lambda_0(H(s)) \right| \\
    &\leq \frac{1}{1-s-\delta s}\left(|\lambda_0(H(s+\delta s)) - \lambda_0(H(s))| + \delta s \lambda_0(G(s))\right) \\
    &\leq \frac{\delta s}{1-s-\delta s}\left(2 V + (V-g_{\{m\}} - 1)\right) \\
    &\leq \frac{\delta s}{1-s-\delta s}(3V) \\
    &\leq \frac{3Vc_0g_{\{m\}}(s)}{4V - c_0g_{\{m\}}(s)} \\
    &\leq \frac{3Vc_0g_{\{m\}}(s)}{4V - c_0V} \\
    &=\frac{3c_0g_{\{m\}}(s)}{4-c_0} \\
    &\leq c_0g_{\{m\}}(s),
\end{align*}
We arrive at the second inequality by applying Weyl's inequality to $H(s+\delta s) = H(s) + \delta s (-L + W)$, giving
\begin{equation*}
    |\lambda_0(H(s+\delta s)) - \lambda_0(H(s))| \leq \delta s \|-L + W\| \leq \delta s(2V) 
\end{equation*}
since $\|W\| \leq V$ by assumption. The third inequality follows from our assumption that $\delta s/(1-s) \leq c_0g_{\{m\}}(s)/4V$, and the fourth inequality from the fact that $0 \leq g_{\{m\}} = V - \lambda_0 - 1 \leq V$, by \cref{prop:simple_facts}.
\end{proof} 
\subsubsection{Bounds on \texorpdfstring{$h_{\{m\}}$}{hm}}
In order to apply the result of \cref{cor:cheeger} in our algorithm, we require analytic bounds on the Cheeger ratio $h_{\{m\}}$ corresponding to the marked vertex.

\begin{prop}\label{prop:norm_bound}  Under the conditions of \cref{thm:cheeger}, 
    \begin{equation*}
     \sqrt{V-1} \sqrt{1-\phi(m)^2} \geq \sum_{u\neq m}\phi(u) \geq  \frac{1}{\kappa}{\sqrt{V-1}} \sqrt{1-\phi(m)^2}.
    \end{equation*}
\end{prop}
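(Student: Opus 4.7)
The plan is to prove the two sides of the inequality by quite different methods, using Cauchy--Schwarz for the upper bound and the ratio estimate already established in the proof of \cref{thm:cheeger} for the lower bound.

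For the upper bound, I would simply write $\sum_{u \neq m}\phi(u) = \sum_{u\neq m}1\cdot\phi(u)$ and apply the Cauchy--Schwarz inequality to obtain
\[
    \sum_{u \neq m}\phi(u) \;\leq\; \sqrt{V-1}\,\sqrt{\sum_{u\neq m}\phi(u)^2} \;=\; \sqrt{V-1}\,\sqrt{1-\phi(m)^2},
\]
where the last equality uses the normalization $\|\phi\|_2 = 1$.

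For the lower bound, the key input is the ratio estimate derived in the course of proving \cref{thm:cheeger} (see \cref{eqn:kappa_bound}): namely, using \cref{eqn:eigenvectors} and the facts that $\lambda_0 < V$ and $W_u/W_v \geq 1/\kappa$ for all $u,v \neq m$, one obtains
\[
    \frac{\phi(u)}{\phi(v)} \;=\; \frac{V + W_v - \lambda_0}{V + W_u - \lambda_0} \;\geq\; \frac{1}{\kappa}
    \qquad \text{for all } u,v \neq m.
\]
Applied with any fixed $v \neq m$ and summed over $u \neq m$, this yields
\[
    \sum_{u \neq m}\phi(u) \;\geq\; \frac{V-1}{\kappa}\,\phi(v),
\]
and in particular the inequality holds with $\phi(v)$ replaced by $\max_{u\neq m}\phi(u)$.

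To finish, I would lower bound the maximum by the quadratic mean: since $\max_{u\neq m}\phi(u)^2 \geq \frac{1}{V-1}\sum_{u\neq m}\phi(u)^2 = \frac{1-\phi(m)^2}{V-1}$, we have $\max_{u\neq m}\phi(u) \geq \sqrt{(1-\phi(m)^2)/(V-1)}$, and substituting gives
\[
    \sum_{u \neq m}\phi(u) \;\geq\; \frac{V-1}{\kappa}\sqrt{\frac{1-\phi(m)^2}{V-1}} \;=\; \frac{\sqrt{V-1}}{\kappa}\sqrt{1-\phi(m)^2},
\]
as claimed. There is no real obstacle here---the proof is essentially a bookkeeping exercise once one recognizes that the ratio estimate from \cref{eqn:kappa_bound} is exactly what is needed to trade a single off-marked amplitude for the whole sum, at a cost of $\kappa$.
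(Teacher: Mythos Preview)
Your proof is correct and essentially matches the paper's: the upper bound is Cauchy--Schwarz (the paper calls it H\"older), and the lower bound rests on the same ratio estimate \cref{eqn:kappa_bound}. The only cosmetic difference is that you route the lower bound through $\max_{u\neq m}\phi(u)$ (using $\sum\phi(u)\geq \tfrac{V-1}{\kappa}\max\phi(u)$ and then max $\geq$ quadratic mean), whereas the paper routes it through $\min_{u\neq m}\phi(u)$ (using $\sum\phi(u)\geq (V-1)\min\phi(u)$ and then $\min\phi(u)\geq \phi(v)/\kappa$ for each $v$); the two are dual reorganizations of the same inequality.
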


\begin{proof}
    The upper bound follows from Holder's inequality:
    \begin{equation*}
       \sum_{u\neq m}\phi(u) = \frac{\sum\limits_{u\neq m}\phi(u)}{\sqrt{\sum\limits_{u\neq m} \phi(u)^2}}\sqrt{\sum_{u\neq m} \phi(u)^2}
        \leq \sqrt{V-1} \sqrt{1-\phi(m)^2}.
    \end{equation*}
    For the lower bound, we note that $\min_{v\neq m}\phi_0(v)$ is achieved by $u_{V-1}$ and that, by \cref{eqn:kappa_bound}, $\min_{v\neq m}\phi(v)^2 \geq \phi(u)^2/\kappa^2$ for all $u \neq m$. Hence,
    \begin{equation*} \sum_{u\neq m}\phi(u) \geq\sqrt{V-1}\sqrt{\sum_{u\neq m}\min_{v\neq m}\phi(v)^2} \geq\sqrt{V-1}\sqrt{\sum_{u\neq m}\frac{1}{\kappa^2}\phi(u)^2}= \frac{1}{\kappa}\sqrt{V-1}\sqrt{1-\phi(m)^2}. \end{equation*} \end{proof} 
Although $h_{\{m\}} = g_{\{m\}}$ whenever $\phi(m)^2 \leq 1/2$, when $\phi(m)^2 > 1/2$ we would still like to express $h_{\{m\}}$ analytically in terms of the amplitude $\phi(m)$. We exploit the fact that $h_{\{m\}} = g_{\{m\}} \max\left\{1,\frac{\phi(m)^2}{1-\phi(m)^2}\right\}$ to obtain the following analytic bound on $h_{\{m\}}$ for any $\phi(m)$.
\begin{prop}\label{prop:large_side}
    Under the conditions of \cref{thm:cheeger}, if $\phi(m) \in (0,1)$, then
    \begin{equation*}
        \sqrt{V-1}\Phi(m)  \geq h_{\{m\}} \geq \frac{1}{\kappa}\sqrt{V-1}\Phi(m)
    \end{equation*}
    where
    \begin{equation} \label{eqn:Phi}
        \Phi(m) \coloneqq \max\left\{\frac{\phi(m)}{\sqrt{1-\phi(m)^2}},\frac{\sqrt{1-\phi(m)^2}}{\phi(m)} \right\}\geq 1. \end{equation}
\end{prop}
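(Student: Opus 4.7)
The plan is to reduce $h_{\{m\}}$ to a product of two independent factors: the $\ell^1$-mass $\sum_{u\neq m}\phi(u)$ on the unmarked vertices, and a purely trigonometric-looking function of the single scalar $\phi(m)$. The first factor is already controlled by \cref{prop:norm_bound}, and the second will produce $\Phi(m)$ after a short simplification.

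First I would expand the Cheeger ratio explicitly. From \cref{eqn:not_cheeger},
\begin{equation*}
g_{\{m\}} = \frac{\phi(m)\sum_{u\neq m}\phi(u)}{\phi(m)^2} = \frac{\sum_{u\neq m}\phi(u)}{\phi(m)},\qquad g_{\mathcal{V}\setminus\{m\}} = \frac{\phi(m)\sum_{u\neq m}\phi(u)}{1-\phi(m)^2},
\end{equation*}
so by \cref{eqn:cheeger_ratio},
\begin{equation*}
h_{\{m\}} = \Bigl(\sum_{u\neq m}\phi(u)\Bigr)\,\max\!\left\{\frac{1}{\phi(m)},\,\frac{\phi(m)}{1-\phi(m)^2}\right\}.
\end{equation*}
This is the key factored form: the $\max$ encodes precisely which side of the cut is the ``smaller'' one, exactly as $\Phi(m)$ does.

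Next I would substitute the two-sided estimate from \cref{prop:norm_bound},
\begin{equation*}
\frac{1}{\kappa}\sqrt{V-1}\sqrt{1-\phi(m)^2}\;\leq\;\sum_{u\neq m}\phi(u)\;\leq\;\sqrt{V-1}\sqrt{1-\phi(m)^2},
\end{equation*}
into the factored expression. A short calculation then shows
\begin{equation*}
\sqrt{1-\phi(m)^2}\,\max\!\left\{\frac{1}{\phi(m)},\,\frac{\phi(m)}{1-\phi(m)^2}\right\} = \max\!\left\{\frac{\sqrt{1-\phi(m)^2}}{\phi(m)},\,\frac{\phi(m)}{\sqrt{1-\phi(m)^2}}\right\} = \Phi(m),
\end{equation*}
which is exactly the definition in \cref{eqn:Phi}. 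Multiplying through by $\sqrt{V-1}$ (and by the factor of $1/\kappa$ on the lower side) yields
\begin{equation*}
\sqrt{V-1}\,\Phi(m)\;\geq\;h_{\{m\}}\;\geq\;\frac{1}{\kappa}\sqrt{V-1}\,\Phi(m),
\end{equation*}
as desired. Finally, the inequality $\Phi(m)\geq 1$ is immediate since $\max\{a,1/a\}\geq 1$ for $a = \phi(m)/\sqrt{1-\phi(m)^2} > 0$.

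There is no real obstacle here: the argument is essentially bookkeeping once one recognizes the factorization of $h_{\{m\}}$. The only delicate point is making sure the $\max$ in the Cheeger ratio is carried through the $\sqrt{1-\phi(m)^2}$ weight without being split into cases; handling it as a single $\max$ keeps the bound uniform in whether $\phi(m)^2$ lies above or below $1/2$.
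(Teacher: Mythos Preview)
Your argument is correct and is essentially identical to the paper's proof: both write $h_{\{m\}}$ as the product $\bigl(\sum_{u\neq m}\phi(u)\bigr)\cdot\max\{1/\phi(m),\ \phi(m)/(1-\phi(m)^2)\}$, observe that multiplying the $\max$ by $\sqrt{1-\phi(m)^2}$ gives $\Phi(m)$, and then apply \cref{prop:norm_bound}. The paper just compresses this into one displayed line.
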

\begin{proof}
    By definition,
    \begin{equation*}
        h_{\{m\}} = \frac{\sum\limits_{u\neq m}\phi(u)}{\phi_m}\max\left\{1,\frac{\phi(m)^2}{1-\phi(m)^2}\right\} = \sum_{u\neq m}\phi(u)\frac{\Phi(m)}{\sqrt{1-\phi(m)^2}}.
    \end{equation*}
    The result follows immediately from \cref{prop:norm_bound}.
\end{proof}
  
\subsection{Analysis of BAA on the complete graph}
\label{sec:statistics}

We analyze the runtime of BAA with the oracle constructed in \cref{alg:cg_oracle} as follows. First, we assume that we can query the appropriate Cheeger ratios and show that they obey \cref{thm:queries}. Then, we determine the additional runtime incurred by abandoning queries to the Cheeger ratio when such queries become inefficient.

As in \cref{sec:spectral_theory}, we consider $H(s) = (1-s)L + sW \eqqcolon (1-s)G(s)$ and the corresponding quantities $g_{\mathcal{S}}(s)$, $h_{\mathcal{S}}(s)$, and $h(s)$. While the results of that section apply directly to $G(s) = L + sW/(1-s)$, it is useful to note from \Cref{eqn:not_cheeger,eqn:cheeger_ratio,eqn:cheeger_constant} that $g_{\mathcal{S}}(s)$, $h_{\mathcal{S}}(s)$, and $h(s)$ are functions only of the ground state and are therefore the same for both $H(s)$ and $G(s)$. We also define $s_\min$ to be the point at which $\phi(m) = 1/\sqrt{2}$, so that $g_{\{m\}}(s_\min) = g_{\mathcal{V}\setminus \{m\}}(s_\min)$. Since $\phi(m)=1/\sqrt{V}$ at $s=0$, $\phi(m) = 1$ at $s=1$, and \cref{prop:monotone} shows that $\phi(m)$ is strictly increasing over $s\in[0,1]$, this point is unique.

The following proposition provides a small interval $I$ such that $s_\min \in I$. This interval is useful mostly as a tool for deriving further inequalities. The interval used in \cref{alg:cg_oracle} is smaller than that of \cref{prop:s_min} and will be provided by \cref{cor:smin} in \cref{sec:convex_envelope}.
\begin{prop}\label{prop:s_min}
Suppose that $W_m = 0$ is the unique smallest eigenvalue of $W$, $W_{u_1}$ is the second smallest eigenvalue of $W$, and $\norm{W}\leq V$. Then,
\begin{equation*} s_\min \in\left[\frac{1}{2}\left(1-\frac{1}{V-1}\right),1-\frac{W_{u_1}}{5V}\right].
\end{equation*}
\end{prop}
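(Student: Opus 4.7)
My plan is to translate the condition $\phi(m)(s_\min) = 1/\sqrt{2}$ into algebraic constraints on $X(s_\min) \coloneqq V - \lambda_0(G(s_\min))$ and $r \coloneqq s_\min/(1-s_\min)$, and then squeeze $s_\min$ from both sides by exploiting the assumption $W_u \leq V$ (from $\|W\|\leq V$) for the lower bound and $W_u \geq W_{u_1}$ for $u \neq m$ for the upper bound. By \cref{prop:simple_facts} applied to $G(s) = L + rW$ in place of $H$, the ground state satisfies $\phi(u) = \|\phi\|_1/(X + rW_u)$, and $\sum_u 1/(X+rW_u) = 1$. Combining this with $\phi(m)^2 = \|\phi\|_1^2/X^2 = 1/2$ at $s = s_\min$ and the normalization $\sum_u \phi(u)^2 = 1$ yields the pair of identities
\begin{equation*}
  \sum_{u \neq m} \frac{1}{X + rW_u} = 1 - \frac{1}{X}, \qquad \sum_{u \neq m} \frac{1}{(X + rW_u)^2} = \frac{1}{X^2}.
\end{equation*}
Applying Cauchy--Schwarz to the vector with entries $1/(X+rW_u)$ then gives $(1-1/X)^2 \leq (V-1)/X^2$, hence the crucial bound $X(s_\min) \leq 1 + \sqrt{V-1}$, which underlies both halves of the proof.

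For the lower bound $s_\min \geq (V-2)/(2(V-1))$, I would use $W_u \leq V$ to bound each summand in the first identity from below: $1 - 1/X \geq (V-1)/(X+rV)$. Rearranging yields $r \geq X(V-X)/(V(X-1))$. The right-hand side, as a function of $X$ on $(1,\infty)$, has derivative with numerator $-((X-1)^2 + V - 1) < 0$, so it is strictly decreasing; combined with $X(s_\min) \leq 1 + \sqrt{V-1}$, this forces $r \geq (V-2)/V$, and converting back via $s_\min = r/(1+r)$ yields exactly $(V-2)/(2(V-1))$.

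For the upper bound $s_\min \leq 1 - W_{u_1}/(5V)$, I would instead use $W_u \geq W_{u_1}$ to bound each summand in the sum-of-squares identity from above: $1/X^2 \leq (V-1)/(X+rW_{u_1})^2$. This forces $X + rW_{u_1} \leq X\sqrt{V-1}$, so $rW_{u_1} \leq X(\sqrt{V-1}-1) \leq (1+\sqrt{V-1})(\sqrt{V-1}-1) = V - 2$. Hence $r \leq (V-2)/W_{u_1}$, giving $s_\min \leq (V-2)/(V-2+W_{u_1})$, equivalently $1 - s_\min \geq W_{u_1}/(V-2+W_{u_1})$. The looser bound $V-2+W_{u_1} \leq 2V-2 \leq 5V$, which uses $W_{u_1} \leq V$, then converts this into the stated form.

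The main obstacle is recognizing and executing the Cauchy--Schwarz step that produces $X(s_\min) \leq 1 + \sqrt{V-1}$; after that, the two bounds follow from applying opposite extremal bounds on $W_u$ to the two identities, together with straightforward algebra and monotonicity. The assumptions $\|W\| \leq V$ and $W_u \geq W_{u_1}$ are each used in exactly one direction, which is why Assumptions 2 and 4 of the complete-graph setup are both essential; the Grover cost function with weight $V$ saturates the lower bound, so no further tightening is possible without additional hypotheses on $W$.
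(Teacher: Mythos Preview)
Your proof is correct and takes a genuinely different route from the paper's.

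Both arguments hinge on the same key fact that $X(s_\min)\le 1+\sqrt{V-1}$ (equivalently $g_{\{m\}}(s_\min)\le\sqrt{V-1}$); the paper obtains this from \cref{prop:large_side}, while you derive it directly from Cauchy--Schwarz applied to your pair of identities, which is essentially the same computation. From there the two proofs diverge. For the lower bound, the paper compares $W$ to the Grover cost function $W^{(G)}$ via monotonicity of $\lambda_0$ and invokes the explicit Grover eigenvalue formula to show $g_{\{m\}}(s)>\sqrt{V-1}$ for $s<\tfrac12(1-\tfrac1{V-1})$; you instead bound the linear identity $\sum_{u\ne m}(X+rW_u)^{-1}=1-1/X$ from below using $W_u\le V$ and solve for $r$. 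For the upper bound, the paper works with the spectral gap of $H(s)$ near $s=1$, combining Weyl's inequality with the Cheeger upper bound of \cref{cor:cheeger}; you instead bound the quadratic identity $\sum_{u\ne m}(X+rW_u)^{-2}=1/X^2$ from above using $W_u\ge W_{u_1}$ and again solve for $r$. Your approach is more self-contained and purely algebraic, avoiding the Grover formula and the Weyl/Cheeger machinery entirely; it also yields the sharper intermediate bound $1-s_\min\ge W_{u_1}/(V-2+W_{u_1})$ before relaxing to $W_{u_1}/(5V)$. The paper's approach has the advantage of reusing tools (\cref{thm:weyl}, \cref{cor:cheeger}, \cref{prop:large_side}) already established for the algorithm's analysis, which makes the dependencies within the paper more transparent.
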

\begin{proof}
    We begin with the lower bound. For any matrix $M$, let $\lambda_0(M)$ denote the smallest eigenvalue of $M$. 
    Let $W^{(G)}$ be the diagonal matrix with $W_m^{(G)} = 0$ and $W^{(G)}_u = V$ for all $u \neq m$. Using \cref{prop:simple_facts} and noting that $W-W^{(G)}$ is negative semidefinite,
    \begin{align*}
    g_{\{m\}}(s) &= V - \lambda_0(G(s)) - 1 \\
    &= V - \lambda_0\left(L + \frac{s}{1-s}W\right) - 1 \\
    &\geq V - \lambda_0\left(L + \frac{s}{1-s}W^{(G)}\right) - 1 \\
    &= V - \frac{1}{1-s}\lambda_0\left((1-s)L + sW^{(G)}\right) - 1 \\
    &= V - \frac{1}{1-s}\frac{V}{2}\left[1-\sqrt{1-4\frac{V-1}{V}s(1-s)}\right] -1 \\
    &= V\left\{1- \frac{1}{2(1-s)}\left[1-\sqrt{1-4\frac{V-1}{V}s(1-s)}\right] \right\} -1 \\
    &> \sqrt{V-1}
    \end{align*}
    when $s < \frac{1}{2}\left(1-\frac{1}{V-1}\right)$. 
    On the other hand, at $s = s_\min$, $\Phi(m) = 1$ and hence $g_{\{m\}} = h_{\{m\}} \leq \sqrt{V-1}$ by \cref{prop:large_side}. 
    
    For the upper bound, let $\gamma(s)$ be the spectral gap of $H(s)$. Note that $\gamma(1) = W_{u_1}$, and by \cref{thm:weyl} and our assumption that $\|W\| \leq V$,
    \[
        \abs{W_{u_1} - \gamma(s)} = \abs{\gamma(1) - \gamma(s)} \leq 4 (1-s) V,
    \]
    so if $(1-s) < \frac{1}{4V}[W_{u_1} - 2(1-s)\sqrt{V-1}]$, we would have
    \begin{equation*}
        \abs{W_{u_1} - \gamma(s)} < W_{u_1} - 2(1-s)\sqrt{V-1},
    \end{equation*}
    whence $\gamma(s) > 2(1-s)\sqrt{V-1}$. Then, applying \cref{cor:cheeger} to the spectral gap of $G(s)$ implies that $h_{\{m\}}> \sqrt{V-1}$. On the other hand, $h_{\{m\}} \leq \sqrt{V-1}$ at $s = s_\min$ by \cref{prop:large_side}, so we must have $(1-s_\min) \geq \frac{1}{4V}[W_{u_1} - 2(1-s_\min)\sqrt{V-1}]$, or
    \begin{equation*}
        s_\min \leq 1 - \frac{W_{u_1}}{4V + 2\sqrt{V-1}} \leq  1-\frac{W_{u_1}}{5V}. 
    \end{equation*} \end{proof} 
\subsubsection{The convex envelope}\label{sec:convex_envelope}

\begin{thm}\label{thm:linear-envelope}
Let $\gamma(s)$ denote the spectral gap of $H(s) = (1-s)L + sW$. Suppose that $W$ has a unique smallest eigenvalue and that $\|W\| \leq V$. Letting $W_m < W_{u_1} \leq \dots W_{u_{V-1}}$ denote the eigenvalues of $W$, if $W_m = 0$, $W_{u_{V-1}}/W_{u_1} \leq \kappa$, and $W_{u_1} \geq \chi V$ for some constant $\chi \geq 2\sqrt{V-1}/V$, then for all $s \in [0,)$,
\begin{equation*}
    2\kappa^4\oracle(s)\geq \gamma(s) \geq \oracle(s),
\end{equation*}
with
\begin{equation} \label{eqn: oracle}
  \oracle(s)= \begin{dcases}       (1-s)h_{\{m\}} & \text{$s < s_\min$} \\
  (1-s)\frac{\sqrt{V-1}}{\kappa^4} \left(\frac{\phi(m)}{\sqrt{1-\phi(m)^2}} \right) & \text{$s \geq s_\min$}.
  \end{dcases}
\end{equation}
Moreover,
\[
    \oracle(s) \geq  \frac{\chi}{4 \kappa^5}(V-2)\abs*{s-s_\min} + \frac{1}{2}\left(\frac{V-2}{V-1}\right)\oracle(s_\min).
\]
\end{thm}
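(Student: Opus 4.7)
The plan is to prove the two-sided sandwich $\oracle(s) \leq \gamma(s) \leq 2\kappa^4\oracle(s)$ first, and then derive the linear envelope as a separate step. For the sandwich I would exploit the factorization $H(s) = (1-s)G(s)$ with $G(s) = L + \tfrac{s}{1-s}W$, so that $\gamma(H(s)) = (1-s)\gamma(G(s))$ and the Cheeger ratios of $H(s)$ and $G(s)$ coincide (since they depend only on the ground state). The argument then splits on whether $s<s_\min$ or $s\geq s_\min$. In the first regime $\phi(m)^2 < 1/2$, so $h_{\{m\}} = g_{\{m\}}$, and \cref{thm:cheeger} gives $\gamma(G(s)) \geq g_{\{m\}} = h_{\{m\}}$ while \cref{cor:cheeger} gives $\gamma(G(s)) \leq 2h_{\{m\}}$; multiplying by $(1-s)$ yields the sandwich (with $2 \leq 2\kappa^4$ absorbing the slack). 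In the second regime $\phi(m)^2\geq 1/2$, combining \cref{cor:cheeger} with \cref{prop:large_side} gives $\sqrt{V-1}\Phi(m)/\kappa^4 \leq \gamma(G(s)) \leq 2\sqrt{V-1}\Phi(m)$, and scaling by $(1-s)$ again produces the sandwich.

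For the linear envelope, the key tool is Weyl's inequality (\cref{thm:weyl}): since $\|\dot H\| = \|W-L\| \leq 2V$, we have $|\gamma(s)-\gamma(s')| \leq 4V|s-s'|$. I would anchor at two base points. Anchoring at $s=1$, where $\gamma(1) = W_{u_1} \geq \chi V$, gives $\gamma(s) \geq \chi V - 4V(1-s)$, which combined with the sandwich $\oracle(s) \geq \gamma(s)/(2\kappa^4)$ and the rewriting $(1-s) = (1-s_\min)-(s-s_\min)$ yields a linear lower bound on $\oracle(s)$ in $|s-s_\min|$ that is valid when the Weyl estimate is positive. Anchoring at $s_\min$, using $\gamma(s_\min) \geq \oracle(s_\min)$ from the sandwich, gives $\gamma(s) \geq \oracle(s_\min) - 4V|s-s_\min|$, which yields a constant floor in a neighborhood of $s_\min$. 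For the left region $s<s_\min$ I would also observe that $\oracle(s) = (1-s)h_{\{m\}}$ is monotonically decreasing in $s$ (since $(1-s)$ decreases and $h_{\{m\}} = V-\lambda_0(G(s))-1$ decreases because the non-negative perturbation $\tfrac{s}{1-s}W$ increases $\lambda_0(G(s))$), so that $\oracle(s) \geq \oracle(s_\min^-) \geq \kappa^3\,\oracle(s_\min)$ by \cref{prop:large_side}, which easily dominates the envelope on the left half of the V-shape.

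The main obstacle will be the middle region $s\in[s_\min, 1-W_{u_1}/(4V)]$, where Weyl from $s=1$ gives no useful lower bound on $\gamma$ and only the Weyl-at-$s_\min$ estimate is available. The point is that the constants in the target envelope are tuned precisely for this situation: the floor $\tfrac{V-2}{2(V-1)}\oracle(s_\min) < \tfrac12 \oracle(s_\min)$ is mild enough to be covered by the Weyl-at-$s_\min$ bound in a neighborhood of $s_\min$, while the envelope slope $\chi(V-2)/(4\kappa^5)$ is much smaller than the $2V/\kappa^4$ slope delivered by the Weyl-at-$s=1$ bound once the latter is valid. The bulk of the work is therefore a careful patching argument that verifies the sum $\tfrac{\chi(V-2)}{4\kappa^5}|s-s_\min| + \tfrac{V-2}{2(V-1)}\oracle(s_\min)$ is dominated piecewise by whichever of the two Weyl estimates is positive, using the interval $1-s_\min \in [W_{u_1}/(5V), 1/2 + o(1)]$ from \cref{prop:s_min} to control where the crossover occurs.
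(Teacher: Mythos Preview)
Your sandwich argument is correct and matches the paper's approach exactly.

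Your linear-envelope argument, however, has a genuine gap in both regimes. For $s<s_\min$ you rely solely on monotonicity of $(1-s)h_{\{m\}}$, obtaining $\oracle(s)\geq\oracle(s_\min^-)$. But $\oracle(s_\min^-)\leq(1-s_\min)\sqrt{V-1}=\Theta(\sqrt{V})$, whereas the target envelope's linear term $\tfrac{\chi(V-2)}{4\kappa^5}|s-s_\min|$ can be $\Theta(V)$ at $s=0$ (take $\kappa,\chi=\Theta(1)$ and $s_\min\approx 1/2$). So ``easily dominates'' is false. For $s>s_\min$ your two Weyl anchors leave a hole: Weyl from $s_\min$ gives $\gamma(s)\geq\gamma(s_\min)-4V(s-s_\min)$, which is only positive for $s-s_\min\lesssim\gamma(s_\min)/V=O(1/\sqrt{V})$, while Weyl from $s=1$ needs $1-s<W_{u_1}/(4V)$. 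Since $1-s_\min$ can be $\Theta(1)$ (Proposition~\ref{prop:s_min}), there is an interval of width $\Theta(1)$ where neither estimate is positive and you have no lower bound on $\gamma$ at all, hence none on $\oracle$ via the sandwich.

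The missing idea is that one must bound $\oracle$ \emph{directly}, not through $\gamma$. The paper differentiates the underlying eigenvalue quantities via Hellmann--Feynman: for $s<s_\min$ it computes $\tfrac{d}{ds}g_{\{m\}}(s)=-\tfrac{1}{(1-s)^2}\langle\phi|W|\phi\rangle\leq-\tfrac{W_{u_1}}{2(1-s)^2}$ and integrates over $[s,s_\min]$ to get a linear lower bound with slope $\chi V/2$; for $s\geq s_\min$ it introduces $\widetilde\oracle(s)\propto s/X(s)$, bounds $\tfrac{d}{ds}(s/X(s))$ from below using the explicit eigenvector expression and the inequality $X(s)\leq W_u$ (from $\chi\geq 2\sqrt{V-1}/V$), and integrates. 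Weyl's inequality on $\gamma$ is simply too coarse here---the slope you need on $\oracle$ is $\Theta(V)$, but Weyl on $\gamma$ combined with the sandwich loses both sign and a factor $2\kappa^4$.
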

\begin{proof}
Since $\phi(m)$ is strictly increasing (as show by \cref{prop:monotone} in the Appendix), $\phi(m) < 1/\sqrt{2}$ for $s < s_\min$ and $\phi(m) \geq 1/\sqrt{2}$ for $s \geq s_\min$. Hence, by \cref{thm:cheeger} and \cref{eqn:cheeger_inequality}, the spectral gap $\gamma(G(s))$ of $G(s) = H(s)/(1-s)$ is bounded as $2 h_{\{m\}}(s) \geq \gamma(G(s)) \geq h_{\{m\}}(s)$ for $s < s_\min$ and as $2h_{\{m\}}(s) \geq \gamma(G(s)) \geq h_{\{m\}}(s)/\kappa^3$ for $s \geq s_\min$. Since $h_{\{m\}}(s)$ is invariant under rescaling $H(s)$ by an overall factor while $\gamma(s) = (1-s)\gamma(G(s))$, it follows that when $s < s_\min$,
    $2(1-s)h_{\{m\}}(s) \geq \gamma(s) \geq (1-s)h_{\{m\}}(s)$
or, in terms of $\oracle(s)$,
\begin{equation*}
    2\oracle(s) \geq \gamma(s) \geq \oracle(s).
\end{equation*}
Similarly, when $s > s_\min$, $2(1-s)h_{\{m\}}(s) \geq \gamma(s) \geq (1-s){h_{\{m\}}(s)}/{\kappa^3}$, and applying the bounds on $h_{\{m\}}$ given by \cref{prop:large_side}, we have
\begin{equation*}
    2\kappa^4\oracle(s) \geq \gamma(s) \geq \oracle(s).
\end{equation*}
Thus, for all $s \in [0,1)$, $2\kappa^4\oracle(s) \geq \gamma(s) \geq \oracle(s)$. 

To derive the lower bound on $\oracle(s)$, we consider the two regions separately.

\begin{case}[$s < s_\min$]

\hspace{1.5em} In this region, $h_{\{m\}}(s) = g_{\{m\}}(s)$ and $\oracle(s) = (1-s)g_{\{m\}}(s)$.
Using \cref{prop:simple_facts} and the Hellmann-Feynmann theorem, 
\begin{align}
    \frac{dg_{\{m\}}(s)}{ds} &= \frac{d}{ds}(V - \lambda_0(G(s)) - 1) \nonumber \\
    &=-\bra{\phi}\frac{d}{ds}\left(L + \frac{s}{1-s}W\right)\ket{\phi} \nonumber\\
    &=-\frac{1}{(1-s)^2}\bra{\phi}W\ket{\phi} \label{eqn:g_m derivative}\\
    &=-\frac{1}{(1-s)^2}\sum_{u}\phi(u)^2W_u \nonumber\\
    &\leq - \frac{1}{(1-s)^2}(1-\phi(m)^2)W_{u_1} \nonumber\\
    &\leq -\frac{W_{u_1}}{2(1-s)^2},\nonumber
\end{align}
where the first inequality follows from the assumption that $W_m=0$ and $W_u \geq W_{u_1}$ for all $u\neq m$, and the second from the fact that $\phi(m)^2 <1/2$ in this region. Integrating both sides over $[s,s_\min]$ for some $s<s_\min$, we find
\begin{equation*}
    g_{\{m\}}(s) - g_{\{m\}}(s_\min) \geq \frac{W_{u_1}}{2}\left(\frac{1}{1-s_\min} - \frac{1}{1-s}\right).
\end{equation*}
Thus
\begin{dgroup*}
\[ \Gamma(s) = (1-s)g_{\{m\}}(s) \]
\[\geq (1-s_\min)g_{\{m\}}(s) \]
\[ \geq \frac{W_{u_1}}{2}\frac{s_\min - s}{1-s} + (1-s_\min)g_{\{m\}}(s_\min)\]
\[ \geq \frac{\chi V}{2}|s-s_\min| + \lim_{s\to s_\min^{-}}\Gamma(s)
\]
\end{dgroup*}
where the last line follows from the assumption that $W_{u_1} \geq \chi V$.
\end{case} 

\begin{case}[$s \geq s_{\min}$]
\hspace{1.5em} In this region, $\oracle$ is given by
\[
    \oracle(s) = (1-s)\frac{\sqrt{V-1}}{\kappa^4}\frac{\phi(m)}{\sqrt{1-\phi(m)^2}} = (1-s)\frac{\sqrt{V-1}}{\kappa^4}\frac{1}{\sqrt{X(s)^2\sum\limits_{u\neq m}\left(X(s) + \frac{s}{1-s}W_u\right)^{-2}}} \geq \frac{\sqrt{V-1}}{\kappa^4}\frac{s}{X(s)\sqrt{\sum\limits_{u\neq m}W_u^{-2}}},
\]
where we used the fact that
\begin{equation*}
    \frac{1-\phi(m)^2}{\phi(m)^2} = \frac{1}{\phi(m)^2}\sum_{u\neq m}\left(\frac{\norm{\phi}_1}{V-\lambda_0(G(s)) + \frac{s}{1-s}W_u}\right)^2 = X(s)^2\sum_{u\neq m}\left(X(s) + \frac{s}{1-s}W_u\right)^{-2}
\end{equation*}
by \cref{eqn:eigenvectors} and \cref{prop:simple_facts}, writing $X(x) \coloneqq g_{\{m\}}(s) + 1$. Hence, we define
\begin{equation*}
    \widetilde{\oracle}(s) \coloneqq \frac{\sqrt{V-1}}{\kappa^4}\frac{s}{X(x)\sqrt{\sum\limits_{u\neq m}W_u^{-2}}}
\end{equation*}
as the lower bound on $\oracle(s)$, and consider the derivative
\[
    \frac{d}{ds}\left(\frac{s}{X(s)}\right) = \frac{1}{X(s)} - \frac{s}{X(s)^2}\frac{dX(s)}{ds}
    \geq -\frac{s}{X(s)^2}\frac{dX(s)}{ds}
    = \frac{s}{X(s)^2}\frac{1}{(1-s)^2}\bra{\phi}W\ket{\phi}
    \geq \frac{s}{X(s)^2}\frac{1}{(1-s)^2}(1-\phi(m)^2)W_{u_1}
    =\frac{sW_{u_1}}{X(s)^2(1-s)^2}\left[\phi(m)^2X(s)^2\sum_{u\neq m}\left(X(s) +\frac{s}{1-s}W_u\right)^{-2}\right]
    \geq \frac{sW_{u_1}}{(1-s)^2}\frac{1}{2}\sum_{u\neq m}\left(W_u + \frac{s}{1-s}W_u\right)^{-2}
    =\frac{sW_{u_1}}{2(1-s)^2}\sum_{u\neq m}\left(\frac{1}{1-s}W_u\right)^{-2}
    \geq \frac{sW_{u_1}}{2}\sum_{u\neq m}\left(\frac{1}{W_{u_{V-1}}}\right)^2
    \geq \frac{sW_{u_1}}{2}\sum_{u\neq m}\frac{1}{\kappa W_{u_1}}\frac{1}{V}
    = \frac{s(V-1)}{2\kappa V}.
\]
To obtain the third inequality, we used the fact that $g_{\{m\}}(s)$ is monotonically decreasing in s over $s \in [0,1)$, as is clear from \cref{eqn:g_m derivative} and the assumption that $W$ is positive semidefinite, and that $g_{\{m\}}(s_\min) = h_{\{m\}}(s_\min) \leq \sqrt{V-1}$ by \cref{prop:large_side}. Consequently, for all $u \neq m$
\begin{equation} \label{eqn:X W_u}
    X(s) = g_{\{m\}}(s) + 1 \leq g_{\{m\}}(s_\min) + 1 \leq \sqrt{V-1} + 1 \leq 2\sqrt{V-1} \leq \chi V \leq W_{u_1} \leq W_u.
\end{equation}
The fifth inequality follows from the assumptions that $W_{u_{V-1}} \leq \kappa W_{u_1}$ and $W_{u_{V-1}} \leq V$. Using this to bound $d\Gamma(s)/ds$ and integrating both sides of the resultant expression over $[s_\min, s]$ for some $s \geq s_\min$ gives 
\begin{align*}
\widetilde{\oracle}(s) - \widetilde\oracle(s_\min) &\geq \frac{(V-1)^{3/2}}{2\kappa^5 V\sqrt{\sum\limits_{u\neq m}W_u^{-2}}}\frac{1}{2}(s^2 - s_\min^2) \\
&\geq \frac{(V-1)^{3/2}}{4\kappa^5 V}\sqrt{\frac{W_{u_1}^2}{V-1}}(s + s_\min)|s-s_\min| \\
&\geq \frac{V-1}{4\kappa^5 V}W_{u_1}(2s_\min)|s-s_\min| \\
&\geq  \frac{V-1}{4\kappa^5 V}(\chi V)\left(1-\frac{1}{V-1}\right)|s-s_\min|\\ 
&= \frac{(V-2)\chi}{4\kappa^5}|s-s_\min|
\end{align*}
where the second line follows from the fact that $
    \sum_{u\neq m}W_{u}^{-2} \leq \sum_{u\neq m}W_{u_1}^{-2} = (V-1)/W_{u_1}^2$
and the fourth line follows from the assumption that $W_{u_1} \geq \chi V$ as well as \cref{prop:s_min}. Then, since
\begin{equation*}
    \sum_{u\neq m}W_{u}^{-2} = \frac{1}{(1-s)^2}\sum_{u\neq m}\left(W_u + \frac{s}{1-s}W_u\right)^{-2} \leq \frac{1}{(1-s)^2}\sum_{u\neq m}\left(X(s) + \frac{s}{1-s}W_u\right)^{-2},
\end{equation*}
by \cref{eqn:X W_u},
we have $\widetilde\oracle(s) \geq s\oracle(s)$. Therefore,
\[
    {\oracle(s) \geq 
    \widetilde\oracle(s)} \geq \frac{(V-2)\chi}{4\kappa^5}|s-s_\min| + \widetilde{\oracle}(s_\min)
    \geq \frac{(V-2)\chi}{4\kappa^5}|s-s_\min| + s_\min\oracle(s_\min)
    \geq \frac{(V-2)\chi}{4\kappa^5}|s-s_\min| +\frac{1}{2}\left(\frac{V-2}{V-1}\right)\oracle(s_\min),
\]
using  \cref{prop:s_min} in the last inequality. \\
\end{case}
Since in Case 1,
$\oracle(s) \geq \dfrac{\chi V}{2}|s -s_\min| + \lim_{s\rightarrow s_\min^-}\oracle(s)$,
and $\lim_{s\to s_\min^{-}}\Gamma(s) \geq \Gamma(s_\min)$ by \cref{prop:large_side},
we have that
\[
    \oracle(s) \geq \frac{\chi(V-2)}{4 \kappa^5}\abs*{s-s_\min} + \frac{1}{2}\left(\frac{V-2}{V-1}\right)\oracle(s_\min).
\]
in either case.
\end{proof}

Step \ref{alg:cg_oracle:sm} of \cref{alg:cg_oracle}  requires that we approximate $s_\min$. A sufficient bound follows from the proof of \cref{thm:linear-envelope}.

\begin{cor}\label{cor:smin}
Under the conditions of \cref{thm:linear-envelope}, if $\gamma(s) < \widetilde{\gamma}$ for some constant $\widetilde{\gamma}$ and $s < s_\min$, then 
    \[
        s_\min \leq s+ \frac{2\widetilde\gamma}{\chi V}.
    \]
\end{cor}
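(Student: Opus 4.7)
The plan is to extract the bound directly from Case 1 of the proof of \cref{thm:linear-envelope}, which already established the relevant linear lower bound on $\oracle(s)$ in the region $s<s_\min$. Specifically, integrating $dg_{\{m\}}/ds$ over $[s,s_\min]$ gave
\[
    \oracle(s) \;\geq\; \frac{\chi V}{2}\,|s-s_\min| + \lim_{s'\to s_\min^{-}}\oracle(s'),
\]
and since $\oracle$ is nonnegative (it is a lower bound on a spectral gap), dropping the limiting term yields the cleaner inequality $\oracle(s) \geq \tfrac{\chi V}{2}(s_\min - s)$ whenever $s<s_\min$.

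Next, I would apply the lower bound $\gamma(s) \geq \oracle(s)$ from \cref{thm:linear-envelope} to transfer this into a bound on the true gap:
\[
    \gamma(s) \;\geq\; \frac{\chi V}{2}(s_\min - s).
\]
Combining this with the hypothesis $\gamma(s) < \widetilde\gamma$ gives $\tfrac{\chi V}{2}(s_\min - s) < \widetilde\gamma$, and rearranging produces exactly the claimed bound
\[
    s_\min \;\leq\; s + \frac{2\widetilde\gamma}{\chi V}.
\]

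There is no genuine obstacle here, since all the analytic work has already been done in \cref{thm:linear-envelope}: the corollary is essentially a contrapositive reading of the linear envelope on the left of $s_\min$. The only thing worth being careful about is to ensure the hypothesis $s<s_\min$ is used to place us squarely in Case 1 of the earlier proof (so that $h_{\{m\}}=g_{\{m\}}$ and the derivative calculation applies), and that the constant $\chi$ satisfies the background assumption $\chi \geq 2\sqrt{V-1}/V$ inherited from \cref{thm:linear-envelope}. Beyond that, the proof is a one-line rearrangement.
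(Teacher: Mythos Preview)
Your proposal is correct and matches the paper's own proof, which simply states that the corollary is an immediate consequence of Case~1 in the proof of \cref{thm:linear-envelope}. You have correctly unpacked that one line: use the inequality $\oracle(s)\geq \tfrac{\chi V}{2}(s_\min-s)$ from Case~1 (after dropping the nonnegative limiting term), combine with $\gamma(s)\geq\oracle(s)$, and rearrange.
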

\begin{proof}
    This is an immediate consequence of Case 1 in the proof of \cref{thm:linear-envelope}.
\end{proof}

\subsubsection{The \texorpdfstring{$\Theta$}{ϴ} function}
In this subsection, we consider $H = L + W$, where $W$ has eigenvalues $0 = W_m < W_{u_1} \leq \dots \leq W_{u_{V-1}} \leq V$, and determine $X\coloneqq g_{\{m\}}+1$ up to some relative error. Recall that determining $X$ up to relative error will be sufficient to provide a bound on the spectral gap of $H$. According to \cref{prop:simple_facts}, for any such $W$, $X$ is the zero of the function
\begin{equation*}
    \Theta(x) \coloneqq \sum_u(W_u + x)^{-1} - 1.
\end{equation*}
defined on $x \in \mathbb{R}^+$.

If we have access to $\Theta$, then the monotonicity of the function in the variable $x$ implies that the bisection method can rapidly find $X$ to arbitrary error. Although $\Theta(x)$ is simple enough to write down, determining $\Theta(x)$ fully would require knowledge of $W_u$ for every vertex $u$. Hence, \cref{alg:cg_oracle} approximates $X$ by finding the zero $\widetilde{X}$ of the function
\begin{equation}\label{eqn:theta_delta}
    \widetilde\Theta(x) \coloneqq \frac{V-1}{n}\sum_{i=0}^{n-1}\left(W_{y_i} + x\right)^{-1} + \frac{1}{x} - 1,
\end{equation}
where $y_0,y_1,\dots,y_{n-1}$ are i.i.d. random variables with $y_i \sim \mathrm{Uniform}(\mathcal{V}\setminus \{m\})$.
\cref{alg:cg_oracle} takes $\widetilde\Theta(s,x)$ as equivalent to $\widetilde{\Theta}(x)$ with $W\mapsto \frac{s}{1-s}W$, which merely restricts the above expressions to a particular one-parameter family. Since we only seek to understand $X$ as a function of $W$, we suppress the $s$-dependence in this section.

First, we determine how close $\widetilde{\Theta}$ must be to $\Theta$ for $\widetilde{X}$ to be a good estimate of $X$. 
\begin{prop}\label{prop:root} 
For any $x>0$,
    \[
       \abs*{x - X} \leq (W_{u_{V-1}} + x)\abs*{\Theta(x)}.
    \]
\end{prop}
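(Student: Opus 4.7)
The plan is to exploit the identity $\Theta(X)=0$ (given by Fact 2 of \cref{prop:simple_facts}) and write $\Theta(x)$ as a telescoping difference $\Theta(x)-\Theta(X)$. Combining the two sums over $u$ into a single sum gives
\[
    \Theta(x) \;=\; \sum_u \left[\frac{1}{W_u + x} - \frac{1}{W_u + X}\right] \;=\; (X-x)\sum_u \frac{1}{(W_u + x)(W_u + X)}.
\]
So the key quantity to control is the sum $S(x)\coloneqq \sum_u [(W_u+x)(W_u+X)]^{-1}$; once we have a lower bound on $S(x)$ in terms of $(W_{u_{V-1}}+x)^{-1}$, the proposition follows by taking absolute values and rearranging.

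Next I would use the fact that $W_u \leq W_{u_{V-1}}$ for every vertex $u$ to obtain the crude but sufficient bound $1/(W_u+x) \geq 1/(W_{u_{V-1}}+x)$. Pulling this factor out of the sum gives
\[
    S(x) \;\geq\; \frac{1}{W_{u_{V-1}}+x}\sum_u \frac{1}{W_u + X} \;=\; \frac{1}{W_{u_{V-1}}+x},
\]
where the final equality invokes $\sum_u (W_u+X)^{-1}=1$ from \cref{prop:simple_facts} once more. Substituting this lower bound into the telescoped expression for $\Theta(x)$ yields $|\Theta(x)| \geq |X-x|/(W_{u_{V-1}}+x)$, which is exactly the claim after multiplying through by $W_{u_{V-1}}+x$.

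There is no real obstacle here: the proof is a one-line telescoping of a rational-function difference followed by a pointwise bound, and both ingredients (the identity $\Theta(X)=0$ and the ordering $W_u\leq W_{u_{V-1}}$) are already in hand. The only thing to be mildly careful about is ensuring positivity of the denominators so that the pointwise inequality goes in the right direction, but this is automatic since $x>0$ and all $W_u\geq 0$ (and $X>0$). Thus the result reduces to the two displays above.
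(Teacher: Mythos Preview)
Your proof is correct and follows essentially the same approach as the paper: telescope $\Theta(x)-\Theta(X)$ to extract the factor $X-x$, use $W_u\leq W_{u_{V-1}}$ to bound one denominator, and invoke $\sum_u(W_u+X)^{-1}=1$ to collapse the remaining sum. The only cosmetic difference is that the paper multiplies through by $W_{u_{V-1}}+x$ before applying the pointwise bound, whereas you bound the sum $S(x)$ first and then rearrange.
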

\begin{proof}
Noting that $\Theta(X)=0$ implies $\sum_{u}(W_u +X)^{-1} = 1$,
we have
\begin{align*}
    \left(W_{u_{V-1}} + x \right) \abs*{\Theta(x)} &= \left(W_{u_{V-1}} +x\right)\abs*{\sum_{u}\frac{1}{W_u + x} - 1} \\
    &=(W_{u_{V-1}} + x)\abs*{\sum_{u}\left(\frac{1}{W_u + x} - \frac{1}{W_u + X}\right)} \\
    &= \abs*{x-X}\sum_{u}\frac{W_{u_{V-1}} + x}{(W_u + x)(W_u+X)} \\
    &\geq \abs*{x-X}\sum_u \frac{1}{W_u+X} \\
    &= \abs*{x-X}.
\end{align*}
\end{proof}
\noindent In particular, if $\widetilde{\Theta}$ is such that $|\Theta(\widetilde{X})| \leq \epsilon X/(W_{u_{V-1}} + \widetilde{X})$ for some $\epsilon >0$, then $|X-\widetilde{X}| \leq \epsilon X$. 

Next, we bound $\abs{\delta(x)}$ in terms of the number $n$ of samples drawn from $\mathcal{V}\setminus\{m\}$. The following proposition implies that when $x \approx \widetilde{X}$, we are within the bounds required by \cref{prop:root}.

\begin{prop}\label{prop:hoeffding}
If $\kappa \coloneqq W_{u_{V-1}}/W_{u_1} > 1$ and $n = \left\lceil\dfrac{(V-1)^2(\kappa-1)^2}{2X^2 \epsilon_0^2 }\ln\left(\dfrac{2}{p}\right)\right\rceil$ for some $\epsilon_0 > 0$ and $p\in (0,1]$, then with probability at least $1-p$,
\begin{equation*}
    \abs*{{\Theta}(x) - \widetilde\Theta(x)} \leq \frac{\epsilon_0 X}{W_{u_{V-1}} + x}
\end{equation*}
for all $x>0$.
\end{prop}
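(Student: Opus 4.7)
My approach is to rewrite the quantity $(W_{u_{V-1}}+x)\bigl[\Theta(x)-\widetilde\Theta(x)\bigr]$ as the deviation of an empirical mean of bounded i.i.d.\ random variables from its expectation, and then apply Hoeffding's inequality. Using $W_m=0$ to cancel the $1/x$ contribution of the $u=m$ term in $\Theta$ against the explicit $1/x$ appearing in $\widetilde\Theta$, I would first write
\begin{equation*}
    \Theta(x)-\widetilde\Theta(x) = \sum_{u\neq m}\frac{1}{W_u+x} - \frac{V-1}{n}\sum_{i=0}^{n-1}\frac{1}{W_{y_i}+x},
\end{equation*}
which, after multiplication by $(W_{u_{V-1}}+x)$, equals $\mathbb{E}[Y_i(x)]-\tfrac{1}{n}\sum_i Y_i(x)$ for
\begin{equation*}
    Y_i(x) \coloneqq \frac{(V-1)(W_{u_{V-1}}+x)}{W_{y_i}+x}.
\end{equation*}
Here I use that $y_i$ is uniform on $\mathcal{V}\setminus\{m\}$, so $\mathbb{E}[(V-1)/(W_{y_i}+x)]=\sum_{u\neq m}(W_u+x)^{-1}$.

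The key structural observation is that $Y_i(x)\in[V-1,(V-1)\kappa]$ uniformly in $x\geq0$: the lower bound is immediate from $W_{y_i}\leq W_{u_{V-1}}$, while the upper bound uses that $x\mapsto(W_{u_{V-1}}+x)/(W_{u_1}+x)$ is monotonically decreasing and attains its maximum $W_{u_{V-1}}/W_{u_1}\leq\kappa$ at $x=0$. Thus each $Y_i(x)$ takes values in an interval of length at most $(V-1)(\kappa-1)$ \emph{uniformly in $x$}, and Hoeffding's inequality gives
\begin{equation*}
    \Pr\!\left(\left|\mathbb{E}[Y_i(x)]-\tfrac{1}{n}\textstyle\sum_i Y_i(x)\right|\geq \epsilon_0 X\right) \leq 2\exp\!\left(-\frac{2n\epsilon_0^2 X^2}{(V-1)^2(\kappa-1)^2}\right).
\end{equation*}
The right-hand side is at most $p$ exactly when $n$ meets the stated bound, and dividing through by $(W_{u_{V-1}}+x)$ then produces the claimed inequality at any chosen $x$.

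The main obstacle is upgrading this pointwise-in-$x$ Hoeffding estimate to the uniform-in-$x$ bound stated in the proposition. I would exploit that each sample path $x\mapsto Y_i(x)$ is continuous and monotonically decreasing, with common limit $V-1$ at $x\to\infty$, so the family $\{x\mapsto Y_i(x)\}$ is a monotone function class to which standard bracketing arguments apply and promote the pointwise bound to a uniform one at the cost of only constant factors in $n$, absorbable into the ceiling defining $n$. In practice, \cref{alg:cg_oracle} queries $\widetilde\Theta$ only at the $O(1)$ iterates of the enclosing root-finding subroutine, so even a crude union bound over those $x$-values would suffice for the downstream analysis.
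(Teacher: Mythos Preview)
Your core argument is exactly the paper's: rewrite $\Theta(x)-\widetilde\Theta(x)$ as a deviation of an empirical mean, bound the range of the summands by $(V-1)(\kappa-1)$ after pulling out the factor $(W_{u_{V-1}}+x)$, and apply Hoeffding. The paper arrives at the same exponent by leaving the factor inside and then dropping $(1+x/W_{u_1})^2\geq 1$ at the end, which is algebraically identical to your pre-scaling.

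Your concern about the uniform-in-$x$ conclusion is well placed: the paper's proof, like yours, establishes only the pointwise Hoeffding bound and does not supply the additional argument needed to pass to ``for all $x>0$.'' Your observation that in practice only $O(1)$ values of $x$ are queried per bisection call (so a union bound would suffice downstream) is the pragmatic resolution; the paper simply does not address this point.
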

\begin{proof}
    First, we note that
    \[
        {\sum_{u} \frac{1}{W_u + x} = \frac{1}{x} + \sum_{u\neq m} \frac{1}{W_u + x} = \frac{1}{x} + \E*{\frac{V-1}{W_{y_i}+x}}.}
    \]
    for any $i$. Hence, using Hoeffding's inequality,
    \[
        \Pr{\abs*{{\Theta}(x) - \widetilde\Theta(x)} \geq t} = \Pr{{\abs*{\sum_u\frac{1}{W_u+x} - \frac{1}{n}\sum_{i=0}^{n-1} \frac{V-1}{W_{y_i} + x} -\frac{1}{x}} \geq t}} 
        = \Pr{{\abs*{\E*{\frac{1}{n}\sum_{i=0}^{n-1}\frac{V-1}{W_{y_i}+x}} - \frac{1}{n}\sum_{i=0}^{n-1} \frac{V-1}{W_{y_i} + x}} \geq t}}
        \leq 2 \exp\left[-2 n t^2\left(\max_{u\neq m}\frac{V-1}{W_u + x}-\min_{u\neq m}\frac{V-1}{W_u + x}\right)^{-2} \right]
        = 2 \exp\left[-\frac{2 n t^2}{(V-1)^2}\left[\frac{(W_{u_{V-1}} +x)(W_{u_1} + x)}{W_{u_{V-1}} - W_{u_1}}\right]^2 \right]
    \]
   for $t \geq 0$. Taking $t = \epsilon_0 X/(W_{u_{V-1}} + x)$, we have
    \[
        \Pr{\abs*{\Theta(x) - \widetilde\Theta(x)} \geq \frac{ \epsilon_0 X}{W_{u_{V-1}} + x}  } \leq 2 \exp\left[-\frac{2n\epsilon_0^2X^2}{(V-1)^2}\left(\frac{W_{u_1} + x}{W_{u_{V-1}} - W_{u_1}}\right)^2\right]
        = 2 \exp\left[-\frac{2n\epsilon_0^2X^2}{(V-1)^2(\kappa-1)^2}\left(1+\frac{x}{W_{u_1}}\right)^2\right]
        \leq 2 \exp\left[-\frac{2 n \epsilon_0^2 X^2}{ (V-1)^2 (\kappa-1)^2} \right]. 
    \]
    Thus, taking 
    $n=\lceil{(V-1)^2 (\kappa-1)^2} \ln(2/p)/({2 \epsilon_0^2 X^2 })\rceil$
    yields the desired result. 
\end{proof}
    The observant reader may worry that while \cref{prop:hoeffding} yields the bound required by \cref{prop:root} when $x = \widetilde{X}$, the bound may be insufficiently tight when $x$ is far from $\widetilde{X}$. Nonetheless, the function $\widetilde{X}$ is monotone decreasing and $\lim_{x\rightarrow 0^+} \widetilde\Theta(x) > 0$ and $\lim_{x \rightarrow \infty} \widetilde\Theta(x) < 0$. Thus, the bisection method, which we discuss in the following subsection, can be used to determine its zero to arbitrary accuracy.
 
\subsubsection{The \textproc{FindRoot} function}
In this subsection, we use $X(s)$ to denote the zero of the function
\begin{equation*}
    \Theta(s,x) \coloneqq \sum_u\left(\frac{s}{1-s}W_u + x\right)^{-1} - 1,
\end{equation*}
defined on $x \in \mathbb{R}^+$ for a given $s\in [0,1)$,
so that $X(s)\coloneqq g_{\{m\}}(s) + 1$ corresponds to  $H(s) = (1-s)L + sW$. Similarly, we write $\widetilde{X}(s)$ for the zero of $\widetilde\Theta(s,x)$, which is defined as in \cref{eqn:theta_delta} but with $W \mapsto \frac{s}{1-s}W$. We note that if we take the number of samples prescribed by \cref{prop:hoeffding}, then $\abs*{\Theta(s,x) - \widetilde{\Theta}(s,x)} \leq \frac{\epsilon_0 X}{\frac{s}{1-s}W_{u_{V-1}} + x}$ with probability $1-p$ for any $s$. Since the vertices are sampled once at the start of the algorithm, $\widetilde{\Theta}$ is constructed using the same $\{W_{y_i}\}$ at every step; consequently, with probability $1-p$, $\widetilde\Theta$ is a good approximation of $\Theta$ for all $s$ such that $X(s) \geq x_\min$, or whenever $\Theta$ gets called. Hence, in this section we assume that $\widetilde\Theta$ is a good approximation of $\Theta$ and do not reference the probability of success.

The \textproc{FindRoot} function in \cref{alg:cg_oracle} approximates $X(s+\delta s)$ as $x_0(s+\delta s)$, using $\widetilde\Theta(s,x)$ and an estimate $x_0(s)$ of $X(s)$ from the previous step. \cref{prop:bisection} determines the number of iterations of the bisection method  required to determine $\widetilde{X}(s+\delta s)$ to within some relative error, when we know that $\widetilde{X}(s+\delta s)$ lies within a certain interval. \cref{prop:interval} demonstrates how to constrain the interval $I$ using $X(s)$. Finally, \cref{thm:oracle_construction} integrates all of these results and guarantees $\abs*{X(s) - x_0(s)} \leq c_0 X(s)$ whenever $X(s) \geq (1-c_0)x_\min$.

$\widetilde{\Theta}(s,x)$ is defined such that it is monotone decreasing over $x \in \mathbb{R}^+$ and, thus, has at most one positive root for any fixed $s$. Furthermore, $\lim_{x\rightarrow 0+}\widetilde{\Theta}(s,x) > 0$ and $\widetilde{\Theta}(s,V+\epsilon) < 0$ for any $\epsilon > 0$. Thus, the bisection method is a natural way to determine this root.

\begin{prop}\label{prop:bisection}
    Suppose that for a given $s$, $\widetilde\Theta$ has a unique zero $\widetilde{X}(s+\delta s)$ in the interval $[a x_0(s), b x_0(s)]$ for some $0<a<b$. Then, the bisection method returns an $x_0(s+\delta s)$ such that $\abs{x_0(s+\delta s) -\widetilde{X}(s+\delta s)} \leq \epsilon_1 \widetilde{X}(s+\delta s)$  using $\ceil*{\log_2[({b}/{a}-1 )/\epsilon_1]}$ evaluations of $\widetilde\Theta$.
\end{prop}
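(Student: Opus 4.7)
The plan is a straightforward halving argument with a single careful bound at the end that uses the lower endpoint of the bracketing interval to convert absolute error into relative error.

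First I would recall how the bisection method proceeds: on iteration $k$, given a bracketing interval $[a_k, b_k]$ known to contain the unique root $\widetilde{X}(s+\delta s)$, one evaluates $\widetilde\Theta(s+\delta s, (a_k+b_k)/2)$; by the sign of this value (and the monotonicity of $\widetilde\Theta$ in $x$, which was established just before the proposition) one picks the half-interval that still brackets the root. Thus after $k$ evaluations of $\widetilde\Theta$, the bracketing interval has length $(b-a)x_0(s)/2^{k}$, and any point chosen inside it — in particular the midpoint returned after the final evaluation, $x_0(s+\delta s)$ — satisfies
\begin{equation*}
\bigl\lvert x_0(s+\delta s) - \widetilde{X}(s+\delta s)\bigr\rvert \;\leq\; \frac{(b-a)x_0(s)}{2^{k}}.
\end{equation*}

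Next I would convert this absolute error into relative error using the lower endpoint of the interval. Since $\widetilde{X}(s+\delta s)\in[ax_0(s), bx_0(s)]$ by hypothesis, we have $\widetilde{X}(s+\delta s) \geq a x_0(s)$, so
\begin{equation*}
\frac{(b-a)x_0(s)}{2^{k}} \;\leq\; \epsilon_1 \widetilde{X}(s+\delta s)
\qquad\text{whenever}\qquad
2^{k} \;\geq\; \frac{b-a}{a\,\epsilon_1} \;=\; \frac{b/a-1}{\epsilon_1}.
\end{equation*}
Taking logarithms, the smallest integer $k$ for which this holds is $k=\lceil \log_2[(b/a-1)/\epsilon_1]\rceil$, which is exactly the claimed number of evaluations of $\widetilde\Theta$.

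There is no real obstacle here; the only subtlety worth flagging is that the relative-error bound is phrased with respect to $\widetilde{X}(s+\delta s)$, not with respect to $x_0(s+\delta s)$ or the midpoint of the final bracket, so it is essential to use the lower endpoint $ax_0(s)$ rather than, say, the midpoint to lower-bound $\widetilde{X}(s+\delta s)$. The monotonicity and sign behavior of $\widetilde\Theta$ (positive as $x\to 0^{+}$, negative for large $x$) guarantee that at every step of the bisection one of the two subintervals genuinely brackets the root, which is what legitimizes the halving step — this was already verified in the preceding paragraph and can simply be cited.
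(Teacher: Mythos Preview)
Your proof is correct and follows essentially the same approach as the paper's: both use the standard bisection error bound $|x_0(s+\delta s)-\widetilde{X}(s+\delta s)|\leq (b-a)x_0(s)/2^k$ after $k$ evaluations, then convert to relative error via $\widetilde{X}(s+\delta s)\geq a x_0(s)$ to obtain the stated iteration count. Your additional commentary on the monotonicity justification and the choice of lower endpoint is accurate but not required for the argument.
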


\begin{proof}
Since $\widetilde\Theta$ is monotone and has precisely one zero in the interval $[a x_0(s), b x_0(s)]$, we can apply the bisection method, which returns an estimate $x_0(s+\delta s)$ such that $\abs{x_0(s+\delta s)-\widetilde{X}(s+\delta s)} \leq x_0(s)(b-a)/2^{k}$ after $k$ steps, each of which evaluates $\widetilde\Theta$ once. Thus, if
\[2^k \geq \frac{x_0(s) (b-a)}{\epsilon_1 \widetilde{X}(s+\delta s)}, \]
the method finds an $x_0(s+\delta s)$ for which $\abs{x_0(s+\delta s) - \widetilde{X}(s+\delta s)}\leq \epsilon_1 \widetilde{X}(s+\delta s)$. Since $\widetilde{X}(s+\delta s) \geq ax_0(s)$, it follows that $k=\ceil*{\log_2[({b}/{a}-1 )/\epsilon_1]}$ calls to $\widetilde\Theta$ suffice.
\end{proof} 

\cref{prop:bisection} requires a particular interval such that $\widetilde{X}(s+\delta s) \in [a{x_0(s)},b{x_0(s)}]$ for all $s$ such that $x_0(s) \geq x_\min$. We provide explicit bounds on $a$ and $b$ in the following proposition.

\begin{prop}\label{prop:interval}
    If $\abs{x_0(s) - X(s)}\leq c_0 X(s)$, $0\leq \delta s \leq \dfrac{c_0}{4V}\left(\dfrac{x_0(s)}{1+c_0} -1\right)(1-s)$, and $\abs{\widetilde{X}(s+\delta s) - X(s+\delta s)} \leq \epsilon_0 X(s+\delta s)$, for some $c_0, \epsilon_0, \in (0,1)$, then \[ \widetilde{X}(s+\delta s) \in \left[(1-\epsilon_0)\frac{1-c_0}{1+c_0}x_0(s),(1+\epsilon_0)\frac{1+c_0}{1-c_0}x_0(s)\right]. \] 
\end{prop}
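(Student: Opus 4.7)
The plan is to chain together three perturbative estimates: the given control of $x_0(s)$ around $X(s)$, the control of $X(s+\delta s)$ relative to $X(s)$ (which will come from \cref{prop:h_change}), and the given control of $\widetilde{X}(s+\delta s)$ around $X(s+\delta s)$. Rewriting the hypothesis $|x_0(s)-X(s)|\leq c_0 X(s)$ in the form
\[
\frac{x_0(s)}{1+c_0}\;\leq\; X(s)\;\leq\;\frac{x_0(s)}{1-c_0},
\]
I would first verify that the upper bound on $\delta s$ placed in the statement is strong enough to invoke \cref{prop:h_change}. Since $g_{\{m\}}(s)=X(s)-1\geq \frac{x_0(s)}{1+c_0}-1$, the hypothesis $\delta s\leq \frac{c_0}{4V}\!\left(\frac{x_0(s)}{1+c_0}-1\right)(1-s)$ implies $\delta s\leq \frac{c_0}{4V}\,g_{\{m\}}(s)(1-s)$, which is precisely the assumption of \cref{prop:h_change}.

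Next I would invoke \cref{prop:h_change} to obtain $|g_{\{m\}}(s+\delta s)-g_{\{m\}}(s)|\leq c_0\, g_{\{m\}}(s)$, and translate this back into a bound on $X=g_{\{m\}}+1$. Writing $X(s+\delta s)=g_{\{m\}}(s+\delta s)+1$ and $X(s)=g_{\{m\}}(s)+1$, the two-sided estimate $(1-c_0)g_{\{m\}}(s)\leq g_{\{m\}}(s+\delta s)\leq (1+c_0)g_{\{m\}}(s)$ becomes, using $c_0>0$,
\[
(1-c_0)X(s)\;\leq\;X(s+\delta s)\;\leq\;(1+c_0)X(s).
\]
Combining this with the two-sided bound on $X(s)$ in terms of $x_0(s)$ above gives
\[
\frac{1-c_0}{1+c_0}\,x_0(s)\;\leq\;X(s+\delta s)\;\leq\;\frac{1+c_0}{1-c_0}\,x_0(s).
\]

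Finally, applying $|\widetilde{X}(s+\delta s)-X(s+\delta s)|\leq \epsilon_0 X(s+\delta s)$ in the form $(1-\epsilon_0)X(s+\delta s)\leq \widetilde{X}(s+\delta s)\leq (1+\epsilon_0)X(s+\delta s)$ and substituting the two bounds on $X(s+\delta s)$ yields the claimed containment of $\widetilde{X}(s+\delta s)$ in $\left[(1-\epsilon_0)\frac{1-c_0}{1+c_0}x_0(s),\,(1+\epsilon_0)\frac{1+c_0}{1-c_0}x_0(s)\right]$.

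The argument is essentially a chain of elementary inequalities rather than a step with a genuine obstacle; the one place where some care is needed is the first step, where the asymmetric factor $1+c_0$ appearing in the $\delta s$ hypothesis is exactly what is required so that $\frac{x_0(s)}{1+c_0}-1$ is a valid lower bound on $g_{\{m\}}(s)$, letting us invoke \cref{prop:h_change} cleanly. Everything else is bookkeeping.
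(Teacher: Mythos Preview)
Your proof is correct and follows essentially the same route as the paper: both use $X(s)\geq x_0(s)/(1+c_0)$ to verify the hypothesis of \cref{prop:h_change}, conclude $(1-c_0)X(s)\leq X(s+\delta s)\leq (1+c_0)X(s)$, and then chain with the $\epsilon_0$-bound on $\widetilde{X}(s+\delta s)$ and the $c_0$-bound on $x_0(s)$. Your translation from the $g_{\{m\}}$ estimate to the $X$ estimate is slightly more explicit than the paper's (which simply writes $|X(s+\delta s)-X(s)|\leq c_0 X(s)$ directly), but the content is identical.
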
 
\begin{proof}
Since $\abs{x_0(s) - X(s)} \leq c_0 X(s)$ implies that $X(s) \geq x_0(s)/(1+c_0)$, the assumption on $\delta s$ ensures that $\delta s \leq {c_0}(X(s) - 1)(1-s)/4V = c_0g_{\{m\}}(s)(1-s)/4V$. Hence, $\abs{X(s+\delta s) - X(s)} \leq c_0X(s)$ by \cref{prop:h_change}. It follows that
\[
    \widetilde{X}(s+\delta s) \geq  (1-\epsilon_0) X(s+\delta s)
    \geq (1-\epsilon_0)(1-c_0) X(s) 
    \geq (1-\epsilon_0)(1-c_0)\frac{x_0(s)}{1+c_0}
\]
and similarly,
\[
    \widetilde{X}(s+\delta s) \leq (1+\epsilon_0) X(s+\delta s) \leq (1+\epsilon_0)(1+c_0) X(s) 
    \leq
    (1+\epsilon_0)(1+c_0)\frac{x_0(s)}{1-c_0}.
\]
\end{proof} 
\begin{thm}\label{thm:oracle_construction}
Suppose that $\kappa \coloneqq W_{u_{V-1}}/W_{u_1} >1$. If for a given $s\in [0,1)$, 
\begin{enumerate}
    \item $|x_0(s) - X(s)| \leq c_0X(s)$ for some $c_0\in (0,1)$,
    \item $|\widetilde{X}(s) - X(s)| \leq \dfrac{9c_0}{10} X(s)$,
    \item $\widetilde{X}(s),x_0(s)> x_\min$ for some constant $x_\min$,
    \item $0\leq \delta s \leq \dfrac{c_0}{4V}\left(\dfrac{x_0(s)}{1+c_0} -1\right)(1-s)$, and
    \item for all $x$,
    \begin{equation} \label{eqn:condition} \abs*{\Theta(s+\delta s,x) - \widetilde{\Theta}(s+\delta s,x)} \leq \frac{9c_0}{10}\left(\frac{1-c_0}{1+c_0}\right)\frac{x_\min}{\dfrac{s}{1-s}W_{u_{V-1}} + x},
\end{equation}
\end{enumerate}
then \textproc{FindRoot}\text{($s$,$\delta s$, $x_0(s)$)} returns an $x_0(s+\delta s)$ such that $|x_0(s+\delta s) - X(s+\delta s)| \leq c_0 X(s+\delta s)$
using at most
\[
    \ceil*{\left(\dfrac{1+c_0}{1-c_0}\right)^2\frac{5(V-1)^2(\kappa-1)^2}{8x_\min^2 c_0^2}\ln\left(\frac{2}{p}\right)}\ceil*{\log_2\left[\frac{19}{c_0}\left(\left(\frac{1+c_0}{1-c_0}\right)^3-1\right)\right]},
\]
steps, where $1-p$ lower bounds the probability that $\widetilde{\Theta}$ satisfies \cref{eqn:condition}.
\end{thm}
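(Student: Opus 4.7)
The plan is to realize $x_0(s+\delta s)$ as a bisection approximation to the sample-based root $\widetilde{X}(s+\delta s)$, control that root against the true $X(s+\delta s)$ via \cref{prop:root}, and then close the gap through the triangle inequality
\begin{equation*}
|x_0(s+\delta s) - X(s+\delta s)| \leq |x_0(s+\delta s) - \widetilde{X}(s+\delta s)| + |\widetilde{X}(s+\delta s) - X(s+\delta s)|.
\end{equation*}
I would split the error budget $c_0 X(s+\delta s)$ as $\tfrac{c_0}{10}$ for the bisection term and $\tfrac{9c_0}{10}$ for the concentration term; these specific fractions are what force the numerical constants appearing in the statement.

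For the concentration term I would evaluate \cref{prop:root} at $x = \widetilde{X}(s+\delta s)$, so that $|\Theta(s+\delta s, x)| = |\Theta(s+\delta s, x) - \widetilde{\Theta}(s+\delta s, x)|$ and hypothesis~5 gives an absolute bound $|\widetilde{X}(s+\delta s) - X(s+\delta s)| \leq \tfrac{9c_0}{10}\tfrac{1-c_0}{1+c_0}x_\min$. To convert this to the relative bound $\tfrac{9c_0}{10}X(s+\delta s)$, I would chain hypotheses~2 and~3 to get $X(s) \geq x_\min/(1+9c_0/10)$, check that hypothesis~4 is tighter than the hypothesis of \cref{prop:h_change} so that $X(s+\delta s) \geq (1-c_0)X(s)$, and combine these with the elementary estimate $1+9c_0/10 \leq 1+c_0$. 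Setting $\epsilon_0 = 9c_0/10$, \cref{prop:interval} then places $\widetilde{X}(s+\delta s)$ inside the algorithm's interval $I = [\tfrac{(1-c_0)^2}{1+c_0}x_0(s),\, \tfrac{(1+c_0)^2}{1-c_0}x_0(s)]$.

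The remaining ingredients slot in mechanically. Applying \cref{prop:bisection} with $b/a = ((1+c_0)/(1-c_0))^3$ and relative target $\epsilon_1 = c_0/19$ satisfies $\epsilon_1(1+\tfrac{9c_0}{10}) + \tfrac{9c_0}{10} \leq c_0$, closing the triangle inequality and reproducing the $\lceil \log_2(\tfrac{19}{c_0}((\tfrac{1+c_0}{1-c_0})^3 - 1))\rceil$ factor in the step count. The sample size $n$ is dictated by \cref{prop:hoeffding}: rearranging its guarantee, hypothesis~5 holds with failure probability at most $p$ as soon as $n \geq \tfrac{50}{81}(\tfrac{1+c_0}{1-c_0})^2 \tfrac{(V-1)^2(\kappa-1)^2}{x_\min^2 c_0^2}\ln(2/p)$, and the inequality $5/8 > 50/81$ shows that the $n$ prescribed in the theorem is sufficient. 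Since each bisection iteration evaluates $\widetilde{\Theta}$ as an $n$-term sum, multiplying the two counts yields the stated total. The main obstacle is the second-paragraph bookkeeping: turning the $x_\min$-absolute error into a relative error on $X(s+\delta s)$ requires carefully synchronizing three of the hypotheses with \cref{prop:h_change}, and the numerical budgets $9c_0/10$ and $c_0/19$ must be chosen before the fact to make the triangle inequality tight enough to yield exactly the constants $5/8$ and $19/c_0$ appearing in the theorem.
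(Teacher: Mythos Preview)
Your proposal is correct and follows essentially the same approach as the paper: both arguments split the error budget as $\epsilon_0 = 9c_0/10$ for the concentration term (handled via \cref{prop:root} and hypothesis~5) and a small residual for bisection, then chain hypotheses~1--4 with \cref{prop:h_change} to obtain $X(s+\delta s) \geq \tfrac{1-c_0}{1+c_0}x_\min$, feed this into \cref{prop:hoeffding} to justify the sample count, and invoke \cref{prop:interval,prop:bisection} to finish. Your direct choice $\epsilon_1 = c_0/19$ and ratio $b/a = ((1+c_0)/(1-c_0))^3$ in fact match the theorem statement more cleanly than the paper's own write-up, which takes $\epsilon_1 = c_0/(10+9c_0)$ and then bounds it by $c_0/19$.
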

\begin{proof}
Assumptions 1 and 4 imply that $|X(s+\delta s) - X(s)| \leq c_0X(s)$ by \cref{prop:h_change}. Combining this with Assumptions 2 and 3, we have
\begin{equation*}
    X(s+\delta s) \geq (1-c_0)X(s) \geq (1-c_0)\frac{\widetilde{X}(s)}{1 + 9c_0/10} \geq \frac{1-c_0}{1+c_0}x_\min.
\end{equation*}
Hence, by \cref{prop:hoeffding}, constructing a $\widetilde{\Theta}$ obeying \cref{eqn:condition}
with success probability $1-p$ requires
\begin{equation} \label{eqn:n}
n = \ceil*{\left(\frac{1+c_0}{1-c_0}\right)^2\frac{50(V-1)^2(\kappa-1)^2}{81x_\min^2 c_0^2}\ln\left(\frac{2}{p}\right)}
\end{equation}
samples of $\mathcal{V}$. Taking $x = \widetilde{X}$ in \cref{eqn:condition} and \cref{prop:root} implies that the zero $\widetilde{X}(s+\delta)$ of $\widetilde{\Theta}$ satisfies \begin{equation*} \label{tildeX_bound}
\abs*{\widetilde{X}(s+\delta s) - X(s+\delta s)} \leq \frac{9c_0}{10}\left(\frac{1-c_0}{1+c_0}\right)x_\min \leq \frac{9c_0}{10}X(s+\delta s).
\end{equation*}
It then follows from \cref{prop:interval} with $\epsilon_0 = 9c_0/10$ that $\widetilde{X}(s+\delta s) \in [ax_0(s), bx_0(s)]$, where $a = (1-9c_0/10)(1-c_0)/(1+c_0) \geq (1-c_0)^2/(1+c_0)$ and $b = (1+9c_0/10)(1+c_0)/(1-c_0) \geq (1+c_0)^2/(1-c_0)$.
Because of Assumption 3, \textproc{FindRoot} proceeds to find the zero of $\widetilde{\Theta}$. Using the above interval in \cref{prop:bisection} and setting $\epsilon_1 = c_0/(10 + 9c_0) > c_0/19$, we see that applying
\begin{equation*}
    k = \left\lceil\log_2\left\{ \frac{19}{c_0}\left[\left(\frac{1+c_0}{1-c_0}\right)^2 - 1\right]\right\} \right\rceil
\end{equation*}
iterations of the bisection method returns an $x_0(s+\delta s)$ such that \begin{equation*}
    \abs*{x_0(s+\delta s)-\widetilde{X}(s+\delta s)} \leq \frac{c_0}{10 + 9c_0}\widetilde{X}(s+\delta s).
\end{equation*}
Each iteration evaluates $\widetilde\Theta(s+\delta s, x)$ once, and each evaluation can take as many operations as are required to construct $\widetilde{\Theta}$, which is determined by the number of samples, $n$. Therefore, noting that $\epsilon_0 + \epsilon_1 + \epsilon_0\epsilon_1 = c_0$, an estimate $x_0(s+\delta s)$ satisfying
\[
    \abs*{x_0(s+\delta s) - X(s+\delta s)} \leq \abs*{x_0(s+\delta s) - \widetilde{X}(s+\delta s)} + \abs*{\widetilde{X}(s+\delta s) -X(s+\delta s)} \leq \epsilon_1 \widetilde{X}(s+\delta s) + \epsilon_0 X(s+\delta s) \leq (\epsilon_0 + \epsilon_1 + \epsilon_0\epsilon_1) X(s+\delta s) = c_0 X(s+\delta s)
\]
can be obtained using at most $kn$ steps.
\end{proof} 
Since $H(0) =L$ is independent of $W$, we know that $X(0) = V$ and Line \ref{alg:cg_oracle:x0} of \cref{alg:cg_oracle} sets $x_0(0) = X(0)$. Thus, the conditions of \cref{thm:oracle_construction} are satisfied for the base case $s=0$. By induction, if the result of the theorem holds at some $s$, then Assumptions 1, 2, and 5 are automatically satisfied at the next step $s+\delta s$ by choosing the step size $\delta s$ as in Assumption 4 and taking $n$ at least as large as that in \cref{eqn:n}. Consequently, the result of \cref{thm:oracle_construction} holds for all steps until $s$ is such that Assumption 3 is not true, at which point the \textproc{\textbf{if}} statement of \textproc{FindRoot} is executed, \textproc{FindRoot} returns 0, and \textproc{GetGap} proceeds to call the \textproc{FinishSchedule} function discussed in the next subsection. 

The \textproc{\textbf{if}} statement of \textproc{FindRoot} also finds an upper bound $S_\min$  on $s_\min$ (defined as in \cref{thm:linear-envelope}), which will be useful for estimating the gap in \textproc{FinishSchedule}. To see why the value assigned to $S_\min$ in Line \ref{alg:cg_oracle:sm} of \cref{alg:cg_oracle} indeed upper bounds $s_\min$, suppose that $s$ is the first point for which Assumption 3 does not hold, so that \textproc{FindRoot} is not used to estimate $X(s+\delta s)$ and instead executes Lines \ref{alg:cg_oracle:sm} and \ref{alg:cg_oracle:sm_end}. If $s \leq s_\min$, then $h_{\{m\}} = g_{\{m\}}$ (by definition of $s_\min$) and by \cref{cor:cheeger},
\begin{equation*} \gamma(s) = (1-s)\gamma(G(s)) \leq (1-s)2g_{\{m\}}(s) = 2(1-s)(X(s) - 1) \leq 2(1-s)X(s).
\end{equation*}
Substituting this upper bound on $\gamma(s)$ into \cref{cor:smin}, it follows that
\begin{equation*}
    s_\min \leq s + \frac{4(1-s)X(s)}{\chi V}.
\end{equation*}
We know from Theorem 8 that the previous call to \textproc{FindRoot} returned an $x_0(s)$ such that $\abs{x_0(s) - X(s)} \leq c_0X(s)$.
Line \ref{alg:cg_oracle:sm} is executed either when $x(s) \leq x_\min$, in which case 
\begin{equation*}
    X(s) \leq \frac{x_0(s)}{1-c_0} \leq \frac{x_\min}{1-c_0},
\end{equation*} 
or because $\widetilde{X}(s+\delta s) < x_\min$. Suppose that the latter condition is satisfied but the former is not, i.e., that $\widetilde{X}(s+\delta s) < x_\min$ but $x(s) > x_\min$. In this case, note that we would have \begin{equation*}
    X(s+\delta s) \geq (1-c_0)X(s) \geq \frac{1-c_0}{1+c_0}x(s) \geq \frac{1-c_0}{1+c_0}x_\min,
\end{equation*}
which means that having chosen $n$ according to \cref{eqn:n}, $\abs{\widetilde{X}(s+\delta s) - X(s+\delta s)} \leq c_0X(s+\delta s)$, whence
\begin{equation*}
    X(s) \leq \frac{X(s+\delta s)}{1-c_0} \leq \frac{\widetilde{X}(s+\delta s)}{(1-c_0)^2} \leq \frac{x_\min}{(1-c_0)^2}.
\end{equation*}
Therefore, in either case, \cref{cor:smin} guarantees that 
\begin{equation*} \label{Smin}
    s_\min \leq s+\dfrac{4(1-s)x_\min}{(1-c_0)^2\chi V},
\end{equation*}
and \textproc{FindRoot} sets $S_\min$ accordingly. 
If, on the other hand, $s > s_\min$, the above inequality is trivially true, and so $S_\min$ is a valid upper bound in either scenario.

Finally, it is important to note that since $\widetilde{\Theta}$ is constructed using the same set of vertex samples at every step throughout a given run of BAA, the probability of failure $p$ from \cref{prop:hoeffding} is \emph{not} compounded at each step. After setting $n$ and randomly choosing $n$ vertex samples at the very first call BAA makes to \textproc{FindRoot}, the resultant $\widetilde\Theta$ either approximates ${\Theta}$ for all subsequent $s$ at which its root is used to approximate $X$, or never does. Thus, the entire procedure evolving from $s=0$ to $s=1$ succeeds with probability at least $1-p$.

\subsubsection{The \textproc{FinishSchedule} function}

The \textproc{FinishSchedule} function exploits the the lower bound of \cref{thm:linear-envelope} when \textproc{GetGap} is called for an $s$ for which $\widetilde{\Theta}$ cannot be used to reliably return an estimate of the gap $\gamma(s)$. Its behavior is simple: first, it underestimates the gap and returns its absolute lower bound, and then, once we know that we have definitely passed the minimum gap, it follows the linear envelope of \cref{thm:linear-envelope} until $s=1$. One could use computational basis measurements to avoid following the linear envelope and instead estimate $h_{\{m\}}$ using \cref{prop:large_side}; however whenever $\kappa = \bigO{1}$, doing so results in no asymptotic advantage. Thus, for simplicity, we follow the envelope itself.

Near the minimum gap, this procedure has the possibility of introducing additional steps into \cref{alg:BAA}, since in this intermediate region \textproc{GetGap} does not satisfy the conditions of \cref{thm:queries}. The following proposition determines the number of extra queries to \textproc{GetGap} introduced by this modified behavior.

\begin{prop}\label{prop:extra_steps_redux}\label{prop:extra_steps}
    With all quantities as in \cref{alg:cg_oracle}, if $x_\min < \chi V$, \cref{alg:BAA} requires at most 
    \[ 
        \bigO{\frac{x_\min}{\sqrt{V}}+\log_2\left(\sqrt{V}\right)}
    \] calls to \textproc{FinishSchedule}.
\end{prop}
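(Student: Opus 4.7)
The plan is to partition the runs of \textproc{FinishSchedule} into three consecutive sub-phases and bound each separately: a brief geometric \emph{decay} phase during which the first branch of Line~\ref{alg:cg_oracle:near_minimum} of \cref{alg:cg_oracle} shrinks $\gamma$ toward the floor $(1-s-\delta s)\sqrt{V-1}/\kappa^4$; a \emph{floor} phase during which this lower bound dominates until $s$ surpasses $S_\min$; and an \emph{envelope} phase during which Line~\ref{alg:cg_oracle:envelope} uses the linearly growing bound from \cref{thm:linear-envelope}. I will show each sub-phase contributes at most $\bigO{x_\min/\sqrt{V}+\log\sqrt{V}}$ calls.

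For the decay phase, the first call to \textproc{FinishSchedule} inherits from \textproc{GetGap} a previous gap estimate of order $x_\min$: by \cref{thm:oracle_construction} the root finder only surrenders once its working estimate $x_0$ has dropped to within a constant factor of $x_\min$, and the corresponding returned gap is $(1-s-\delta s)(x_0/(1+c_0)-1)=\bigO{x_\min}$. While $(1-c_0)\gamma$ still exceeds the floor, each subsequent call shrinks $\gamma$ by a factor of $(1-c_0)$, so the decay phase ends in $\bigO{\log(x_\min\kappa^4/\sqrt{V})}$ calls, which is $\bigO{\log\sqrt{V}}$ because $x_\min=\bigO{V^{2/3}}$ by Line~\ref{alg:cg_oracle:xmin}.

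In the floor phase, each call uses step size $\delta s = c_0\gamma/(4V) = \Theta(1/\sqrt{V})$. The interval on which this phase operates has length at most $S_\min-s^* \leq 4(1-s^*)x_\min/((1-c_0)^2\chi V)=\bigO{x_\min/V}$ by the assignment of $S_\min$ in Line~\ref{alg:cg_oracle:sm}, justified by \cref{cor:smin} in the discussion following \cref{thm:oracle_construction}; the hypothesis $x_\min<\chi V$ guarantees this length is less than one. Dividing by the step size yields $\bigO{x_\min/\sqrt{V}}$ calls in this phase.

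For the envelope phase, \cref{thm:linear-envelope} gives $\gamma_{i+1}\geq \gamma_i + \chi(V-2)\delta s_i/(4\kappa^5)$, and substituting $\delta s_i=c_0\gamma_i/(4V)$ produces $\gamma_{i+1}\geq \gamma_i\bigl(1+c_0\chi(V-2)/(16\kappa^5 V)\bigr)$, so both the gap and the step size grow by at least a constant factor per call. Hence $\gamma$ climbs from $\Theta(\sqrt{V}/\kappa^4)$ at $s=S_\min$ until a single step crosses the remaining $\bigO{1}$ distance to $s=1$, and this completes in $\bigO{\log\sqrt{V}}$ calls. Summing the three contributions yields the claim. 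The main obstacle I anticipate is cleanly extracting the right constants from \cref{thm:oracle_construction} and \cref{cor:smin} to control the initial gap at entry and the length of the floor region; once those are in hand, the per-phase counts reduce to elementary geometric-series arguments.
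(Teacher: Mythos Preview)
Your argument is correct. The paper's proof reaches the same bound but organizes the work differently. For the region $s\le S_{\min}$ the paper does not separate a decay and a floor sub-phase: it simply observes that every step there has size at least $\Theta(1/\sqrt{V})$ (because $\gamma$ never drops below the floor $(1-s-\delta s)\sqrt{V-1}/\kappa^4$) and that the whole interval has length $\bigO{x_{\min}/V}$, giving $\bigO{x_{\min}/\sqrt{V}}$ calls in one stroke. Your separate $\bigO{\log\sqrt V}$ bookkeeping for the decay sub-phase is not wrong, but it is redundant once one notices that decay steps are \emph{larger} than floor steps and hence already absorbed by the measure/minimum-step-size count. For the region $s>S_{\min}$ the paper does not derive the geometric recursion by hand; instead it feeds the linear envelope of \cref{thm:linear-envelope} into the sublevel-set machinery of \cref{thm:queries_redux} (with $R=1$), which immediately yields $\bigO{\log\sqrt V}$. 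Your route is more self-contained; the paper's is shorter because it reuses a theorem already on the shelf. One small slip: the hypothesis $x_{\min}<\chi V$ alone does not literally force $S_{\min}-s^*<1$, since the constant $4/(1-c_0)^2$ in Line~\ref{alg:cg_oracle:sm} intervenes; what actually makes that length $o(1)$ is the stronger fact $x_{\min}=\bigO{V^{2/3}}$ from Line~\ref{alg:cg_oracle:xmin}, which you invoke elsewhere anyway.
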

\begin{proof}
Let $I_k' = \{s > s_\min \: | \: s \in I_k\}$ and $\mathcal{S}_\min = [s_\min,S_\min]$. Now, by \cref{alg:cg_oracle}, $\mu(\mathcal{S}_\min) = (1+c_0)(1-s_\min)\frac{2x_\min}{\chi (V-1)} = \bigO{\frac{x_\min}{V}}$. Thus, the number of queries of \cref{alg:cg_oracle} to Line \ref{alg:cg_oracle:smin} is $\bigO{\frac{x_\min \lambda_\max}{V \Gamma_\min}} = \bigO{\frac{x_\min}{\sqrt{V}}}$. Now, we will apply \cref{thm:queries_redux} with $R=1$ to the remaining points in $I_k'\setminus \mathcal{S}_\min$. 

    Consider $\abs*{s-s_\min}$ such that $\Gamma(s) \leq V 2^{-k}$. Then, by \cref{thm:linear-envelope}
    \begin{dgroup*}
    \[
        V 2^{-k} \geq \frac{(V-2)\chi}{4\kappa^5}\abs*{s-s_\min}
    \]
    \[
        \frac{4\kappa^5}{(V-2)\chi}V 2^{-k} \geq \abs*{s-s_\min}
        = \mu(I_k'\setminus \mathcal{S}_\min)
    \]
    \end{dgroup*}
    Thus, $\frac{8\kappa^5}{(V-2)\chi}V 2^{-k} \geq \mu(I_k\setminus \mathcal{S}_\min) = \bigO{2^{-k}}$ and we satisfy \cref{thm:queries_redux}. Hence, \textproc{FinishSchedule} requires at most $\bigO{\frac{x_\min}{\sqrt{V}}+\log_2\left(\sqrt{V}\right)}$ queries to \textproc{GetGap}.
\end{proof}

\subsubsection{Runtime}\label{sec:runtime}
Here, we bound the runtime of \cref{alg:BAA} were it to use the gap oracle constructed in \cref{alg:cg_oracle}. 

\begin{thm}\label{thm:full_runtime}
For $s \in [0,1]$, let $P(s)$ denote the projector onto the ground state of $H(s) = (1-s)L + sW$, and suppose that $\chi$ and $\kappa \geq W_{u_{V-1}}/W_{u_1} $ are constants independent of $V$. If \cref{alg:BAA} uses \cref{alg:cg_oracle} as the gap oracle, then for any $\epsilon >0$, it returns a state $\widetilde{P}(1)$ such that
\[ \norm*{\widetilde{P}(s) - P(s)} \leq \bigO{\epsilon}\]
with probability at least $1 - e^{-1/\epsilon}$ 
in time 
\[ \bigO{\frac{1}{\epsilon}\left(\sqrt{V} + (\kappa-1)^{2/3}V^{2/3}\log(\sqrt{V})\right)}.\]
\end{thm}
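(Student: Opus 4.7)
My plan is to combine three ingredients: (i) the failure probability of the sampling-based oracle construction, (ii) the adiabatic runtime bound from \cref{thm:adiabatic_final}, and (iii) a careful per-call cost analysis for \textproc{GetGap}. I would begin by setting the failure parameter of \cref{alg:cg_oracle} to $p=e^{-1/\epsilon}$. By \cref{prop:hoeffding} and the inductive argument in \cref{thm:oracle_construction}, this guarantees that with probability at least $1-e^{-1/\epsilon}$ the single random sample set $\mathcal{W}$ drawn at the start produces $\widetilde\Theta(s,\cdot)$ that is uniformly close enough to $\Theta(s,\cdot)$ that every call to \textproc{FindRoot} returns an $x_0(s)$ satisfying $\lvert x_0(s)-X(s)\rvert \leq c_0 X(s)$ for as long as $X(s)\geq x_\min/(1-c_0)$. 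I would then condition on this event; everything that follows is deterministic.

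Next, I would verify that the output $\Gamma(s)$ of \textproc{GetGap} satisfies the hypotheses of \cref{thm:adiabatic_final}. In the \textproc{FindRoot} regime this follows from \cref{thm:cheeger}, \cref{cor:cheeger} and the translation $g_{\{m\}}(s)=X(s)-1$; in the \textproc{FinishSchedule} regime the bound is exactly \cref{thm:linear-envelope}. The same theorem gives $\gamma(s)\leq 2\kappa^4\Gamma(s)$, so $\Gamma_\min=\Theta(\gamma_\min)$ when $\kappa=O(1)$. The convex envelope $\Gamma(s)\gtrsim \lvert s-s_\min\rvert$ from \cref{thm:linear-envelope}, together with \cref{prop:s_min,prop:large_side}, implies both that $\gamma_\min=\Theta(\sqrt{V})$ and that each sublevel set $I_k$ is the union of $R=O(1)$ intervals of measure $O(2^{-k})$. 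Thus \cref{thm:adiabatic_final} applies and produces $\widetilde P(1)$ with $\lVert\widetilde P(1)-P(1)\rVert=O(\epsilon)$ in adiabatic time $O(\epsilon^{-1}\lambda_\max/\gamma_\min)=O(\epsilon^{-1}\sqrt{V})$.

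It remains to bound the oracle time. By \cref{thm:queries_redux} the total number of \textproc{GetGap} calls in the $s<s_\min$ phase is $O(\log V)$; \cref{prop:extra_steps} adds $O(x_\min/\sqrt{V}+\log V)$ \textproc{FinishSchedule} calls (each $O(1)$). Per \textproc{FindRoot} call the bisection performs $O(1)$ iterations (the constant in \cref{thm:oracle_construction} depends only on $c_0$), each evaluating $\widetilde\Theta$ in $O(n)$ time. With $p=e^{-1/\epsilon}$ the oracle uses $n=O((V-1)^2(\kappa-1)^2/(x_\min^2\epsilon))$. Substituting the choice of $x_\min$ in Line \ref{alg:cg_oracle:xmin}: when $x_\min=\Theta(\sqrt{V})$, $n=O((\kappa-1)^2/\epsilon)$, making the total \textproc{FindRoot} cost negligible relative to the adiabatic term; when $x_\min=\Theta((\kappa-1)^{2/3}V^{2/3})$, $n=O((\kappa-1)^{2/3}V^{2/3}/\epsilon)$ and the $O(\log V)$ \textproc{FindRoot} queries contribute $O(\epsilon^{-1}(\kappa-1)^{2/3}V^{2/3}\log V)$, while the $O(x_\min/\sqrt{V})=O((\kappa-1)^{2/3}V^{1/6})$ \textproc{FinishSchedule} queries are lower order. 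Adding the adiabatic contribution gives the claimed $O(\epsilon^{-1}[\sqrt{V}+(\kappa-1)^{2/3}V^{2/3}\log\sqrt{V}])$.

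The main obstacle is the bookkeeping around the single-sample-set guarantee: because $\widetilde\Theta$ is constructed once but used across $O(\log V)$ bisection roots and over a whole range of $s$, the failure probability must not compound. The inductive structure in \cref{thm:oracle_construction} — each successful step feeds an accurate $x_0(s)$ into the next step's step size from Line \ref{alg:BAA_oracle_ds} and into \cref{prop:interval}, which then certifies the bisection interval — is what makes the single union bound $p=e^{-1/\epsilon}$ suffice. A secondary subtlety is the mode switch at $s\approx s_\min$: one must check that \cref{cor:smin} actually certifies $S_\min\geq s_\min$ whenever \textproc{FindRoot} aborts, so that the linear envelope of \cref{thm:linear-envelope} is legitimately used for $s>S_\min$, and this is precisely what governs the $x_\min/\sqrt{V}$ extra-step count in \cref{prop:extra_steps}.
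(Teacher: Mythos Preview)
Your decomposition matches the paper's: fix the failure probability of the single sample set, then bound the \textproc{GenerateState} time and the \textproc{GetGap} time separately, with the latter split into \textproc{FindRoot} and \textproc{FinishSchedule} phases. The oracle-time portion is essentially right.

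There is one genuine gap in your adiabatic-time step. You invoke \cref{thm:adiabatic_final} to get a \textproc{GenerateState} time of $O(\epsilon^{-1}\lambda_\max/\gamma_\min)=O(\epsilon^{-1}\sqrt{V})$, but the sublevel-set hypothesis that theorem inherits from \cref{thm:adiabatic_prep} and \cref{thm:queries_redux} fails for the oracle's actual output. On the interval $[s_\min,S_\min]$, which has measure $\Theta(x_\min/V)$, Line~\ref{alg:cg_oracle:near_minimum} of \textproc{FinishSchedule} returns a flat $\Theta(\sqrt{V})$ value rather than anything tracking the convex envelope of \cref{thm:linear-envelope}. Hence the sublevel set $\{s:\Gamma(s)\le\lambda_\max/2^k\}$ at level $k\approx\log_2\sqrt{V}$ has measure $\Theta(x_\min/V)$, which exceeds $C/\sqrt{V}$ whenever $x_\min=\Theta((\kappa-1)^{2/3}V^{2/3})\gg\sqrt{V}$. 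The paper therefore does not apply \cref{thm:adiabatic_final} across the whole schedule; instead it charges the $O(x_\min/\sqrt{V})$ extra checkpoints in $[s_\min,S_\min]$ directly to \textproc{GenerateState}, each costing $O(\sqrt{V}/\epsilon)$, for a total adiabatic time of $O(\epsilon^{-1}x_\min)$ rather than $O(\epsilon^{-1}\sqrt{V})$. You count those checkpoints only as $O(1)$-cost oracle calls and miss their \textproc{GenerateState} contribution. The final bound survives because the \textproc{FindRoot} term $O(\epsilon^{-1}x_\min\log V)$ dominates anyway, but the intermediate claim is incorrect as written.

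A minor arithmetic slip: when $x_\min=\Theta(\sqrt{V})$ you write $n=O((\kappa-1)^2/\epsilon)$, but $n=O(V^2(\kappa-1)^2/(x_\min^2\epsilon))=O(V(\kappa-1)^2/\epsilon)$. This is still subdominant, since $x_\min=\Theta(\sqrt{V})$ forces $(\kappa-1)^2=O(V^{-1/2})$, but the stated expression is off by a factor of $V$.
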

\begin{proof}
This proof proceeds in parts. First, we determine the total runtime due to calls to \textproc{GetGap}. Then, we find the total runtime of the procedure \textproc{GenerateState}. These runtimes are additive, so we combine them to determine the full runtime of \cref{alg:BAA} with oracle \cref{alg:BAA_oracle}.

\paragraph{Runtime due to \textproc{GetGap}}
We see from Lines \ref{alg:cg_oracle:n} and \ref{alg:cg_oracle:w0} of \cref{alg:cg_oracle} that constructing a function $\widetilde\Theta$ that well approximates $\Theta$ (in the sense of \cref{prop:hoeffding}) requires $n=\bigO{1 + {V^2(\kappa-1)^2}x_\min^{-2}\ln(1/p)}$ samples, conditioned on the vertex $m$ never being sampled. Whenever $m$ is sampled, Line \ref{alg:cg_oracle:w0} resamples until a different vertex is chosen. The probability of sampling $m$ is $1/V$ (assuming a uniform distribution over the vertices), so the probability that, for an integer $k\geq 0$, Line \ref{alg:cg_oracle:w0} loops more than $k$ times when it is reached is $V^{-k}$. Since Line \ref{alg:cg_oracle:w0} is reached $n$ times, the probability that every repetition of Line \ref{alg:cg_oracle:w0} requires more than $k$ operations is $V^{-kn}$. Therefore, the probability that populating $\mathcal{W}$ in Line  \ref{alg:cg_oracle:populate} requires no more than $kn$ steps and that the resultant $\widetilde\Theta$ is a good approximation for $\widetilde\Theta$ is at least $(1-V^{-kn})(1-p) \geq 1-p-V^{-kn}$. Since $V^{-kn}$ is doubly exponentially small, we can choose the upper bound on the failure probability to be a constant. Note from Line  \ref{alg:cg_oracle:populate} that the global array $\mathcal{W}$ is populated only once in a given run of \cref{alg:BAA}. Once $\mathcal{W}$ is fixed, each call to $\widetilde\Theta$ simply executes Line \ref{alg:cg_oracle:17} and returns.

\textproc{GetGap} uses \textproc{FindRoot} at each step, starting from $s=0$, until the \textbf{if} statement of \textproc{FindRoot} is executed. It is clear from Line \ref{alg:cg_oracle:if} that at every point $s$ for which the \textbf{if} statement is not satisfied, it must be the case that $x_0(s) > x_\min$, which implies that $X(s) \geq x_0(s)/(1+c_0) \geq x_\min/(1+c_0)$. Since $x_\min \geq 2(1+c_0)\sqrt{V}$ by Line \ref{alg:cg_oracle:xmin} whereas $X(s) \leq \sqrt{V-1} + 1 \leq 2\sqrt{V}$ at $s= s_\min$ by \cref{prop:norm_bound}, it follows that $s < s_\min$ for every point $s$ at which the root-finding procedure of \textproc{FindRoot} is executed, with the possible exception of the very last point. Thus, the gap estimate returned by Line \ref{alg:cg_oracle:start_end} of \textproc{GetGap} at these points is $\Theta(\Gamma(s))$, where $\Gamma(s)$ is defined in \cref{thm:linear-envelope}. \cref{thm:queries,thm:linear-envelope} then imply that the number of calls  \cref{alg:BAA} makes to \textproc{FindRoot} is $\bigO{\log\left(\sqrt{V}\right)}$.

Combining this with the with the number of steps required for each call to \textproc{FindRoot} given by \cref{thm:oracle_construction}, it follows that $\mathcal{O}(1+V^2\log(\sqrt{V})(\kappa -1)^2x_\min^{-2}\ln(1/p))$ operations are performed before \textproc{GetGap} starts using \textproc{FinishSchedule} instead of \textproc{FindRoot} to estimate the gap.
\cref{prop:extra_steps} shows that \textproc{FinishSchedule} is called $\mathcal{O}(x_\min/\sqrt{V} + \log(\sqrt{V}))$ times, and we see from \cref{alg:cg_oracle} that each call performs a single elementary computation. Adding this to the steps required by \textproc{FindRoot}, the runtime due to \cref{alg:BAA} calling \textproc{GetGap} is
\begin{equation*}
    \bigO{\frac{V^2\log(\sqrt{V})(\kappa-1)^2}{x_\min^2}\log\left(\frac{1}{p}\right) + \frac{x_\min}{\sqrt{V}} + \log\left(\sqrt{V}\right)}.
\end{equation*}

\paragraph{Runtime due to \textproc{GenerateState}} After completing the main loop of \cref{alg:BAA}, the adiabatic state at $s=1$ is prepared using \textproc{GenerateState}($\vec\gamma$,$0$,$\epsilon$). Taking $T_i = \mathcal{O}(\epsilon^{-1}\lambda_\max/\gamma_i)$ for each iteration of the main loop of \cref{alg:adiabatic} produces a state $\norm{\widetilde{P}(s) - P(s)} \leq \bigO{\epsilon}$ by \cref{thm:adiabatic5,thm:adiabatic_final}. For each $s_i$ such that $s_i \in [s_\min,S_\min]$, the algorithm sets $\lambda_\max/\gamma_i = \bigO{\sqrt{V}}$. Thus, for each such $i$, \textproc{GenerateState} adds time $\bigO{\sqrt{V}/\epsilon}$ to the overall runtime. By \cref{prop:extra_steps_redux}, we know that while $s \in [s_\min, S_\min]$ we introduce at most an additional $\bigO{\frac{x_\min}{\sqrt{V}}}$ checkpoints. Thus, this portion of the domain introduces a total overhead of $\bigO{\frac{x_\min}{\sqrt{V}}\frac{\sqrt{V}}{\epsilon}} = \bigO{\frac{x_\min}{\epsilon}}$. For all $s \notin [s_\min, S_\min]$, \cref{thm:adiabatic_final} applies directly. Thus, we arrive at a total runtime of $\bigO{\frac{1}{\epsilon}\left(x_\min + \sqrt{V} \right)} = \bigO{\frac{x_\min}{\epsilon}}$, where we note that from Line \ref{alg:cg_oracle:xmin} that $x_\min \geq \sqrt{V}$.

Since calls to \textproc{GetGap} and \textproc{GenerateState} are additive and we assume $\epsilon < 1$, the total time of the algorithm is 
\[
    \bigO{\frac{x_\min}{\epsilon} + (\kappa-1)^2\frac{{V^{2}}\log(\sqrt{V})}{x_\min^2}\ln(1/p)}.
\]
Letting $p = \Omega(e^{-1/\epsilon})$ yields a total runtime of
\[
\bigO{\frac{1}{\epsilon}\left[x_\min + (\kappa-1)^2\frac{V^{2}}{x_\min^2}\log(\sqrt{V})\right]}.
\]
Noting that when $x_\min = \sqrt{V}$, the definition of $x_\min$ in \cref{alg:cg_oracle} gives
\begin{dgroup*}
\[
    x_\min \geq (\kappa-1)^{2/3}(V-1)^{2/3}
\]
\[
    x_\min^3 \geq (\kappa-1)^2 (V-1)^2
\]
\[
    x_\min \geq (\kappa - 1)^2 \frac{(V-1)^2}{x_\min^2}.
\]
\end{dgroup*}
Thus, since, $\Omega(\sqrt{V})= x_\min = \bigO{1+(\kappa-1)^{2/3}V^{2/3}}$, we have a total runtime of
\[
\bigO{\frac{1}{\epsilon}\left[x_\min + (\kappa-1)^2\frac{V^{2}}{x_\min^2} \log(\sqrt{V}) \right]} = \bigO{\frac{1}{\epsilon}\left(\sqrt{V} + (\kappa-1)^{2/3}V^{2/3}\log(\sqrt{V})\right)}.
\]
\end{proof} 

\subsection{Optimization}
We now present an optimization algorithm, \cref{alg:optimal}, that optimizes the runtime of \cref{alg:BAA} using multiple copies of the gap oracle \textproc{GetGap} of \cref{alg:cg_oracle} when the spectral ratio $\kappa$ of the cost function $W$ is unknown. It proceeds by guessing that $\kappa \approx 1$ and increases $\kappa$ until we are guaranteed that $\kappa$ upper bounds the spectral ratio of $W$. 

\begin{algorithm}[H]
 \caption{\label{alg:optimal}Optimize}
 \begin{algorithmic}[1]
    \Require $\lambda_\max \geq \norm{H}$, the spectral gap $\gamma(0)$ of $H(0)$, a lower bound $\gamma(1)$ on the spectral gap of $H(1)$, an oracle $\textproc{GetGap}_\kappa$ that depends upon the parameter $\kappa$, a constant $p \in (0,1)$ independent of $V$
 \Function{Optimize}{$W$}
 \State $\delta \gets \frac{3}{2}\log_V(\frac{3}{2})$   \Comment{Step size by \cref{thm:search}}
 \State Choose $N \gets \frac{\log(pV^{-1/6})}{\log((1+e^{-1})\epsilon)}$
 \For {$i \in \llbracket \frac{1}{4\delta} \rrbracket$}\label{alg:optimal:loop}
    \State $\kappa \gets 1+V^{i\delta - \frac{1}{4}}$. \Comment{Guess $\kappa(W) \leq \kappa$}
    \State $\Psi \gets \left[\Call{BAA}{\textproc{GetGap}_\kappa}\right]_{i=0}^{N}$ \Comment{Collect the results of BAA}
    \State $\Psi \gets$ \Call{Measure}{$\Psi$} \Comment{Measure in the computational basis}
    \If{$W_b = 0$ for any $b \in \Psi$} \Return $\ket{b}\bra{b}$ \EndIf
 \EndFor
 \State $\kappa \gets \lambda_\max/\gamma(1)$ \Comment{Assume worst case $\kappa$}
 \State \Return $\Call{BAA}{\textproc{GetGap}_\kappa}$ \Comment{Run BAA with worst case $\kappa$}
 \EndFunction
 \end{algorithmic}
\end{algorithm}
 
The following theorem shows that we can perform \cref{alg:optimal} and achieve the runtime bound of \cref{thm:full_runtime} for $\kappa \approx \kappa(W)$, with only logarithmic asymptotic overhead.

\begin{thm}\label{thm:search}
     For $\kappa = \frac{W_{u_{V-1}}}{W_{u_1}} \leq \frac{\lambda_\max}{\gamma(1)}$, \cref{alg:optimal} has an expected runtime of
    \[
        \bigO{\frac{\log^2(\sqrt{V})}{\epsilon\log(1/\epsilon)}\left(\sqrt{V} + (\kappa-1)^{2/3}V^{2/3}\right)}
    \]
    and returns $\ket{m}\bra{m}$ with probability greater than $1 - pV^{-1/6}$.
\end{thm}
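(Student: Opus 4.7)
The analysis decomposes into bounding the expected runtime and the failure probability, and both hinge on identifying the smallest loop index $i^\star$ at which the guess $\kappa_{i^\star}=1+V^{i^\star\delta-1/4}$ is a legitimate upper bound for the true spectral ratio $\kappa(W)$. The choice $\delta=\tfrac{3}{2}\log_V(3/2)$ makes $V^\delta=(3/2)^{3/2}$, so $\kappa_i-1$ grows geometrically in $i$ and $(\kappa_i-1)^{2/3}$ is itself a geometric progression of ratio $3/2$. This geometric structure yields $\sum_{i\leq i^\star}(\kappa_i-1)^{2/3}=O((\kappa(W)-1)^{2/3})$ (summing the tail-dominated geometric series backward) together with the overshoot bound $\kappa_{i^\star}-1\leq(3/2)^{3/2}(\kappa(W)-1)$.

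For success, I would invoke \cref{thm:full_runtime}: at iteration $i^\star$, each call to \textproc{BAA} with the valid guess $\kappa_{i^\star}\geq\kappa(W)$ produces a state within $O(\epsilon)$ of $\ket{m}\bra{m}$ with probability at least $1-e^{-1/\epsilon}$, and a computational-basis measurement then returns $m$ with per-trial probability at least $1-(1+e^{-1})\epsilon$ (using the elementary bound $e^{-1/\epsilon}\leq e^{-1}\epsilon$ valid for $\epsilon\in(0,1]$). The $N$ trials at $i^\star$ are independent, so the probability that none of them returns $m$ is at most $((1+e^{-1})\epsilon)^N$, and the definition $N=\log(pV^{-1/6})/\log((1+e^{-1})\epsilon)$ makes this exactly $pV^{-1/6}$. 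Any earlier iteration that happens to find $m$ only helps, so the total failure probability is at most $pV^{-1/6}$.

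For the runtime, I would bound the total work by $N\sum_{i=0}^{i^\star}T_i$, where \cref{thm:full_runtime} gives $T_i=O(\epsilon^{-1}(\sqrt V+(\kappa_i-1)^{2/3}V^{2/3}\log\sqrt V))$ per call. The geometric-series observation from the first paragraph collapses $\sum_i(\kappa_i-1)^{2/3}$ into $O((\kappa(W)-1)^{2/3})$, while the $\sqrt V$ contribution accumulates a factor equal to the number of iterations, which is $i^\star+1=O(1/\delta)=O(\log V)$. Combined with $N=O(\log V/\log(1/\epsilon))$ coming directly from the definition of $N$, the product yields the claimed runtime $O(\epsilon^{-1}\log^2(\sqrt V)\log^{-1}(1/\epsilon)\cdot(\sqrt V+(\kappa-1)^{2/3}V^{2/3}))$. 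For iterations $i<i^\star$, for which the guess is too small, I would separately argue that \textproc{BAA}'s runtime is dominated by its runtime at $i^\star$ (either by monotonicity of the internal parameters $x_\min$, $n$ of \cref{alg:cg_oracle} in the guessed $\kappa$, or by noting that in the worst case the per-iteration time is $T_{i^\star}$), so the telescoping above is unaffected.

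The principal obstacle will be the regime $\kappa(W)>2$, for which no $i$ in the loop gives $\kappa_i\geq\kappa(W)$ and the algorithm falls through to its terminal \textproc{BAA} call with $\kappa=\lambda_\max/\gamma(1)$. The runtime side is not hard: in this regime $(\kappa(W)-1)^{2/3}V^{2/3}=\Omega(V^{2/3})$ dominates the stated bound, absorbing both the $O(1/\delta)$ wasted loop iterations (each with $\kappa_i\leq 2$ so $T_i=O(\epsilon^{-1}V^{2/3}\log\sqrt V)$) and the single terminal call. For the success probability in this regime, the terminal call must be understood as itself repeated $N$ times and measured, so that its aggregate failure is also driven to $pV^{-1/6}$; combined with a union bound over the $O(N\log V)$ \textproc{GetGap} invocations using an appropriately rescaled internal failure parameter (contributing only logarithmically to the sample count $n$), the overall $1-pV^{-1/6}$ guarantee is preserved.
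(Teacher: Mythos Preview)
Your architecture matches the paper's: identify the first iteration $j$ (your $i^\star$) at which $\kappa_j\ge\kappa(W)$, exploit $V^\delta=(3/2)^{3/2}$ so that $(\kappa_i-1)^{2/3}$ is geometric with ratio $3/2$ and hence $\sum_{i\le j}(\kappa_i-1)^{2/3}=O((\kappa-1)^{2/3})$, and read off per-trial cost and success from \cref{thm:full_runtime}. Your overshoot bound $\kappa_{i^\star}-1\le(3/2)^{3/2}(\kappa-1)$ and your reduction of $N$ to $O(\log V/\log(1/\epsilon))$ are exactly what the paper does.

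The genuine gap is that you never compute an \emph{expected} runtime. You bound the work by $N\sum_{i\le i^\star}T_i$, which is the cost conditioned on halting by iteration $i^\star$; but with probability up to $pV^{-1/6}$ the algorithm continues, and the later iterations have geometrically growing cost (and eventually a fall-through call with $\kappa=\lambda_{\max}/\gamma(1)$, whose cost can far exceed the claimed bound). The paper handles this head-on: letting $f$ be the last iteration actually reached, it writes $\E[T_f]\le T_j+N\sum_{i>j}\Pr[f=i]\cdot O(\epsilon^{-1}V^{2(3/4+i\delta)/3}\log\sqrt V)$, bounds $\Pr[f=i]\le((1+e^{-1})\epsilon)^N=pV^{-1/6}$, and observes that the $V^{-1/6}$ factor knocks each tail term down to $O(V^{(1+2i\delta)/3})\le O(\sqrt V)$, so the entire tail collapses to $O(N\epsilon^{-1}\sqrt V\log\sqrt V)$. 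This cancellation between the $V^{-1/6}$ in the failure probability and the $V^{2/3}$ in the cost is the reason the specific choice of $N$ (with $V^{-1/6}$ rather than a constant) was made, and without it the expectation is not controlled.

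Two side remarks. Your worry that \cref{thm:full_runtime} need not bound $T_i$ when $\kappa_i<\kappa(W)$ is legitimate; the paper's proof simply applies the runtime formula for all $i\le j$ without comment, so your monotonicity-of-$x_{\min},n$ idea is a fair patch for both proofs. Your separate discussion of the $\kappa(W)>2$ fall-through is likewise something the paper leaves implicit.
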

\begin{proof}
    Assume that $\kappa = 1+V^{-x}$ and $\kappa_j = 1 + V^{j \delta - \frac{1}{4}}$. Then, there exists a first $j$ such that $x_j = \frac{1}{4} - j\delta \leq x \leq \frac{1}{4} - (j-1)\delta$
        \[
            \abs*{\kappa_j -\kappa} = \abs*{V^{j \delta - \frac{1}{4}} - V^{-x}}
            = V^{-x}\abs*{V^{x + j\delta - \frac{1}{4}}-1}
            = V^{\delta-x}\abs*{V^{x + (j-1)\delta - \frac{1}{4}}-V^{-\delta}}
            \leq (\kappa-1) V^{\delta}.
        \]
    Taking $\delta = \frac{3}{2}\log_V(\frac{3}{2})$, we have that for this $j$,
    \begin{align*}
        \kappa_j -\kappa &= (\kappa-1) \left(\frac{3}{2}\right)^{3/2}.
    \end{align*}
    Thus, for some choice of $j$, we have that $\kappa_j \geq \kappa$ and $\kappa_j-1 = \Theta(\kappa-1)$. Thus, assuming that \cref{alg:optimal} has not returned prior to $j$, we wish to evaluate
    \[
        \sum_{i=0}^{j} (\kappa_i-1)^{\frac23}V^{\frac{2}{3}} =\sum_{i=0}^{j} V^{\frac{2}{3}\left(1-\frac{1}{4}+ i\delta\right)}
        =\sqrt{V}\sum_{i=0}^{j} V^{\frac{2i}{3}\delta}
        =\sqrt{V} \frac{V^{\frac{2\delta}{3}(j+1)}-1}{V^{\frac{2\delta}{3}}-1}
        \leq 2\sqrt{V}V^{\frac{2j \delta}{3}}
        = 2 V^{\frac{2}{3}} V^{\frac{2}{3}\left(j\delta -\frac{1}{4}\right)}
        = 2 V^{\frac{2}{3}}\left(\kappa_j - 1\right)^{\frac23}
        \leq 2(\kappa-1)^{\frac23} V^{\frac{2}{3}}\left(1+ V^{2\delta}\right)
        = \frac{27}{4}(\kappa-1)^{\frac23} V^{\frac{2}{3}}.
    \]
    Thus, by \cref{thm:full_runtime} we can reach the end of the $j$\textsuperscript{th} iteration of \cref{alg:optimal} Step \ref{alg:optimal:loop} in total time 
    \[
        \bigO{\frac{N}{\epsilon}\left(\sqrt{V} + (\kappa-1)^{2/3}V^{2/3}\log(\sqrt{V})\right)}.
    \] On the $j$\textsuperscript{th} iteration, we are guaranteed to satisfy the conditions of \cref{thm:full_runtime} and thus we produce $N$ states such that $\widetilde{P} = \ket{m}\bra{m} + \bigO{\epsilon}$, each with probability at least $1-e^{-1/\epsilon}$. After measuring each in the computational basis $N$ times, we do not return $\ket{m}\bra{m}$ with probability at most 
    \[
        p^N = \left(1-\left((1-\epsilon)(1-e^{-1/\epsilon})\right)\right)^N
        \leq \left((1+e^{-1})\epsilon\right)^N.
    \]

    Now, let $f$ be the the greatest value of $i$ reached by \cref{alg:optimal} Step \ref{alg:optimal:loop}. Let $NT_i$ be the total time taken through the end of the $i$\textsuperscript{th} iteration of \cref{alg:optimal} Step \ref{alg:optimal:loop}. Then, 
\[
    \E{T_f} = \sum_i T_i {\Pr{f=i}}
    \leq T_j + N\sum_{i > j}{\Pr{f=i}}\bigO{\frac{V^{\frac{2}{3}\left(1-\frac{1}{4}+ i\delta\right)}}{\epsilon}\log(\sqrt{V})}
    \leq T_j + N\sum_{i >j}\bigO{\frac{V^{\frac{2}{3}\left(1-\frac{1}{4}+ i\delta\right)}}{\epsilon}\log(\sqrt{V})} \left((1+e^{-1})\epsilon\right)^{\log_{(1+e^{-1})\epsilon}\left(pV^{-1/6}\right)}
    = T_j + N\sum_{i > j} \bigO{\frac{V^{\frac{2}{3}\left(1-\frac{1}{4}+ i\delta\right)}}{\epsilon}\log(\sqrt{V})}\left(\frac{p}{V^{1/6}}\right)
    = T_j +  \bigO{N\sum_{i > j}\frac{V^{\frac{1}{3}(1+2i\delta)}}{\epsilon}\log(\sqrt{V})}
    = T_j + \bigO{N\frac{\sqrt{V}}{\epsilon}\log(\sqrt{V})}
    = \bigO{\frac{N}{\epsilon}\left(\sqrt{V} + (\kappa-1)^{2/3}V^{2/3} \log(\sqrt{V}) \right)} + \bigO{N\frac{\sqrt{V}}{\epsilon}\log(\sqrt{V})}
    = \bigO{\frac{\log^2(\sqrt{V})}{\epsilon\log(1/\epsilon)}\left(\sqrt{V} + (\kappa-1)^{2/3}V^{2/3}\right)}.
\]

\end{proof} 
\section{Discussion and future work}\label{sec:discussion}

\subsection{Improving the Cheeger inequality}
    The bound derived in \cref{thm:cheeger} is almost certainly loose by a factor of approximately $\kappa^2$, however a tighter version of this theorem remains elusive. Since improving the inequality provides no asymptotic advantage, we did not attempt to achieve a nearly optimal inequality in the present work. Nonetheless, tight inequalities are always of mathematical interest and deriving them is well-motivated, especially if one were to look to apply this algorithm in practice.
    
    The most likely approach to successfully achieving an inequality that scales like
    \[
        {2h_{\{m\}} \geq \gamma \gtrsim \frac{h_{\{m\}}}{\kappa}    }
    \]
    would be to consider the ratios of the components of $\phi_1$. That is, if $W_m < W_{u_1} \leq \dots W_{u_{V-1}}$, we presently cannot get a useful bound on 
    \[
        {\frac{\phi_1(u_{V-1})}{\phi_1(u_1)} = \frac{V+W_{u_1}-\lambda_1}{V+W_{u_{V-1}}-\lambda_1} }
    \]
    as the numerator can tend towards $0$ for large enough $W$. Nonetheless, if we consider $\kappa_i = \frac{W_{u_{V-1}}}{W_{u_{i}}}$
    \[
        {\frac{\phi_1(u_{V-1})}{\phi_1(u_{j>1})} = \frac{V+W_{u_j}-\lambda_1}{V+W_{u_{V-1}}-\lambda_1} \geq \frac{\kappa_j-1}{\kappa_1-1}}. 
    \]
    Thus, it would likely be possible to derive an inequality that keeps \cref{thm:cheeger} tighter, incorporating more information about $\phi_1$ in a reasonable way. This seems to require an appropriate modulus and we leave this as a separate technical project.

\subsection{Shaving off extra factors of \texorpdfstring{$\kappa$}{κ}}
    Note that we reach a state such that $\phi(m)^2 \geq 1/2$ before our Cheeger inequality \cref{thm:cheeger} becomes weak by factors of $\kappa$. Thus, if we are only looking to perform optimization and factors of $\kappa^5$ look ominous, we could stop our algorithm short and incur no factors of $\kappa$ other than insofar as they improve \cref{prop:hoeffding}. That is, we could prepare a state such that $\norm*{\ket{m} - \ket\phi} \leq \frac{1}{2}+ \bigO{\epsilon}$. By repeating this procedure $\log_2(1/\epsilon)$ times, we would then be able to return $\ket{m}$ with probability $1-\bigO{\epsilon}$. Everything else would remain unchanged.

\subsection{Multiple marked states}
    We should consider the case that there is more than one state $M=\{m_i\}_{i=0}^{k-1}$ such that $W_{m_i} = 0$. For this, \cref{thm:cheeger} would need to be modified to handle the degeneracy, which could be done by first projecting into the subspace that identifies all marked states as a single state. 
    
    Since in the restricted subspace, if we let $\psi_i(M) = \sum_{m\in M}\phi_i(m)$ and $\psi_i(v) = \phi_i(v)$, for any eigenvector $\phi$ corresponding to eigenvalue $\lambda$, we have
    \[
        (V+W_u -\lambda)\phi(u) = \sum_{v\neq u}\left(\phi(v)-\phi(u)\right)
    \]
    we have that
    \[
        (V-\lambda)\psi(M) = \sum_{m\in M}\sum_{v}\left(\phi(v)-\phi(m)\right) 
        = \sum_{m\in M}\sum_{v\notin M}\left(\phi(v) - \phi(m)\right) + \sum_{m\in M}\sum_{\substack{v \in M}}\left(\phi(v) - \phi(m)\right)
        = \sum_{v\notin M}\left(k\phi(v) - \psi(M)\right).
    \]
    Then,
    \[
        (\lambda_{k}-\lambda_0)\psi_0(M)\psi_{k}(M) = k\sum_{v\notin M}\left(\phi_0(v)\psi_{k}(M) - \phi_{k}(v)\psi_0(M)\right).
    \]
    Comparing this to \cref{thm:cheeger}, one could clearly derive an appropriate Cheeger inequality that applies to the relevant subspace, where we are interested in the spectral gap $\lambda_{k} - \lambda_0$. One would need to take care, however, since $\psi_0(M)^2 \neq \sum_{m\in M}\phi_0(m)^2$. Thus, the improvement would not be the simple factor of $k$ that we see above. The tighter bound would ultimately result from the reduced number of vertices under consideration in the equivalent of \cref{prop:large_side}, using the fact that we would now be considering the point $s_\min$ as the point at which $\sum_{m \in M} \phi(m)^2 = \frac{1}{2}$.

    One could then simply proceed by calling \cref{alg:BAA} assuming that there are $V/2^i$ marked states for $i=0,1,2,\dots,V-1$ until some state $u$ such that $W_u=0$ is returned. A similar approach was used by two of the authors in \cite{Wan} in a completely different context, but the same technique should apply here and achieve optimal scaling. However, combining this with \cref{alg:optimal} would require carefully balancing parameters. It is less clear whether the fixed point methods of \cite{dalzell2017fixed} would work with BAA, though their adaptation would certainly be interesting and the claim that similar methods work in \cite{Boixo2009} suggests they might be promising. 
    
\subsection{Designing optimization hardware}\label{subsec:cuts}
    Our results suggest that, for near-term quantum hardware, one should look to create driving Hamiltonians $L$ such that the Cheeger constant, or something similar to it, is easy to evaluate for arbitrary cost functions. It is worth noting that, because the Cheeger constant both upper and lower bounds the spectral gap, any method that is capable of optimizing an annealing schedule would be equivalent to our procedure.    
    
    \cref{alg:cg_oracle} demonstrates that at least under some circumstances and for particular graphs, this counterintutively appears possible. Importantly, although the spectral gap may in general be (very) hard to estimate, the points at which it becomes difficult can possibly be ignored, much like in \cref{alg:cg_oracle}. Furthermore, the Cheeger constant itself has a more physical meaning: the numerator in the Cheeger constant itself is a measurement of the energy of some \textit{cut operator}
    \[
        C_S = \frac{1}{2}\sum_{\substack{i \in S \\ j \notin S}}\left(\ket{i}\bra{j}+\ket{j}\bra{i} \right).
    \]
    That is,
    \[
        g_S =\frac{\bra{\phi}C_S \ket{\phi}}{\sum_{u \in S} \phi(u)^2}
    \]
    and
    \[
        h_S = \max_{S' \in \{S,\overline{S}\}}\frac{\bra{\phi}C_S \ket{\phi}}{\sum_{u \in S} \phi(u)^2}.
    \]
    If we prepare $\ket\phi$ to high enough accuracy, then we should expect that both the numerator and denominator can be measured. That is, the denominator is just the probability that $\phi$ measured in the computational basis will be found in $S$ and the numerator is the energy of the particular cut which, at least in our complete graph case, can be measured in the $X$-basis. Importantly, these bases remain fixed throughout the interpolation and so we need no knowledge of instantaneous eigenbases to determine each $g_S$. Furthermore, for many graphs, we can probably find an appropriate set of cuts $\{C_S\}$ such that each $S$ is of different size and we are able to permute $W$ to measure sets corresponding to different elements of the cost function. Thus, with only a fixed number of physical cuts, we can create a much larger number of computational cuts. The idea is sketched in \cref{alg:BAA_oracle}.
    
\begin{algorithm}
 \caption{\label{alg:BAA_oracle}General oracle}
 \begin{algorithmic}[1]
    \Require A set of cuts $\mathcal{S}$, the cost function $W$, a number $N$, a known lower bound on the gap $\gamma_\min$
 \item[]
 \Function{GetGap}{$s,\delta s,\gamma,c_0$}
    \If{$h$ is efficient to compute}
        $h \gets (1+c_0)\gamma$ \Comment{Upper bound next $h$}
        \For{$S \subset \mathcal{S}$}
                \State $P_S \gets \sum_{i \in S} \ket{i}\bra{i}$ 
                \State $P_i \gets \left[\Call{GenerateState}{s,\delta s, \gamma,     c_0}\right]_{i=0}^{2N-1}$ \Comment{Generate some projectors at $s+\delta s$}
                \State $E_{C_S} \gets \frac{1}{N}\sum_{i=0}^{N-1}\mathrm{tr}\left(C_S P_i\right)$ \Comment{Measure the energy of the cut}
                \State $p_S \gets \frac{1}{N}\sum_{i=N-1}^{2N-1}\mathrm{tr}\left(P_S P_i\right)$ \Comment{Measure the probability of being in the cut}
                \State $h \gets \min\left(h,\frac{E_{C_S}}{\min(p_S,1-p_S)}\right)$
        \EndFor
    \State        \Return $h/2$ \Comment{Return a lower bound on the gap to within a constant factor}
    \EndIf
 \State \Return $\gamma_\min$ \Comment{Return a known lower bound on the gap}
 \EndFunction    
 \end{algorithmic}
\end{algorithm}
     
    In fact, this oracle is the reason BAA is named such; after sampling a state $\delta s$ away from what we presently know how to prepare, the oracle requires that we start our adiabatic procedure over entirely. Hence, we creep along slowly, only ever advancing by $\delta s$ in a given step. 
    
    For general graphs, using adiabatic processes as subroutines of whatever algorithm takes the place of our oracle may not just be useful, but is probably necessary. (In fact, that we did not need to use it in the present work came as a surprise to the authors.) If we satisfy something like \cref{thm:queries_redux}, then we are guaranteed that we need to repeat the adiabatic procedure at most $\bigO{\log(1/\gamma_\min)}$ times and, thus, the need for restarts should not be concerning.
    
\subsection{Improving the oracle}
    Although sufficient, \cref{alg:cg_oracle} can be greatly improved, especially after reaching the point labeled $s_\min$. One method for doing so would be to sample $\phi(m)^2$ by letting \cref{alg:cg_oracle} call \cref{alg:adiabatic} as discussed above, performing computational basis measurements on the result, and then exploiting \cref{prop:norm_bound}. \cref{prop:norm_bound} also guarantees that $\phi(m)^2> \epsilon$ for some constant $\epsilon<1/2$ even when $s < s_\min$. Thus, in some places, one might also be able to improve the provided bounds by multiple factors of $\kappa$, by switching to a computational basis measurement procedure \emph{prior} to $s=s_\min$. In the present context, none of these changes would achieve better asymptotic scaling and therefore they are not pursued. Nonetheless, for practical applications, optimizing constants and appropriately balancing some of the procedures presented might be important. 
    
\subsection{Removing the restriction \texorpdfstring{$W_m = 0$}{Wm = 0}}
    Although the existence of this condition allows us to solve the decision problem of whether $0 \in W$,\footnote{Just assume that $W_m =  0$, apply \cref{alg:BAA}, and if you get $0$ at the end of the day respond ``yes''.} it is not sufficient to arbitrarily optimize a set. For this, we may need to introduce cut operators like those discussed in \cref{subsec:cuts}, improve the classical parts of our oracle, or introduce heuristics to guess at the minimum $W_m$. Presently, if we think of $W:V \longrightarrow [0,1]$, our algorithm is actually flexible enough to find $m$, provided that we know that there exists a $\widetilde{W}$ such that $\abs*{\widetilde{W} - W_m} \leq \epsilon V^{-1/3}$. Of course, this is exponentially small on the scale of the problem and we would prefer to generalize beyond the current oracle for full optimization. Whether this can be done while remaining restricted to computational basis measurements is presently unclear.
    
\subsection{The role of other paths}\label{sec:other_paths}
    Interpolated Hamiltonians of the form
    \[
        H_i(s) = (1-s)H(0) + s(1-s)\widetilde{H}_i + sH(1)
    \]
    where $\widetilde{H}_0 = 0$ and $\widetilde{H}_i$ is some arbitrary time-independent Hamiltonian are often used in the hope that $\gamma_\min\left(H_i\right) \geq \gamma_\min\left(H_0\right)$, speeding up adiabatic processes. However, numerical results suggest that doing so usually also increases the width of the corresponding gap minimum. In the language of \cref{thm:queries_redux}, if $I_k^{(i)} = \{s : \gamma(H_i(s)) \leq \frac{\norm{H_i}}{2^k}\}$ and $\mu(I_k) \leq C 2^{-k}$, then we know that performing BAA on $H_0$ requires at most $\bigO{C\log_2\left(\frac{\norm{H}}{\gamma_\min(H_0)}\right)}$ queries to \textproc{GetGap}. Now, if replacing $H_0$ with $H_i(s)$ comes at cost that $\mu(I_k^{(i)}) \geq \frac{\gamma_\min(H_i)}{\gamma_\min(H_0)}\mu(I_k^{(0)})$, we know that whatever performance we might gain by being able to vary our Hamiltonian faster, we lose to an increased number of discretization points (in this case, the number of queries). 
    
    This is not to say that when the conjecture above holds true, intermediate Hamiltonians will not serve a purpose, they still might. However, the role of $H_i \neq H_0$ seems to be that these intermediate Hamiltonians may make queries to \textproc{GetGap} easier. That is, if we are to consider $H_i \neq H_0$, we should probably look to determine the class $H_i$ such that we can guarantee rapid returns from \textproc{GetGap}. It is possible that such an intermediate interpolation would even enhance the abilities of our oracle \cref{alg:cg_oracle}. Additionally, using \cref{alg:cg_oracle}, we might be able to use these intermediate Hamiltonians only when we know that queries are becoming more challenging, as is the case for $h\sim \sqrt{V}$ in \cref{alg:cg_oracle}.

\subsection{The width of the minimum gap}
    We believe that the mathematical project of studying the width of the minimum gap in an interpolated Hamiltonian (or the measures $\mu(I_k)$), especially in the context of the discussion of \cref{sec:other_paths}, would be quite interesting. Of course, this would be a question of pure analysis, but nonetheless well-motivated by BAA and \cref{thm:queries_redux}. Understanding this relationship would, in principle, be a key component of designing optimized schedules and also determining whether algorithmically easier and faster paths exist. Furthermore, such a bound would have complexity-theoretic implications. That is, if for interpolated Hamiltonians the width of the minimal gap can indeed always be bounded as in \cref{thm:queries_redux}, then we would know that for any driving Hamiltonian, either determining the gap is always hard for hard problems or there would be a tradeoff in the difficulty of determining the size gap and the size of the gap itself. Otherwise, \cref{alg:BAA} solves hard problems with bounded probability.
    
    To the knowledge of the authors, no appropriate, general inequalities yet exist.
    
\subsection{Classical algorithms}
    There is no reason that \textproc{GetGap} must correspond to \cref{alg:cg_oracle} or, for that matter, that the driving Hamiltonian should be restricted to one considered in this paper. It seems reasonable that one might use BAA as a tool for solving classical computation problems. Indeed, the authors of \cite{Boixo2009} suggest as much for their strategy. Since we know that simulated quantum annealing can potentially be faster than, say, simulated annealing, simulated quantum annealing using BAA might be faster still \cite{Crosson2016}. Furthermore, the strategy of BAA may be able to expedite eigensolvers by replacing heuristic guesses at eigenvalues with genuine approximations, expedite Monte Carlo methods for studying phase transitions of Hamiltonian systems by automatically focusing on relevant regions of parameter space, or be a useful component in other classical randomized approximation schemes. The overall approach need not be restricted to spectral information either, any information that can be rigorously bounded perturbatively can be utilized to provide step sizes in similar variational approachen.

\section{Acknowledgements}
The authors would like to thank Aram Harrow, Adrian Lupascu, Antonio Martinez, and Jon Yard for useful discussions. This work was completed in part while AL and KW were affiliated with, and BL was visiting, Perimeter Institute. \PIRA

\appendix

\section{Monotonicity of \texorpdfstring{$\phi(m)$}{φ(m)}}
\begin{prop}\label{prop:monotone} Let $\lambda(s)$ and $\phi>0$ denote the ground-state eigenvalue and normalised eigenvector of $G(s) = L + \dfrac{s}{1-s}W$ for $s \in [0,1)$, where $L$ is the combinatorial graph Laplacian of the complete graph on $V$ vertices and $W$ is a diagonal matrix. If $W_m =0$ is the unique smallest eigenvalue of $W$, then $d\phi(m)/ds > 0$, where $\phi(m)$ is the component of $\phi$ corresponding to the vertex $m$.
\end{prop}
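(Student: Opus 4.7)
The plan is to exploit the explicit form of the ground state afforded by Proposition \ref{prop:simple_facts}. First, introduce the change of variables $t \coloneqq s/(1-s)$, which is strictly increasing on $[0,1)$, so it suffices to prove $d\phi(m)/dt > 0$. The eigenvalue equation $G(s)\phi = \lambda_0\phi$ combined with $W_m = 0$ yields the closed form $\phi(u) = \|\phi\|_1/(X + tW_u)$, where $X \coloneqq V - \lambda_0(G(s))$, and in particular $\phi(m) = \|\phi\|_1/X$. Imposing $\|\phi\|_2 = 1$ then gives
\[
\phi(m)^{-2} \;=\; X^2\sum_u (X+tW_u)^{-2} \;=\; 1 + \sum_{u \neq m} a_u^2,
\]
where $a_u \coloneqq X/(X+tW_u) \in (0,1)$. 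Hence it is enough to show that each $a_u$ is strictly decreasing in $t$, since $a_u > 0$ then implies $\phi(m)^{-2}$ decreases, forcing $\phi(m) > 0$ to increase.

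Next, I would apply implicit differentiation to the identity $1/X + \sum_{u\neq m}(X+tW_u)^{-1} = 1$ (part 2 of Proposition \ref{prop:simple_facts}), which defines $X$ as a function of $t$. Solving yields
\[
\frac{dX}{dt} \;=\; -\,\frac{\sum_{u\neq m} W_u/(X+tW_u)^{2}}{1/X^2 + \sum_{u\neq m}1/(X+tW_u)^{2}},
\]
which is strictly negative since $W_m = 0$ is the \emph{unique} smallest eigenvalue, so $W_u > 0$ for all $u \neq m$. A direct differentiation of $a_u$ then gives
\[
\frac{da_u}{dt} \;=\; \frac{W_u\left(t\,\tfrac{dX}{dt} - X\right)}{(X+tW_u)^{2}},
\]
which is strictly negative for every $u \neq m$, because $W_u > 0$, $X > 0$, $t \geq 0$, and $dX/dt < 0$. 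This yields the desired monotonicity.

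The main obstacle is essentially bookkeeping: one must invoke uniqueness of $W_m = 0$ (and nonnegativity of $W$) to secure strict signs, and verify that the boundary case $s = 0$ (where $t = 0$, $X = V$, and $\phi$ is uniform) poses no degeneracy in the formulas above. No deeper spectral input beyond the explicit eigenvector formula is required; once the reduction to monotonicity of $a_u$ is made, the argument is purely algebraic.
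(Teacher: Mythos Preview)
Your proof is correct and follows essentially the same strategy as the paper: both arguments show that each ratio $\phi(u)/\phi(m)=X/(X+tW_u)$ is strictly decreasing in $s$ and then use the normalization $\sum_u\phi(u)^2=1$ to conclude. The only differences are in execution---you obtain $dX/dt$ by implicitly differentiating the identity of Proposition~\ref{prop:simple_facts} while the paper computes the equivalent $d\lambda_0/ds$ via Hellmann--Feynman, and your final deduction through $\phi(m)^{-2}=1+\sum_{u\neq m}a_u^2$ is slightly more direct than the paper's intermediate comparison of $d\phi(m)^2/ds$ with $d\phi(u)^2/ds$.
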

\begin{proof}
By \cref{eqn:eigenvectors},
\begin{equation*}
    \frac{\phi(m)}{\phi(u)} = \frac{V+W_u(s) -\lambda(s)}{V-\lambda(s)} = 1 + \frac{W_u(s)}{V-\lambda(s)}
\end{equation*}
    for any $u\neq m$, where $W_u(s) = sW_u/(1-s)$. Taking the derivative,
\begin{align*}
    \frac{d}{ds}\left(\frac{\phi(m)}{\phi(u)}\right) &= \frac{d}{ds}\left(\frac{W_u(s)}{V-\lambda(s)}\right)\\
        &=\frac{1}{V-\lambda(s)}\left(\frac{dW_u(s)}{ds} + \frac{W_u(s)}{V-\lambda(s)} \frac{d \lambda(s)}{ds} \right) \\
        &=\frac{1}{V-\lambda(s)}\left[\frac{d}{ds}\left(\frac{s}{1-s}W_u\right) + \frac{W_u(s)}{V-\lambda(s)} \bra\phi \frac{d}{ds}\left(L + \frac{s}{1-s}W\right)\ket\phi \right] \\
        &= \frac{1}{V-\lambda(s)}\frac{1}{(1-s)^2}\left[W_u + \frac{W_u(s)}{V-\lambda(s)}\frac{1}{(1-s)^2} \bra\phi W \ket\phi \right].
    \end{align*}
$\lambda(s) \leq \bra{m}\left(L + \frac{s}{1-s}W\right)\ket{m} = \bra{m}L \ket{m} = V-1$, and $W_{u\neq m} >0$ by assumption, so $W_{u}(s), \bra{\phi}W\ket{\phi} > 0$. Thus,  $\dfrac{d}{ds}\left(\dfrac{\phi(m)}{\phi(u)}\right)>0$ for all $u\neq m$. This together with the fact that $\phi > 0$ implies that
\begin{equation*}
        \frac{d\phi(m)^2}{ds} > \frac{d\phi(u)^2}{ds}
\end{equation*}
for all $u \neq m$. 
Then, using the normalization condition $1=\sum_{u}\phi(u)^2$, we have
\begin{equation*}
    0 = \sum_{u}\frac{d\phi(u)^2}{ds} < \sum_{u} \frac{d\phi(m)^2}{ds} = V\frac{d\phi(m)^2}{ds}
\end{equation*}
or $d\phi(m)^2/ds >0$.
\end{proof}
 
 \printbibliography
\end{document}